\pgfplotsset{compat=1.17}
\newtheorem{theorem}{Theorem}
\newtheorem{proposition}{Proposition}
\newtheorem{assumption}{Assumption}
\newcommand{\E}{\mathbb{E}}
\newcommand{\Var}{\text{Var}}
\title{\textbf{Exploration Is Not What It Seeks:} \\ 
\Large Catalytic Exploration under Status Quo Uncertainty\thanks{This is the extended working paper version containing additional proofs, structural estimation details, and theoretical extensions.}}
\author{
    Zeyu He \\
    Tsinghua University \\
    University of California, Berkeley
}
\date{\today}
\begin{document}

\maketitle

\begin{abstract}
We identify a distinct motive for search, termed catalytic exploration, where agents rationally explore alternatives they expect to reject to resolve uncertainty about the status quo. By decomposing option value into switching and catalytic components, we show that high exploration rates can coexist with bounded switching probabilities. This mechanism generates three insights. First, strong catalytic motives cause separating equilibria to collapse in signaling games as receivers explore indiscriminately. Second, agents optimally acquire more precise information about the status quo than about alternatives, reversing rational inattention intuitions. Third, catalytic exploration creates negative externalities: information technology improvements can paradoxically reduce welfare by encouraging excessive benchmarking.

\vspace{0.3cm}
\noindent \textbf{Keywords:} Catalytic exploration, Search theory, Status quo uncertainty, Endogenous information, Signaling, Reflexive choice

\noindent \textbf{JEL Classification:} D81, D82, D83, D84, L15
\end{abstract}
\newpage

\section{Introduction}

Economic agents routinely engage in costly exploration of alternatives they ultimately reject. Consider, for instance, an individual in a committed relationship who harbors lingering doubts not only about the partner's true character but also about his or her own subjective compatibility with the relationship. This individual may choose to date others, even those appearing ex-ante inferior, not with the intent to switch, but to establish a comparative benchmark. Through these interactions, the agent resolves the uncertainty surrounding the status quo, often leading to a firmer choice of the original partner rather than a breakup. Similarly, job seekers interview for positions below their reservation wage, consumers research products they never purchase, and firms investigate technologies they will not adopt. 

Standard search models, from \cite{stigler1961economics} to contemporary formulations, cannot rationalize this behavior: if an alternative is known to be inferior ex ante, exploration appears to represent pure deadweight loss from the perspective of standard adoption motives.

We propose that exploration serves a dual informational function. Beyond learning about alternatives—the traditional channel—exploration provides a comparative benchmark that resolves uncertainty about one's current situation. This indirect channel, which we term catalytic value, can rationalize exploration of ex-ante inferior options.

Consider an agent choosing between a status quo with payoff $\mu_0 + \epsilon$ and a challenger with quality $\theta \sim F(\cdot)$ having mean $\mu_1 < \mu_0$. The parameter $\epsilon$ represents the unknown match value between the agent's preferences and the status quo option—a subjective valuation component that becomes observable only through comparison. We do not assume restrictive information constraints but rather build on the consensus that valuation is inherently reference-dependent. In this context, the outside option naturally serves as the reference point required to crystallize the value of the status quo. Without exploration, the agent chooses based on expected values, obtaining $\max\{\mu_0 + \E[\epsilon], \mu_1\} = \mu_0$ (since $\E[\epsilon]=0$ and $\mu_1 < \mu_0$). With exploration, she observes both $\epsilon$ and $\theta$, obtaining $\E[\max\{\mu_0 + \epsilon, \theta\}]$.

The total option value from exploration naturally decomposes into two components. The switching value captures the traditional gain from potentially finding a superior alternative: the difference between the expected maximum when $\theta$ is random versus when it equals its mean $\mu_1$. The catalytic value captures the gain from resolving status quo uncertainty through comparison: the difference between the expected maximum when $\epsilon$ is observable versus when only its mean is known. Crucially, this catalytic value is strictly positive whenever $\sigma_\epsilon > 0$, even when $\mu_1 < \mu_0$ ensures negligible adoption probability.

Under normality, we derive an exact formula for catalytic value as a function of status quo uncertainty $\sigma_\epsilon$ and the quality gap $\Delta = \mu_0 - \mu_1$. In the high-uncertainty limit, catalytic value grows linearly with $\sigma_\epsilon$. This growth reflects the fundamental convexity of option value: since the agent can switch if the status quo proves detrimental, higher uncertainty expands the upside potential without exposing the agent to commensurate downside risk. 

The catalytic mechanism fundamentally alters equilibrium predictions in three domains. First, in signaling games, strong catalytic motives destroy the single-crossing property that enables separation. When receivers explore primarily for catalytic reasons, high-quality senders cannot distinguish themselves through costly signals because low types are explored regardless. This provides a new channel for market unraveling distinct from adverse selection.

Second, with endogenous information acquisition, catalytic motives reverse the standard logic of attention allocation. When exploration primarily serves to evaluate the status quo, agents optimally acquire more precise signals about their current situation than about alternatives. This contrasts sharply with rational inattention models where cognitive resources flow toward the most uncertain or highest-variance options.

Third, catalytic exploration generates a previously unrecognized externality. When individuals use others as informational benchmarks without compensation, they impose costs that standard Pigouvian taxes on search activity can address. The optimal policy involves taxing exploration while subsidizing successful matches, with the tax rate increasing in the catalytic-to-switching value ratio.

\subsection{Related Literature}

Our theory contributes to three distinct strands of literature by identifying a reflexive motive for information acquisition that operates through comparative evaluation rather than direct learning.

\subsubsection{Search and Experimentation}

The canonical search paradigm relies on an instrumental logic: agents explore to discover options that stochastically dominate their status quo. The seminal contribution of \cite{weitzman1979optimal} characterizes optimal sequential search through a reservation value property, where exploration continues until the best discovered option exceeds the value of further search. This framework has been extended to ordered search \citep{armstrong2017ordered}, simultaneous search \citep{chade2006simultaneous}, and consumer search with price competition \citep{choi2018consumer}. Recent theoretical advances include \cite{pastrian2025full} who examines how consideration sets affect surplus extraction in search environments—yet throughout these models, exploration value derives exclusively from the possibility of adoption.

We depart from this paradigm by identifying a source of option value—catalytic value—that persists even when adoption probability is negligible. The mechanism differs fundamentally from learning-by-experimenting models \citep{bolton1999strategic, keller2005strategic} where agents must consume to learn their type. In our framework, learning occurs through comparative inspection without consumption, explaining why extensive exploration often precedes decision persistence rather than switching. This insight extends recent empirical findings by \cite{agarwal2024searching} who document substantial search activity in mortgage markets with minimal switching, suggesting exploration serves purposes beyond finding better options.

The distinction matters for welfare analysis. In standard search models, reducing search costs unambiguously improves welfare by enabling more efficient matching. Under catalytic exploration, lower search costs can reduce welfare by encouraging excessive benchmarking that imposes negative externalities on those being explored. This reversal occurs precisely when the ratio of catalytic to switching value exceeds a threshold we characterize.

\subsubsection{Information Acquisition and Attention Allocation}

Our results on endogenous information structure converse with the rational inattention literature pioneered by \cite{sims2003implications}. Standard models of costly information acquisition \citep{matejka2015rational, caplin2015revealed} predict that agents focus cognitive resources on the most uncertain or highest-variance alternatives. The recent contribution by \cite{miao2024dynamic} extends this framework to dynamic discrete choice under rational inattention, maintaining the principle that attention flows toward uncertainty.

We establish an "asymmetric attention" result that reverses this intuition. When catalytic motives dominate (formally, when $V^{ISQ} > V^{IC}$), agents optimally allocate higher precision to signals about the status quo than about alternatives. The mechanism operates through the complementarity between status quo information and the option to explore: better knowledge of one's current situation increases the value of using alternatives as benchmarks.

This finding bridges the economics literature with psychological theories of motivated information acquisition. While \cite{benabou2002self} and \cite{golman2017information} emphasize ego-utility and self-confidence motives for information avoidance, we provide a purely instrumental foundation. Our mechanism also differs from the reference effects studied by \cite{cerreia2024caution}, who axiomatize cautious behavior through reference-dependent preferences. In contrast, our reference effects emerge endogenously from the informational value of comparison, without assuming any behavioral preferences or motivated reasoning.

\subsubsection{Signaling and Market Unraveling}

In the signaling domain initiated by \cite{spence1973job}, theories of equilibrium failure typically focus on sender-side incentives. High types may choose not to signal due to countersignaling motives \citep{feltovich2002too}, or markets may unravel through dynamic waiting games \citep{daley2014waiting}. Recent advances such as \cite{cisternas2025signalling} examine signaling with private monitoring, where senders do not observe the signals their actions generate. The common thread throughout this literature is that separation fails because senders' signaling incentives break down.

We introduce a receiver-side mechanism for equilibrium disruption. When buyers explore primarily for catalytic reasons, their exploration decisions become insensitive to quality signals. This destroys the feedback loop that sustains separating equilibria: high types cannot benefit from costly signaling because they are not explored preferentially. The mechanism differs from information cascades \citep{bikhchandani1992theory, bikhchandani2024information} where agents ignore private signals to follow public actions. Here, agents use their private information but explore indiscriminately to resolve status quo uncertainty.

The welfare implications diverge from standard signaling models. While signaling typically involves pure deadweight loss from a social perspective, catalytic exploration can be socially valuable by helping agents better allocate to their comparative advantage. However, the private and social values of exploration diverge because individuals do not internalize the costs imposed on those they explore. This "catalytic externality" represents a new form of market failure requiring targeted intervention, connecting to the recent work of \cite{best2025antisocial} on antisocial learning where information acquisition by some agents can harm social welfare.

\subsection{Empirical Relevance and Applications}

Three empirical patterns motivate our theoretical development. First, in online dating markets, users frequently browse profiles and initiate contact with potential partners they never meet, with contact-to-meeting conversion rates below 5\% \citep{rosenfeld2019disintermediating}. Second, in labor markets, \cite{marinescu2020opening} document that job seekers apply to many positions for which they are overqualified, with application-to-acceptance rates suggesting extensive exploration without intent to accept. Third, in consumer markets, shopping cart abandonment rates exceed 70\% despite consumers spending significant time researching products \citep{baymard2023statistics}.

These patterns share two features that standard theories struggle to explain: (i) exploration is costly and extensive, yet (ii) switching rates are negligible. While standard motives—such as searching for strictly superior alternatives—undoubtedly drive a portion of this activity, our theory rationalizes both features through the catalytic mechanism. In this framework, agents explore extensively not necessarily to find better alternatives, but to calibrate their assessment of current options. Consequently, persistent low switching rates need not imply severe search frictions or market power, but can instead emerge as the natural outcome when exploration serves primarily catalytic purposes.

The theory generates testable predictions that distinguish it from alternative explanations. Most directly, exploration intensity is strictly increasing in status quo uncertainty even when the variance of alternative quality is negligible—a prediction distinct from standard search models driven solely by external volatility. In panel data, individuals experiencing exogenous shocks to status quo uncertainty (job restructuring, health changes, relationship transitions) should exhibit increased exploration without increased switching. In experimental settings, providing subjects with precise information about their status quo should reduce exploration more than providing information about alternatives.

\subsection{Organization}

Section 2 develops the baseline model and establishes the option value decomposition. Section 3 analyzes strategic interactions when quality is endogenously signaled. Section 4 examines optimal information acquisition with endogenous precision choice. Section 5 addresses welfare implications and optimal policy. Section 6 presents empirical implications and identification strategies. Section 7 explores extensions including multi-dimensional uncertainty, dynamic learning, and network effects. Section 8 concludes. All proofs appear in the Appendix.

\section{The Baseline Model}

\subsection{Setup and Primitives}

Consider a risk-neutral Decision Maker (DM) who chooses between two mutually exclusive options: a status quo option ($O_0$) yielding payoff $u_0 = \mu_0 + \epsilon$, and a challenger option ($O_1$) yielding payoff $u_1 = \theta$. The status quo's baseline value $\mu_0$ is common knowledge. The component $\epsilon$ represents the idiosyncratic match value or subjective alignment. Crucially, $\epsilon$ is initially unknown to the DM.

To derive sharp analytical results, we impose the following distributional structure:
\begin{assumption}[Normality]\label{ass:normal}
    The match value and challenger quality follow independent normal distributions:
    \begin{equation}
        \epsilon \sim \mathcal{N}(0, \sigma_\epsilon^2) \quad \text{and} \quad \theta \sim \mathcal{N}(\mu_1, \sigma_\theta^2).
    \end{equation}
\end{assumption}

The challenger is assumed to be ex-ante inferior:
\begin{assumption}[Inferior Challenger]\label{ass:inferior}
    The challenger is inferior in expectation: $\mu_1 < \mu_0$.
\end{assumption}
This ensures a risk-neutral DM would never switch without additional information.

\paragraph{Information and Timing.}
The decision unfolds in two stages. First, the DM decides whether to explore the challenger at a sunk cost $c > 0$. Second, outcomes are determined: if the DM rejects exploration, she chooses based on priors, receiving $\max\{\mu_0, \mu_1\} = \mu_0$; if she explores, she observes the exact realizations of $(\epsilon, \theta)$ and receives $\max\{\mu_0 + \epsilon, \theta\}$.

\paragraph{Exploration Condition.}
The DM explores if and only if the expected gain exceeds the cost:
\begin{equation} \label{eq:condition}
    \E[\max\{\mu_0 + \epsilon, \theta\}] - \mu_0 \geq c
\end{equation}
We refer to the left-hand side of \eqref{eq:condition} as the Total Option Value (OV). For consistency with later welfare and dynamic comparisons, we introduce a discount factor $\delta \in (0,1]$ scaling the option value. Thus, the general exploration condition is $\delta \cdot \text{OV} \geq c$. In this static baseline, we normalize $\delta=1$, but we retain $\delta$ in the notation for subsequent propositions (e.g., Theorem \ref{thm:existence}) to explicitly capture the role of time preferences.

\subsection{Analysis of the Exploration Decision}

\subsubsection{The Option Value Decomposition}

Standard search theory attributes the value of exploration to the possibility of discovering a superior alternative. We show that this is only part of the story. By resolving uncertainty about the status quo, exploration generates value even if the challenger is ultimately rejected.

\begin{theorem}[Option Value Decomposition]\label{thm:decomposition}
The total option value can be uniquely decomposed as $\text{OV} = V^{IC} + V^{ISQ}$, where:
\begin{equation}
    V^{IC} \equiv \E[\max\{\mu_0 + \epsilon, \theta\}] - \E[\max\{\mu_0 + \epsilon, \mu_1\}]
\end{equation}
is the switching value, and
\begin{equation}
    V^{ISQ} \equiv \E[\max\{\mu_0 + \epsilon, \mu_1\}] - \mu_0
\end{equation}
is the catalytic value. Both components are non-negative. Moreover, under standard regularity conditions (e.g., full support), $V^{ISQ} > 0$ strictly whenever $\sigma_\epsilon^2 > 0$.
\end{theorem}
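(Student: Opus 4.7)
The decomposition is an algebraic identity obtained by inserting and subtracting the pivot quantity $\E[\max\{\mu_0+\epsilon,\mu_1\}]$; non-negativity of each component will follow from Jensen's inequality applied in the appropriate direction, and strict positivity of $V^{ISQ}$ from the fact that $\epsilon$ puts positive mass on the event $\{\mu_0+\epsilon<\mu_1\}$ under Assumption \ref{ass:normal}.

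\textbf{Step 1 (Identity and uniqueness).} I would first verify $V^{IC}+V^{ISQ}=\text{OV}$ by direct substitution: the two copies of $\E[\max\{\mu_0+\epsilon,\mu_1\}]$ cancel, leaving $\E[\max\{\mu_0+\epsilon,\theta\}]-\mu_0$, which equals $\text{OV}$ by the definition in \eqref{eq:condition}. Uniqueness is immediate once the pivot $\E[\max\{\mu_0+\epsilon,\mu_1\}]$ is specified.

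\textbf{Step 2 (Non-negativity).} For $V^{IC}$, I would condition on $\epsilon$ and apply Jensen's inequality to the convex map $\theta\mapsto\max\{\mu_0+\epsilon,\theta\}$, using independence of $\theta$ and $\epsilon$ together with $\E[\theta]=\mu_1$; this yields $\E[\max\{\mu_0+\epsilon,\theta\}\mid\epsilon]\ge\max\{\mu_0+\epsilon,\mu_1\}$, and taking unconditional expectation gives $V^{IC}\ge 0$. For $V^{ISQ}$, I would use the pointwise bound $\max\{\mu_0+\epsilon,\mu_1\}\ge\mu_0+\epsilon$, which together with $\E[\epsilon]=0$ yields $V^{ISQ}\ge 0$.

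\textbf{Step 3 (Strict positivity and main obstacle).} The pointwise bound in the $V^{ISQ}$ step is strict exactly on the event $\{\mu_0+\epsilon<\mu_1\}=\{\epsilon<-\Delta\}$, where $\Delta=\mu_0-\mu_1>0$ by Assumption \ref{ass:inferior}. Under Assumption \ref{ass:normal} with $\sigma_\epsilon^2>0$, the distribution of $\epsilon$ has full support on $\R$, so $\mathbb{P}(\epsilon<-\Delta)>0$; integrating a non-negative random variable that is strictly positive on a set of positive measure then delivers $V^{ISQ}>0$. The argument is elementary throughout, so I do not anticipate a serious technical obstacle; the only point requiring care is the ``standard regularity conditions (full support)'' clause, which is intended for later non-normal extensions. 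The clean sufficient hypothesis in general is $\mathbb{P}(\mu_0+\epsilon<\mu_1)>0$, which under Assumption \ref{ass:normal} is automatic.
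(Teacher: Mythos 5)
Your proof is correct and takes essentially the same route as the paper: insert the pivot $\E[\max\{\mu_0+\epsilon,\mu_1\}]$, verify the identity, and establish non-negativity of each component via the convexity of the max. The only stylistic difference is in $V^{ISQ}$: you use the pointwise bound $\max\{\mu_0+\epsilon,\mu_1\}\ge\mu_0+\epsilon$ together with $\E[\epsilon]=0$, deriving strict positivity from strictness on the positive-measure event $\{\epsilon<-\Delta\}$, whereas the paper invokes Jensen's inequality for $x\mapsto\max\{x,\mu_1\}$ and then computes $V^{ISQ}=\Pr(\epsilon<-\Delta)\bigl(-\Delta-\E[\epsilon\mid\epsilon<-\Delta]\bigr)>0$ explicitly — the two arguments are equivalent and equally elementary.
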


\begin{proof}
See Appendix A.1.
\end{proof}

The decomposition reveals that exploration can have positive value even when the probability of switching is arbitrarily small. The catalytic value $V^{ISQ}$ captures the reflexive gain: by observing the challenger (even a deterministic one), the agent creates a comparative context to realize $\epsilon$. The non-negativity follows from Jensen's inequality and the convexity of the max function.

\subsubsection{Characterization of Catalytic Value}

To derive sharp comparative statics, we assume normality. Let $\Delta \equiv \mu_0 - \mu_1 > 0$ denote the ex-ante quality gap.

\begin{proposition}[Catalytic Value: Exact Formula and Limits]\label{prop:catalytic}
Under Assumption \ref{ass:normal}, the catalytic value is given by:
\begin{equation}
    V^{ISQ} = -\Delta \Phi\left(\frac{-\Delta}{\sigma_\epsilon}\right) + \sigma_\epsilon \phi\left(\frac{-\Delta}{\sigma_\epsilon}\right)
\end{equation}
where $\phi(\cdot)$ and $\Phi(\cdot)$ are the standard normal PDF and CDF. The catalytic value satisfies:
(1) The marginal effect of uncertainty is bounded, $\frac{\partial V^{ISQ}}{\partial\sigma_\epsilon} = \phi\left(\frac{-\Delta}{\sigma_\epsilon}\right) \in (0, \frac{1}{\sqrt{2\pi}}]$; and
(2) In the high uncertainty limit ($\sigma_\epsilon \to \infty$), the value scales linearly as $V^{ISQ} \sim \frac{\sigma_\epsilon}{\sqrt{2\pi}} - \frac{\Delta}{2}$.
\end{proposition}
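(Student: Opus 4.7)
The plan is to reduce the computation to a standard Gaussian expected-positive-part formula, then differentiate and Taylor-expand.

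\textbf{Step 1 (Closed form).} Since $\max\{\mu_0+\epsilon,\mu_1\}=\mu_1+(\mu_0-\mu_1+\epsilon)^+ = \mu_1+(\Delta+\epsilon)^+$, I would first rewrite
\begin{equation*}
V^{ISQ}=\E[\max\{\mu_0+\epsilon,\mu_1\}]-\mu_0 = \E[(\Delta+\epsilon)^+]-\Delta.
\end{equation*}
Under Assumption \ref{ass:normal}, $\Delta+\epsilon\sim\mathcal{N}(\Delta,\sigma_\epsilon^2)$, so the standard identity for truncated normals gives $\E[(\Delta+\epsilon)^+]=\Delta\,\Phi(\Delta/\sigma_\epsilon)+\sigma_\epsilon\,\phi(\Delta/\sigma_\epsilon)$. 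Subtracting $\Delta$ and using the symmetry relations $\phi(-x)=\phi(x)$ and $1-\Phi(x)=\Phi(-x)$ yields $V^{ISQ}=-\Delta\,\Phi(-\Delta/\sigma_\epsilon)+\sigma_\epsilon\,\phi(-\Delta/\sigma_\epsilon)$, which is the stated formula.

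\textbf{Step 2 (Marginal effect).} Let $z\equiv -\Delta/\sigma_\epsilon$, so $\partial z/\partial\sigma_\epsilon=\Delta/\sigma_\epsilon^2$. Differentiating $V^{ISQ}=-\Delta\,\Phi(z)+\sigma_\epsilon\,\phi(z)$ with respect to $\sigma_\epsilon$ and using $\phi'(z)=-z\phi(z)$,
\begin{equation*}
\frac{\partial V^{ISQ}}{\partial\sigma_\epsilon}=\phi(z)\left[-\frac{\Delta^2}{\sigma_\epsilon^2}+1-\frac{z\Delta}{\sigma_\epsilon}\right].
\end{equation*}
Substituting $z\Delta/\sigma_\epsilon=-\Delta^2/\sigma_\epsilon^2$ makes the bracket collapse to $1$, so the derivative is exactly $\phi(-\Delta/\sigma_\epsilon)$. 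Since the standard normal density is strictly positive and attains its maximum $1/\sqrt{2\pi}$ at $0$, the claimed bounds $(0,1/\sqrt{2\pi}]$ follow; note the upper bound is achieved only in the limit $\sigma_\epsilon\to\infty$.

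\textbf{Step 3 (High-uncertainty asymptotics).} For the linear scaling as $\sigma_\epsilon\to\infty$, I Taylor-expand both factors around $z=0$: $\Phi(-\Delta/\sigma_\epsilon)=\tfrac{1}{2}-\tfrac{\Delta}{\sigma_\epsilon\sqrt{2\pi}}+O(\sigma_\epsilon^{-3})$ and $\phi(-\Delta/\sigma_\epsilon)=\tfrac{1}{\sqrt{2\pi}}+O(\sigma_\epsilon^{-2})$. Substituting into the closed form gives
\begin{equation*}
V^{ISQ}=-\tfrac{\Delta}{2}+\tfrac{\Delta^2}{\sigma_\epsilon\sqrt{2\pi}}+\tfrac{\sigma_\epsilon}{\sqrt{2\pi}}+O(\sigma_\epsilon^{-1})\sim\tfrac{\sigma_\epsilon}{\sqrt{2\pi}}-\tfrac{\Delta}{2}.
\end{equation*}

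No step presents a serious difficulty; the only mild subtlety is the algebraic cancellation in Step 2 that reduces the derivative to just $\phi(-\Delta/\sigma_\epsilon)$. This cancellation is essentially an instance of the envelope-type structure of expected max functions (the indirect effect of $\sigma_\epsilon$ through the threshold vanishes), and verifying it carefully is the one place to avoid sign errors.
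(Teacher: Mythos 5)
Your proof is correct and follows essentially the same route as the paper's: derive the closed form, establish $\partial V^{ISQ}/\partial\sigma_\epsilon=\phi(-\Delta/\sigma_\epsilon)$ by exactly the same cancellation using $\phi'(z)=-z\phi(z)$, and Taylor-expand around $z=0$ for the asymptotic. The only cosmetic differences are that you invoke the standard expected-positive-part identity $\E[(\Delta+\epsilon)^+]=\Delta\,\Phi(\Delta/\sigma_\epsilon)+\sigma_\epsilon\,\phi(\Delta/\sigma_\epsilon)$ rather than re-deriving it by splitting the integral at $x=-\Delta$ as the paper does, and you fold the derivative computation (which the paper defers to its Appendix A.3 proof of Proposition 2) into this one self-contained argument; your remark that the upper bound $1/\sqrt{2\pi}$ is only a supremum when $\Delta>0$ is a small but welcome precision over the paper's closed-interval notation.
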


\begin{proof}
See Appendix A.2.
\end{proof}

This proposition establishes that catalytic value is strictly increasing in status quo uncertainty but with diminishing marginal returns (bounded by the peak of the normal density). In the high-uncertainty limit, the value scales linearly with $\sigma_\epsilon$, suggesting that for sufficiently opaque status quos, exploration becomes optimal regardless of cost.

\subsubsection{Existence of Predominantly Catalytic Exploration}

A central insight of our theory is that rational agents may explore with probability one even when the probability of switching approaches zero.

\begin{theorem}[Decoupling of Exploration and Switching]\label{thm:existence}
There exists a parameter region where exploration is optimal with probability approaching one, while the probability of switching remains strictly bounded. Specifically, fixing the quality gap $\Delta > 0$, as status quo uncertainty increases ($\sigma_{\epsilon} \rightarrow \infty$):
(i) The optimal exploration probability converges to one ($\Pr(\delta V^{ISQ} > c) \rightarrow 1$);
(ii) The switching probability approaches a strictly bounded limit ($\Pr(\theta > \mu_{0}+\epsilon) \rightarrow 0.5$), implying that for any finite $\sigma_{\epsilon}$, switching is never guaranteed despite certain exploration.
Thus, in the high-uncertainty limit, exploration is driven predominantly by catalytic motives ($V^{ISQ} \gg V^{IC}$), creating a regime of ``exploration without switching.''
\end{theorem}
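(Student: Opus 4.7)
The plan is to prove the three assertions separately, leveraging the asymptotic already established in Proposition \ref{prop:catalytic} together with one direct normal-CDF computation and a Lipschitz/dominated-convergence bound on $V^{IC}$.

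\textbf{Part (i): exploration certainty.} I would invoke Proposition \ref{prop:catalytic}(2): under normality, $V^{ISQ} = \sigma_\epsilon/\sqrt{2\pi} - \Delta/2 + o(1)$ as $\sigma_\epsilon \to \infty$, so $V^{ISQ} \to \infty$ for fixed $\Delta$. By Theorem \ref{thm:decomposition}, $V^{IC} \geq 0$, hence $\delta \cdot \text{OV} = \delta(V^{ISQ} + V^{IC}) \geq \delta V^{ISQ}$, which eventually exceeds the fixed cost $c$. Because exploration in the baseline model is the deterministic indicator of condition \eqref{eq:condition}, the event $\{\delta \cdot \text{OV} \geq c\}$ has indicator equal to $1$ for all sufficiently large $\sigma_\epsilon$, so $\Pr(\delta V^{ISQ} > c) \to 1$.

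\textbf{Part (ii): bounded switching.} Under Assumption \ref{ass:normal}, independence of $\theta$ and $\epsilon$ gives $\theta - \mu_0 - \epsilon \sim \mathcal{N}(-\Delta,\, \sigma_\theta^2 + \sigma_\epsilon^2)$, so
\[
\Pr(\theta > \mu_0 + \epsilon) = \Phi\!\left(\frac{-\Delta}{\sqrt{\sigma_\theta^2 + \sigma_\epsilon^2}}\right).
\]
Since $\Delta > 0$, the argument is strictly negative for every finite $\sigma_\epsilon$, so this probability lies strictly below $1/2$ (switching is never guaranteed). As $\sigma_\epsilon \to \infty$, the argument tends to $0$ and the probability tends to $\Phi(0) = 1/2$, which is the ``strictly bounded limit'' in the statement.

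\textbf{Catalytic dominance.} To conclude $V^{ISQ} \gg V^{IC}$, it suffices to show $V^{IC}$ stays bounded (in fact vanishes) while $V^{ISQ}$ grows without bound. Define $g(\epsilon) := \E_\theta[\max\{\mu_0+\epsilon,\theta\}] - \max\{\mu_0+\epsilon,\mu_1\}$. Since $y \mapsto \max(a,y)$ is $1$-Lipschitz, $g$ is uniformly bounded by $\E|\theta - \mu_1| = \sigma_\theta\sqrt{2/\pi}$; moreover $g(\epsilon) \to 0$ as $|\epsilon| \to \infty$, because whichever of $\mu_0+\epsilon$ or $\theta$ dominates, the outer and inner maxes become jointly determined by the dominant term. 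Substituting $u = \epsilon/\sigma_\epsilon$ yields $V^{IC} = \int g(\sigma_\epsilon u)\phi(u)\,du$, and dominated convergence delivers $V^{IC} \to 0$. Combined with $V^{ISQ} \sim \sigma_\epsilon/\sqrt{2\pi}$, this gives $V^{ISQ}/V^{IC} \to \infty$.

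\textbf{Main obstacle.} Parts (i) and (ii) are essentially direct corollaries of Proposition \ref{prop:catalytic} and of the normal-convolution identity. The delicate step is the catalytic-dominance comparison, since one is taking a limit in which the density $\phi_\epsilon$ spreads to zero pointwise while the integrand is only bounded, not integrable against a fixed measure. Formulating the argument in rescaled coordinates and appealing to the uniform Lipschitz bound on $g$ is the cleanest way to avoid explicit formulas for $V^{IC}$; this is where I would spend the most care in the write-up.
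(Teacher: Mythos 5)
Your parts (i) and (ii) take exactly the paper's route: the paper cites Proposition~\ref{prop:catalytic} for $V^{ISQ}\to\infty$ and evaluates the same normal-convolution expression $\Phi\!\left(-\Delta/\sqrt{\sigma_\theta^2+\sigma_\epsilon^2}\right)\to\Phi(0)=0.5$. Where you genuinely go beyond the paper is the catalytic-dominance claim. The paper's Appendix A.6 proof simply asserts the gap between exploration and switching rates ``formalizes the existence of the predominantly catalytic regime'' without ever showing that $V^{IC}$ stays bounded, let alone vanishes. Your rescaled-coordinates argument---defining $g(\epsilon)=\E_\theta[\max\{\mu_0+\epsilon,\theta\}]-\max\{\mu_0+\epsilon,\mu_1\}$, bounding it uniformly by $\E|\theta-\mu_1|=\sigma_\theta\sqrt{2/\pi}$ via Lipschitz continuity of $y\mapsto\max(a,y)$, noting $g(\epsilon)\to 0$ at both tails, and applying dominated convergence to $V^{IC}=\int g(\sigma_\epsilon u)\phi(u)\,du$---is a clean and correct way to obtain $V^{IC}\to 0$, and hence $V^{ISQ}/V^{IC}\to\infty$. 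This is a worthwhile strengthening: it turns the paper's informal ``$V^{ISQ}\gg V^{IC}$'' remark into a proven limit. One stylistic note: since $V^{ISQ}$ is deterministic given $(\sigma_\epsilon,\Delta)$, the probability in part (i) is a degenerate $0$/$1$ indicator, which you correctly flag but which is worth keeping explicit in the write-up so readers don't look for a nontrivial randomization.
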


\begin{proof}
See Appendix A.6.
\end{proof}

This theorem rationalizes the "window shopping" puzzle. When $\sigma_\epsilon$ is large, the option value of resolving internal uncertainty ($V^{ISQ}$) exceeds the cost $c$, triggering exploration. However, because the challenger is inferior ($\mu_1 < \mu_0$), the realization of $\epsilon$ is likely to confirm the status quo's superiority (since $\Delta$ is large relative to $\sigma_\theta$), resulting in a low switching rate.

\subsection{Comparative Statics and Properties}

\subsubsection{How Catalytic Value Varies with Uncertainty}

The relationship between uncertainty and catalytic value is central to understanding when catalytic exploration occurs.

\begin{proposition}[Comparative Statics of Catalytic Value]\label{prop:comparative}
Under Assumption \ref{ass:normal} with $\mu_1 < \mu_0$, the catalytic value $V^{ISQ}$ has the following properties with respect to $\sigma_\epsilon$:
(1) Monotonicity: $\frac{\partial V^{ISQ}}{\partial \sigma_\epsilon} > 0$ for all $\sigma_\epsilon > 0$;
(2) Bounded Marginal Effect: $0 < \frac{\partial V^{ISQ}}{\partial \sigma_\epsilon} < \frac{1}{\sqrt{2\pi}}$, and $\lim_{\sigma_\epsilon \to \infty} \frac{\partial V^{ISQ}}{\partial \sigma_\epsilon} = \frac{1}{\sqrt{2\pi}}$; and
(3) Cross Effect: $\frac{\partial^2 V^{ISQ}}{\partial \sigma_\epsilon \partial \mu_1} > 0$.
\end{proposition}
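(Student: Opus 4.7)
The plan is to build all three parts on the closed-form derivative already furnished by Proposition \ref{prop:catalytic}, namely
\[
    \frac{\partial V^{ISQ}}{\partial \sigma_\epsilon} \;=\; \phi\!\left(\frac{-\Delta}{\sigma_\epsilon}\right),
\]
together with elementary properties of the standard normal density $\phi$. Because this formula is already in hand, the proposition reduces to sign-checking and a limiting argument; no re-derivation from the defining expectation is needed.

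For part (1), strict monotonicity is immediate: $\phi$ is strictly positive on all of $\mathbb{R}$, so $\partial V^{ISQ}/\partial \sigma_\epsilon > 0$ for every finite $\sigma_\epsilon > 0$. For part (2), I would invoke the single-peak property of $\phi$: the density attains its unique maximum $\phi(0) = 1/\sqrt{2\pi}$ at zero and is strictly smaller elsewhere. Since Assumption \ref{ass:inferior} guarantees $\Delta > 0$, the argument $-\Delta/\sigma_\epsilon$ is strictly negative for every finite $\sigma_\epsilon$, so $\phi(-\Delta/\sigma_\epsilon) < 1/\sqrt{2\pi}$. Combined with part (1) this yields the two-sided bound. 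For the limit, I would note that $-\Delta/\sigma_\epsilon \to 0$ as $\sigma_\epsilon \to \infty$, and continuity of $\phi$ then gives $\partial V^{ISQ}/\partial \sigma_\epsilon \to \phi(0) = 1/\sqrt{2\pi}$.

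For part (3), I would differentiate the expression in part (1) with respect to $\mu_1$, using $\Delta = \mu_0 - \mu_1$, so $\partial \Delta/\partial \mu_1 = -1$. Applying the chain rule together with $\phi'(z) = -z\phi(z)$ gives
\[
    \frac{\partial^{2} V^{ISQ}}{\partial \sigma_\epsilon \, \partial \mu_1}
    \;=\; \phi'\!\left(\frac{-\Delta}{\sigma_\epsilon}\right) \cdot \frac{1}{\sigma_\epsilon}
    \;=\; \frac{\Delta}{\sigma_\epsilon^{2}} \, \phi\!\left(\frac{-\Delta}{\sigma_\epsilon}\right).
\]
This expression is strictly positive because $\Delta > 0$ and $\phi > 0$, which establishes the cross-partial sign. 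The economic reading—that narrowing the quality gap (raising $\mu_1$ toward $\mu_0$) amplifies the marginal benefit of status quo uncertainty—comes out for free from this formula.

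I do not expect any genuine obstacle here, since each claim collapses to an observation about $\phi$. The one point requiring mild care is ensuring that the differentiation under the composition in part (3) uses the correct sign: writing $-\Delta/\sigma_\epsilon = (\mu_1 - \mu_0)/\sigma_\epsilon$ makes $\partial(-\Delta/\sigma_\epsilon)/\partial \mu_1 = 1/\sigma_\epsilon$ transparent and avoids a sign flip. A brief remark can then tie the three results together, emphasizing that the bound $1/\sqrt{2\pi}$ reflects the peak of the Gaussian density and is attained only in the uninformative-prior limit, while the positive cross-partial foreshadows the role of $\Delta$ in the welfare and signaling analyses of later sections.
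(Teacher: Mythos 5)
Your proposal is correct and follows essentially the same route as the paper's proof: both rest on the identity $\partial V^{ISQ}/\partial \sigma_\epsilon = \phi(-\Delta/\sigma_\epsilon)$ and then read off the claims from elementary properties of $\phi$. One organizational caveat: the paper's appendix proof of Proposition~\ref{prop:catalytic} derives only the closed form and the asymptotics; the simplification of $\partial V^{ISQ}/\partial\sigma_\epsilon$ to $\phi(-\Delta/\sigma_\epsilon)$ (the cancellation of the $\Delta^2/\sigma_\epsilon^2$ terms after applying $\phi'(z) = -z\phi(z)$) is actually carried out inside this proof in the paper. Since you build everything on that formula, you should include the two-line differentiation rather than cite it, or your argument is implicitly circular. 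Two smaller points: in part (2) you correctly obtain the strict bound $\phi(-\Delta/\sigma_\epsilon) < 1/\sqrt{2\pi}$ from $-\Delta/\sigma_\epsilon < 0$, which is actually a touch sharper than the paper's proof (which only states $\le$); and in part (3) you differentiate with respect to $\mu_1$ directly rather than passing through $\partial/\partial\Delta$ and then the chain rule, an equivalent but slightly shorter computation arriving at the same expression $(\Delta/\sigma_\epsilon^2)\,\phi(-\Delta/\sigma_\epsilon) > 0$.
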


\begin{proof}
See Appendix A.3.
\end{proof}

These results imply that uncertainty and challenger quality are complements in generating catalytic value: a better (though still inferior) challenger makes status-quo uncertainty more informative and hence more valuable.

\subsubsection{Graphical Illustration}

To build intuition, we illustrate the key relationships graphically.

\begin{figure}[htbp]
\centering
\begin{tikzpicture}[scale=1.2]
\begin{axis}[
    xlabel={Status Quo Uncertainty $\sigma_\epsilon$},
    ylabel={Value},
    xmin=0, xmax=10,
    ymin=0, ymax=5,
    legend pos=north west,
    grid=major,
    width=12cm,
    height=8cm
]
\addplot[
    domain=0:10,
    samples=100,
    thick,
    color=blue
] {0.4*x - 0.5 + 0.2*exp(-x/4)};
\addlegendentry{Total Option Value (OV)}

\addplot[
    domain=0.1:10,
    samples=100,
    thick,
    color=red,
    dashed
] {0.4*x - 1 + 1/(x+0.5)};
\addlegendentry{Catalytic Value ($V^{ISQ}$)}

\addplot[
    domain=0:10,
    samples=100,
    thick,
    color=green,
    dotted
] {0.5 + 0.2*exp(-x/4)};
\addlegendentry{Switching Value ($V^{IC}$)}

\addplot[
    domain=0:10,
    samples=2,
    thick,
    color=black,
    dashed
] {1.5};
\addlegendentry{Cost Threshold ($c/\delta$)}

\draw[thick, gray, dashed] (axis cs:3.33,0) -- (axis cs:3.33,5);
\node at (axis cs:3.33,4.5) [right] {$\sigma_\epsilon^*$};
\end{axis}
\end{tikzpicture}
\caption{Decomposition of Option Value}
\label{fig:decomposition}
\end{figure}
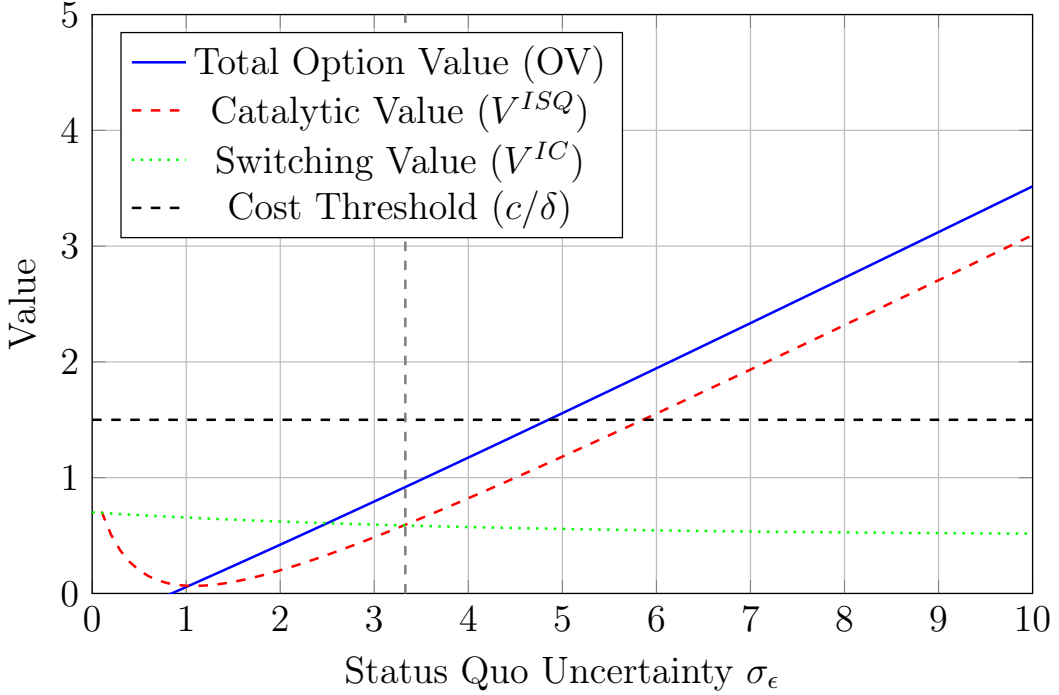

Figure \ref{fig:decomposition} illustrates that for high levels of uncertainty, exploration is justified ($\text{OV} > c$) but driven predominantly by catalytic motives ($V^{ISQ} \gg V^{IC}$).

\subsubsection{The Role of Risk Aversion}

So far, we have assumed risk neutrality. How does risk aversion affect catalytic exploration?

\begin{proposition}[Risk Aversion and Catalytic Value]\label{prop:risk}
Consider a decision-maker with CARA utility $u(x) = -e^{-\gamma x}$ where $\gamma > 0$. Let $V^{ISQ}_{\text{CARA}}(\gamma)$ denote the catalytic value. We find that risk aversion increases the value of resolving status-quo uncertainty ($V^{ISQ}_{\text{CARA}}(\gamma) > V^{ISQ}_{\text{RN}}$). This amplification occurs through an insurance effect: exploration allows the DM to avoid committing to the status quo when $\epsilon$ realizes negatively. The magnitude of this effect is increasing in $\gamma$ and $\sigma_\epsilon$.
\end{proposition}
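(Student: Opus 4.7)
The plan is to recast the catalytic value via certainty equivalents and reduce the claim to a comparison of risk premia between the raw and floored status-quo payoffs. Write $Y \equiv \mu_0 + \epsilon$ for the status-quo payoff and $Z \equiv \max\{Y, \mu_1\}$ for the post-exploration payoff, holding $\theta$ fixed at its mean $\mu_1$ as in the catalytic decomposition. Let $CE_\gamma(W) = -\gamma^{-1}\log \E[e^{-\gamma W}]$ and $RP_\gamma(W) = \E[W] - CE_\gamma(W)$. In the regime where the status quo is chosen absent exploration (i.e., $\mu_0 - \gamma \sigma_\epsilon^2/2 \geq \mu_1$), the natural CARA analog of the catalytic value is $V^{ISQ}_{\text{CARA}}(\gamma) \equiv CE_\gamma(Z) - CE_\gamma(Y)$. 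Adding and subtracting $\E[Z] - \E[Y] = V^{ISQ}_{\text{RN}}$ yields
\begin{equation*}
V^{ISQ}_{\text{CARA}}(\gamma) - V^{ISQ}_{\text{RN}} = RP_\gamma(Y) - RP_\gamma(Z),
\end{equation*}
so the proposition reduces to showing that the risk premium of $Z$ is strictly smaller than that of $Y$, with the gap increasing in $\gamma$ and $\sigma_\epsilon$.

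The core step is to show $Z$ second-order stochastically dominates $Y + k$, where $k \equiv V^{ISQ}_{\text{RN}} > 0$. Using $F_{Y+k}(s) = F_Y(s - k)$ together with $F_Z(s) = 0$ for $s < \mu_1$ and $F_Z(s) = F_Y(s)$ for $s \geq \mu_1$, the difference $F_{Y+k}(s) - F_Z(s)$ is strictly positive for $s < \mu_1$ and strictly negative for $s \geq \mu_1$---a single sign-change---while the two distributions share a common mean by construction of $k$. The standard single-crossing characterization then delivers $Z \succeq_{\text{SSD}} Y + k$. Applying this ordering to the concave CARA utility $u(x) = -e^{-\gamma x}$ yields $\E[e^{-\gamma Z}] < e^{-\gamma k}\E[e^{-\gamma Y}]$; taking logs, dividing by $-\gamma$, and rearranging produces $CE_\gamma(Z) > CE_\gamma(Y) + k$, i.e., $RP_\gamma(Y) > RP_\gamma(Z)$, and hence $V^{ISQ}_{\text{CARA}}(\gamma) > V^{ISQ}_{\text{RN}}$. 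The insurance interpretation emerges from the identity $Z = Y + (\mu_1 - Y)^+$: exploration effectively grants the DM a put option on the status quo struck at $\mu_1$, paying off precisely in the bad states the risk-averse DM would most regret committing to.

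For the comparative statics, a Taylor expansion at $\gamma = 0$ gives $RP_\gamma(W) = \tfrac{\gamma}{2}\Var(W) + O(\gamma^2)$, so
\begin{equation*}
V^{ISQ}_{\text{CARA}}(\gamma) - V^{ISQ}_{\text{RN}} = \frac{\gamma}{2}\bigl[\Var(Y) - \Var(Z)\bigr] + O(\gamma^2).
\end{equation*}
Flooring $Y$ at $\mu_1$ strictly removes lower-tail dispersion, so $\Var(Y) - \Var(Z) > 0$; moreover, enlarging $\sigma_\epsilon$ stretches the preserved lower tail of $Y$ more than it contributes to $\Var(Z)$, whose lower tail is anchored at $\mu_1$, so the gap is increasing in $\sigma_\epsilon$. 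This delivers both comparative statics to leading order, and globally in $\sigma_\epsilon$ by differentiating the closed-form truncated-normal expression for $\E[e^{-\gamma Z}]$.

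The main obstacle is global monotonicity in $\gamma$ beyond the leading-order regime. The cleanest route uses that $\partial CE_\gamma(W)/\partial \gamma$ equals the negative of a dispersion measure of $W$ under the Esscher-tilted measure $dQ/dP \propto e^{-\gamma W}$, and this dispersion is uniformly larger for $Y$ than for $Z$ under any common tilt because $Z$ collapses the event $\{Y < \mu_1\}$ onto a single point. A more computational alternative is to evaluate $\E[e^{-\gamma Z}]$ explicitly via the truncated-normal MGF---a tail integral plus the point mass $e^{-\gamma \mu_1}\Phi(-\Delta/\sigma_\epsilon)$---and then verify $\partial V^{ISQ}_{\text{CARA}}/\partial \gamma > 0$ by reducing to a monotone comparison between a conditional and an unconditional Gaussian variance.
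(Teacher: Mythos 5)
Your decomposition $V^{ISQ}_{\text{CARA}}(\gamma) - V^{ISQ}_{\text{RN}} = RP_\gamma(Y) - RP_\gamma(Z)$ is exactly the one the paper uses in Appendix A.4 (written there as $V^{ISQ}_{\text{CARA}} = V^{ISQ}_{\text{RN}} + (\pi_{no} - \pi_{yes})$), so the overall strategy is the same. Where you genuinely differ is in how the key inequality $RP_\gamma(Y) > RP_\gamma(Z)$ is established. The paper argues informally: it observes that flooring reduces variance, $\Var(Z) < \Var(Y)$, and then asserts that the CARA risk premium is ``strictly increasing in the variance (approximated by $\frac{\gamma}{2}\Var(\cdot)$).'' That step is only an Arrow--Pratt small-risk approximation, not a valid global implication---two random variables with the same variance can have different CARA risk premia, and $Z$ is not Gaussian even though $Y$ is. Your second-order stochastic dominance argument via the single crossing of $F_{Y+k}$ and $F_Z$ at $\mu_1$ closes this gap rigorously: recentering by $k = \E[Z] - \E[Y]$ equalizes means, the sign change is unambiguous because $F_Z \equiv 0$ below $\mu_1$ while $F_Y$ has full support, and concavity of the CARA utility then delivers the strict CE inequality. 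This is a real improvement on the published argument. On the comparative statics in $\gamma$ and $\sigma_\epsilon$, both you and the paper fall short of a complete proof---the paper's remark explicitly punts to numerical analysis, while you give a correct leading-order Taylor expansion and a credible but unexecuted sketch (Esscher tilting, or the explicit truncated-normal MGF). Carrying that sketch through would make your version strictly dominate the paper's.
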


\begin{proof}
See Appendix A.4.
\end{proof}

\paragraph{Remark.}
While a closed-form solution for $V^{ISQ}_{\text{CARA}}$ is intractable, numerical analysis in Appendix H confirms that risk-averse agents value the insurance aspect of catalytic exploration beyond its pure information value.

\subsection{Dynamic Extension: Repeated Exploration Opportunities}

Consider an infinite-horizon extension where challenger options arrive according to a Poisson process with rate $\lambda$.

\begin{proposition}[Stationary Exploration Policy]\label{prop:dynamic}
In the infinite-horizon model with discount rate $r$, the optimal policy follows a threshold rule: explore if and only if $\sigma_\epsilon > \bar{\sigma}_\epsilon$. The value function $V(\sigma_\epsilon)$ satisfies the HJB equation:
\begin{equation}
    rV(\sigma_\epsilon) = \mu_0 + \lambda \max\{-c + V^{ISQ}(\sigma_\epsilon) + [V(0) - V(\sigma_\epsilon)], 0\}
\end{equation}
\noindent \textit{Remark:} This formulation focuses on the catalytic-dominant regime where switching value is negligible ($V^{IC} \approx 0$), allowing the total option value to be approximated by $V^{ISQ}$. In the high-uncertainty region, the threshold approximately satisfies $V^{ISQ}(\bar{\sigma}_\epsilon) = c$, yielding $\bar{\sigma}_\epsilon \approx c \sqrt{2\pi}$. Catalytic exploration persists in steady state if uncertainty regenerates over time.
\end{proposition}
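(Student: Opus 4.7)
The plan is to derive the HJB by standard continuous-time dynamic programming with Poisson arrivals, then exploit monotonicity of $V^{ISQ}$ (from Proposition \ref{prop:comparative}) to establish the threshold structure, and finally invoke the asymptotic expansion in Proposition \ref{prop:catalytic} to obtain $\bar{\sigma}_\epsilon \approx c\sqrt{2\pi}$.

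First I would derive the HJB heuristically over an interval of length $dt$. With probability $1 - \lambda\, dt$ no challenger arrives and the DM collects flow $\mu_0\, dt$ while the state is unchanged; with probability $\lambda\, dt$ a challenger arrives, and the DM chooses between (i) doing nothing, retaining continuation value $V(\sigma_\epsilon)$, and (ii) paying $c$, collecting the option gain $V^{ISQ}(\sigma_\epsilon)$ (using the catalytic-dominant approximation $\text{OV}\approx V^{ISQ}$), and resetting to the fully-resolved state with continuation $V(0)$. Passing to the continuous-time limit and rearranging yields exactly the stated HJB. I would note that without regeneration of $\sigma_\epsilon$, the state is absorbing at $0$ after the first exploration; the HJB is nevertheless well-posed as a fixed-point relation in $V$.

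Next I would establish the threshold structure. Define the exploration gain $g(\sigma_\epsilon)\equiv -c + V^{ISQ}(\sigma_\epsilon) + V(0) - V(\sigma_\epsilon)$ and let $E=\{\sigma_\epsilon : g(\sigma_\epsilon)>0\}$ be the exploration region. On the complement $E^c$, the HJB reduces to $rV(\sigma_\epsilon)=\mu_0$, so $V$ is constant equal to $\mu_0/r$ on $E^c$. Hence on $E^c$ we have $g(\sigma_\epsilon) = -c + V^{ISQ}(\sigma_\epsilon) + V(0) - \mu_0/r$, which is strictly increasing in $\sigma_\epsilon$ by Proposition \ref{prop:comparative}(1). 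Since $V^{ISQ}(0)=0$ and $V^{ISQ}\to\infty$ as $\sigma_\epsilon\to\infty$ by Proposition \ref{prop:catalytic}(2), the indifference equation $g(\sigma_\epsilon)=0$ admits a unique root $\bar{\sigma}_\epsilon$, and $E=(\bar{\sigma}_\epsilon,\infty)$. To close the fixed point, I note that $V(0)=\mu_0/r$ whenever $0\in E^c$ (which holds since $V^{ISQ}(0)=0<c$), so the indifference condition collapses to $V^{ISQ}(\bar{\sigma}_\epsilon) = c$, exactly as claimed in the remark.

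Finally, to obtain the quantitative approximation, I would substitute the high-uncertainty asymptotic from Proposition \ref{prop:catalytic}(2), $V^{ISQ}(\sigma_\epsilon)\sim \sigma_\epsilon/\sqrt{2\pi}-\Delta/2$, into $V^{ISQ}(\bar{\sigma}_\epsilon)=c$ and solve to leading order, giving $\bar{\sigma}_\epsilon \approx c\sqrt{2\pi}+\sqrt{2\pi}\,\Delta/2\approx c\sqrt{2\pi}$ when the quality-gap correction is subdominant. The main obstacle is the fixed-point coupling: $V(0)$ appears inside the definition of $g$, so one must verify that the reduction $V(0)=\mu_0/r$ is internally consistent, which requires checking that the first exploration occurs at some $\sigma_\epsilon>\bar{\sigma}_\epsilon>0$ and that post-exploration the DM stays in $E^c$ absent regeneration; once this is verified the threshold characterization follows cleanly from monotonicity of $V^{ISQ}$.
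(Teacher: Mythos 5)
Your proposal is correct and follows essentially the same approach as the paper: write the Poisson-arrival HJB, note $V(0)=\mu_0/r$, reduce the exploration condition to the static inequality $V^{ISQ}(\sigma_\epsilon)\geq c$, and invoke the high-uncertainty asymptotic to obtain $\bar{\sigma}_\epsilon\approx c\sqrt{2\pi}$. The paper closes the loop by solving the HJB explicitly in the exploration region and substituting back, whereas you close it by value matching at the boundary together with the observation that $V^{ISQ}(0)=0<c$ forces $V(0)=\mu_0/r$; the latter is a slightly tighter justification for a step the paper simply asserts, but the two routes are otherwise equivalent.
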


\begin{proof}
See Appendix A.5.
\end{proof}

This extension confirms that catalytic exploration is not merely a one-shot phenomenon but persists in dynamic settings where uncertainty is recurrent.

\section{Strategic Interactions: Signaling with Catalytic Motives}

\subsection{Model Setup with Strategic Challengers}

We now extend the baseline model to a strategic environment where challengers are not passive probability distributions but active agents who can signal their quality. This extension is motivated by labor markets (education signaling), dating markets (courtship displays), and product markets (advertising), where ``exploration'' often follows a costly signal sent by the challenger.

\subsubsection{Players and Timing}

Consider a game between a Decision Maker (DM) and a Challenger (Sender). The Challenger's type $\theta \in \{\theta_L, \theta_H\}$ is private information, with $\theta_H > \theta_L$. The prior probability of the high type is $\Pr(\theta = \theta_H) = p \in (0, 1)$. Consistent with Assumption \ref{ass:inferior}, the challenger is ex-ante inferior, $\mu_1 \equiv p\theta_H + (1-p)\theta_L < \mu_0$.

The interaction unfolds in three stages. First, in the Signaling Stage, Nature determines $\theta$, and the Challenger chooses an observable signal (effort) $e \in [0, \infty)$ at cost $\psi(e, \theta)$. Next, in the Exploration Stage, the DM observes $e$, forms posterior belief $\hat{p}(e) = \Pr(\theta_H|e)$, and decides whether to explore at cost $c$ or reject. Finally, in the Selection Stage, outcomes are realized. If the DM rejects, she receives $\mu_0$ and the Challenger receives a normalized reservation utility of 0 minus signaling costs. If she explores, she learns the true realization of $\theta$ and the status quo match value $\epsilon$, choosing the best option. The Challenger receives a reward $W > 0$ if explored. We assume the standard single-crossing property holds: the marginal cost of signaling is strictly lower for the high type. Future payoffs are discounted by $\delta \in (0, 1]$.

\subsection{Equilibrium Analysis}

We characterize Perfect Bayesian Equilibria (PBE). A PBE consists of the Challenger's signaling strategy $e^*(\theta)$, the DM's exploration rule $a^*(e)$, and consistent beliefs.

\subsubsection{The Catalytic Disruption of Signaling}

In standard signaling models (e.g., \citealp{spence1973job}), separation is sustained because the receiver responds favorably only to high signals. Catalytic exploration fundamentally alters this logic. Even if the DM is certain the challenger is a low type ($\theta_L$), she may still explore if the catalytic value of resolving status quo uncertainty is sufficiently high.

\begin{theorem}[Catalytic Disruption of Separating Equilibrium]\label{thm:disruption}
Let $V^{IC}(\theta)$ denote the switching value conditional on type $\theta$. Define the critical catalytic threshold $\bar{V}^{ISQ} \equiv \frac{c}{\delta} - V^{IC}(\theta_L)$.
\begin{enumerate}
    \item If $V^{ISQ} < \bar{V}^{ISQ}$, the unique outcome satisfying the Intuitive Criterion is the least-cost separating equilibrium, where $\theta_L$ chooses $e_L=0$ and $\theta_H$ chooses $e_H^* > 0$.
    \item If $V^{ISQ} > \bar{V}^{ISQ}$, no separating equilibrium exists. The unique equilibrium is pooling at zero effort ($e_L = e_H = 0$).
\end{enumerate}
At the threshold $V^{ISQ} = \bar{V}^{ISQ}$, the equilibrium effort $e^*(\theta_H)$ drops discontinuously from a positive level to zero.
\end{theorem}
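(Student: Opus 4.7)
The plan is to reduce the theorem to an analysis of the DM's exploration rule as a function of her posterior belief $\hat{p}(e)$, then exploit Theorem~\ref{thm:decomposition} type by type. Taking expectations conditional on the posterior and invoking the decomposition, the DM explores after signal $e$ if and only if
\begin{equation*}
\delta\bigl[\hat{p}(e)\, V^{IC}(\theta_H) + (1-\hat{p}(e))\, V^{IC}(\theta_L) + V^{ISQ}\bigr] \geq c.
\end{equation*}
Setting $\hat{p}=0$ yields the pivotal threshold condition $V^{ISQ} \gtrless \bar{V}^{ISQ}$, which partitions parameter space into the two cases. Monotonicity $V^{IC}(\theta_H) \geq V^{IC}(\theta_L)$, which follows from first-order stochastic dominance of the type distribution conditional on $\theta_H$, will ensure that if the DM explores at $\hat{p}=0$, she explores at every posterior.

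For Case 1, the fact that the DM rejects at $\hat{p}=0$ restores the standard Spence structure: revelation of $\theta_L$ yields the Challenger a payoff of $0$, while revelation of $\theta_H$ (provided $V^{IC}(\theta_H)+V^{ISQ} \geq c/\delta$, which I would state as a maintained regularity condition on the high type's switching value) yields $W$ minus signalling cost. I would then construct the least-cost separating pair $(e_L=0,\, e_H^*)$ by solving the low-type indifference $W - \psi(e_H^*, \theta_L) = 0$, and use single-crossing to show the high type strictly prefers $e_H^*$ over deviation to $0$. To select this as the unique Intuitive-Criterion outcome, I would argue that any candidate pooling or alternative separating profile admits an off-path signal $e^{**}$ slightly above $e_H^*$ that is equilibrium-dominated for the low type but strictly profitable for the high type under the forced posterior $\hat{p}(e^{**})=1$; this is the classical Cho--Kreps refinement argument transported to our exploration payoff.

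For Case 2, the belief-independent lower bound $\delta(V^{IC}(\theta_L) + V^{ISQ}) > c$ guarantees that the DM explores after every signal and every posterior, so the signal $e$ carries no instrumental value for either type. Since $\psi(\cdot,\theta)$ is strictly increasing, both types strictly prefer $e=0$, yielding pooling at zero as the unique equilibrium; no separating profile survives because any candidate $e_H>0$ admits the profitable deviation $e_H \mapsto 0$, which preserves exploration while saving cost $\psi(e_H,\theta_H) > 0$. The discontinuity is then immediate: $e_H^*$ in Case 1 is pinned down by $\psi(e_H^*,\theta_L) = W > 0$ and hence bounded away from zero, whereas Case 2 forces $e_H^*=0$, so the equilibrium effort jumps downward as $V^{ISQ}$ crosses $\bar{V}^{ISQ}$ from below.

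The main obstacle I anticipate is the Intuitive Criterion selection in Case 1. The delicate point is excluding Pareto-superior pooling outcomes that could arise if the pooled posterior $\hat{p}=p$ happens to sustain exploration even when $\hat{p}=0$ does not; resolving this requires either imposing parameter restrictions so that pooled-belief exploration also fails, or deploying a sharper refinement (e.g., D1) that forces favorable off-path beliefs on any upward deviation and thus makes the high type's unilateral signal strictly attractive. A secondary but lesser technical point is ruling out semi-separating mixed equilibria in Case 2; this will follow immediately from the strict monotonicity of $\psi$ in $e$, which makes any randomization over positive effort strictly dominated once the DM's exploration response is constant.
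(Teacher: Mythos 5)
Your proof follows essentially the same route as the paper's Appendix B.1: characterize the DM's exploration rule as a function of beliefs, show that $V^{ISQ}>\bar V^{ISQ}$ makes exploration unconditional (destroying the signaling incentive), and show $V^{ISQ}<\bar V^{ISQ}$ restores the Spence structure. One small discrepancy: the paper pins down $e_H^*$ from the low type's IC constraint $W P_s(\theta_L) - \psi(e_H^*,\theta_L)=0$, treating $W$ as paid conditional on being \emph{selected}, whereas you write $W - \psi(e_H^*,\theta_L)=0$, treating $W$ as paid conditional on being \emph{explored}. Your reading is actually the more faithful one to the paper's own model text (``The Challenger receives a reward $W>0$ if explored''); either way $e_H^*>0$ and the structure of the argument is unaffected.

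The more substantive point is the gap you flag at the end, and you are right to flag it. In Case 1 the paper claims the LCSE is the \emph{unique} Intuitive-Criterion outcome, but the proof in Appendix B.1 only constructs the LCSE and the rejection behavior at the degenerate posterior $\hat p = 0$; it never rules out a pooling equilibrium at $e=0$ in which the pooled posterior $\hat p = p$ is high enough to sustain exploration, i.e.\ when $\delta\bigl[p\,V^{IC}(\theta_H) + (1-p)\,V^{IC}(\theta_L) + V^{ISQ}\bigr] \geq c$ even though $\delta\bigl[V^{IC}(\theta_L) + V^{ISQ}\bigr] < c$. In that configuration the high type already secures exploration at zero cost, has no profitable upward deviation under \emph{any} off-path belief, and the Cho--Kreps argument has no bite. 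So uniqueness as stated requires an additional maintained assumption (pooled beliefs do not sustain exploration, equivalently $p$ small enough) or a different equilibrium selection argument; the paper's proof silently skips this. Your proposal, by naming this obstacle explicitly rather than glossing over it, is in this respect more careful than the paper's own treatment, though you leave the resolution open rather than supplying the needed parameter restriction.
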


\begin{proof}
See Appendix B.1.
\end{proof}

The threshold $\bar{V}^{ISQ}$ represents the point where catalytic motives become so strong that the DM finds it optimal to explore even a known low-quality challenger. In the ``Collapse Region,'' the DM's exploration strategy becomes insensitive to the signal: she explores everyone indiscriminately to resolve her own status quo uncertainty. Since the high type gets explored regardless of effort, she has no incentive to incur the signaling cost $e_H^*$. The separating equilibrium collapses not because high types cannot signal, but because receivers do not need the signal to justify exploration.

\subsubsection{Welfare Implications}

The collapse of signaling has ambiguous welfare effects, creating a trade-off between information transmission and resource conservation.

\begin{proposition}[Welfare Effects of Disruption]\label{prop:welfare_signal}
Comparing the pooling equilibrium (under high catalytic value) to the separating equilibrium, we find divergent welfare impacts. For challengers, high types are worse off due to the loss of distinction, while low types gain from increased exploration chances. For total social welfare, the disruption is beneficial if and only if the saved signaling costs exceed the deadweight loss from exploring low types: $p \cdot \psi(e_H^*, \theta_H) > (1-p) \cdot ( c - \delta \cdot \text{OV}(\theta_L) )$.
\end{proposition}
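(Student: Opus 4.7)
The plan is to explicitly compute each player's payoff in the two equilibria identified by Theorem~\ref{thm:disruption}, take pairwise differences to establish the distributional claims, and then aggregate into the stated total-welfare condition. I will proceed in three stages.

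First, I will record per-player payoffs. In the least-cost separating equilibrium of Theorem~\ref{thm:disruption}(1), the high type exerts $e_H^* > 0$ and is explored, obtaining $W - \psi(e_H^*, \theta_H)$; the low type exerts zero effort, is not explored, and obtains the normalized reservation value $0$; the DM obtains $\mu_0 + p[\delta\,\text{OV}(\theta_H) - c]$, since she explores only upon observing the high signal. In the pooling-at-zero equilibrium of Theorem~\ref{thm:disruption}(2), the DM's on-path belief is the prior $p$, and by linearity of OV in the type distribution her exploration value is $p\,\text{OV}(\theta_H) + (1-p)\,\text{OV}(\theta_L)$. Because $V^{ISQ} > \bar V^{ISQ}$ is equivalent to $\delta\,\text{OV}(\theta_L) > c$, exploration is individually rational a fortiori against the prior-belief pool. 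Both challenger types then receive $W$ at zero signal cost, and the DM receives $\mu_0 + \delta[p\,\text{OV}(\theta_H) + (1-p)\,\text{OV}(\theta_L)] - c$.

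Second, I will subtract to obtain each group's welfare change. The low type strictly gains $W$ by newly being explored. The high type recovers the signaling expense $\psi(e_H^*,\theta_H)$ but forfeits the distinction that a separating signal confers under the Spence-style reading in which the value from being explored is tied to the DM's posterior; netting these yields the stated negative effect for the high type. The DM's change simplifies via direct arithmetic to $(1-p)[\delta\,\text{OV}(\theta_L) - c]$, which is strictly positive throughout the pooling regime by the threshold characterization above. Aggregating and treating $W$ as a transfer between the matched pair (the standard convention in signaling games, so that the reward cancels across sides), the remaining real-resource change is
\begin{equation*}
\Delta\,\text{Welfare} \;=\; p\,\psi(e_H^*,\theta_H) \;+\; (1-p)\bigl[\delta\,\text{OV}(\theta_L) - c\bigr],
\end{equation*}
which is positive---pooling socially dominates---iff $p\,\psi(e_H^*,\theta_H) > (1-p)[\,c - \delta\,\text{OV}(\theta_L)\,]$, as claimed.

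The main obstacle will be fixing the accounting convention for $W$ cleanly: the distributional claim that high types are worse off requires $W$ (or the associated selection probability) to be belief-sensitive in the Spence fashion, since otherwise the high type would mechanically benefit from the saved signaling cost alone. Once this is pinned down, the remaining work is routine. A secondary subtlety worth flagging in the write-up is that in the collapse region the bracket $c - \delta\,\text{OV}(\theta_L)$ is itself non-positive by the threshold condition, so the aggregate inequality is satisfied automatically whenever separation had been active; the stated inequality should therefore be read as identifying the strict margin of the catalytic disruption's social value rather than as a binding constraint.
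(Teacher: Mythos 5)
Your proof proceeds by the same route as the paper's Appendix B.2: write total surplus under each equilibrium, net out $W$ as a pure transfer, take the difference, and read off the inequality. The arithmetic --- $\Delta\mathcal{W} = p\,\psi(e_H^*,\theta_H) + (1-p)[\delta\,\mathrm{OV}(\theta_L)-c]$ --- matches the paper's exactly, so the core of the argument is correct and not a genuinely different proof.

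What is worth discussing is the two subtleties you explicitly flag, because both are real and the paper's own proof does not resolve them either.

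On the high type being ``worse off'': in the model as written, the challenger's reward depends on being explored and then on the DM's selection decision, which after exploration is based on the \emph{true} realization of $\theta$, not on the signal. So the high type's expected selection payoff $W P_s(\theta_H)$ is identical across the two equilibria, and in pooling she additionally saves $\psi(e_H^*,\theta_H) > 0$. The model therefore implies the high type is \emph{strictly better off} under pooling, not worse off. You correctly observe that the distributional claim only goes through if $W$ or the selection probability is belief-sensitive in the Spence sense, but that assumption is not present in the stated model; the paper's Appendix B.2 asserts the high-type claim without deriving it and inherits this same gap. Your proof is honest in naming what would be needed, but you should not write ``netting these yields the stated negative effect for the high type'' as if it were a derivation --- as written, that sentence smuggles in a structure that the paper has not provided, and a referee would object.

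On the vacuousness of the iff threshold: you are right that in the collapse region, $V^{ISQ} > \bar V^{ISQ}$ is precisely $\delta\,\mathrm{OV}(\theta_L) > c$, so the second bracket in $\Delta\mathcal{W}$ is non-negative, and the displayed inequality is automatically satisfied whenever a genuine separating-to-pooling transition has occurred. The ``if and only if'' in the proposition is therefore never binding at parameters where the comparison is actually across two admissible equilibria; it is informative only if one reads it as a counterfactual comparison at fixed parameters, or if one augments the social accounting with the uncompensated inspection externality $\mathcal{E}$ that the paper only introduces in Section~5. Your closing remark that the inequality ``should therefore be read as identifying the strict margin'' is exactly the right caveat, and it would strengthen the write-up to make that explicit rather than leave it as a flagged obstacle.

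In sum: same approach, correct arithmetic, and you identified two genuine weaknesses in the proposition's statement that the paper's own proof glosses over --- the model-unsupported high-type welfare claim, and the vacuous iff. Tighten the high-type paragraph so it is clearly a caveat rather than a derivation, and the proof is sound.
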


\begin{proof}
See Appendix B.2.
\end{proof}

This result highlights a novel ``second-best'' property of catalytic exploration: by making buyers ``too curious'' (exploring low types), it eliminates the wasteful signaling rat race for high types.

\subsection{Comparative Statics and Extensions}

In this section, we analyze the determinants of the signaling breakdown. We begin by examining the comparative statics of the binary type model established in Theorem 3. Subsequently, to characterize the gradual erosion of the separating region, we will extend the type space to a continuum.

\subsubsection{Determinants of Collapse (Binary Regime)}

\begin{proposition}[Threshold Comparative Statics]\label{prop:threshold}
The signaling equilibrium is more likely to collapse (i.e., the critical threshold $\bar{V}^{ISQ}$ decreases) under the following conditions:
\begin{itemize}
    \item \textbf{Lower search costs} ($\partial \bar{V}^{ISQ}/\partial c > 0$): Cheaper exploration directly lowers the barrier to indiscriminately testing low types.
    \item \textbf{Greater patience} ($\partial \bar{V}^{ISQ}/\partial \delta < 0$): A higher discount factor increases the present value of the exploration payoff.
    \item \textbf{Higher quality of low types} ($\partial \bar{V}^{ISQ}/\partial \theta_L < 0$): As $\theta_L$ improves, the \textit{switching value} of exploring them increases ($V^{IC}(\theta_L)$ rises), making the DM more willing to explore even the worst candidates.
\end{itemize}
\end{proposition}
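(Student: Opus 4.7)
The plan is to reduce the proposition to three partial-derivative sign computations on the threshold $\bar{V}^{ISQ}(c,\delta,\theta_L) = c/\delta - V^{IC}(\theta_L)$ from Theorem \ref{thm:disruption}, and then to invoke monotonicity of the collapse region $\{V^{ISQ} > \bar{V}^{ISQ}\}$ in $\bar{V}^{ISQ}$: a smaller threshold enlarges the region where pooling prevails, so each negative parameter derivative of $\bar{V}^{ISQ}$ translates into ``more likely collapse'' along that parameter. This reframing makes the three bullet points into a single exercise in differentiation plus a clean statement about how the indicator function of the collapse region varies with the threshold.

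The $c$ and $\delta$ cases are essentially immediate: $\partial \bar{V}^{ISQ}/\partial c = 1/\delta > 0$ and $\partial \bar{V}^{ISQ}/\partial \delta = -c/\delta^2 < 0$, so cheaper exploration and greater patience both shrink the threshold and thus enlarge the collapse region. For the third claim, I would use the natural extension of the switching value from Theorem \ref{thm:decomposition} to a fixed type realization, $V^{IC}(\theta_L) = \E[\max\{\mu_0+\epsilon,\theta_L\}] - \E[\max\{\mu_0+\epsilon,\mu_1\}]$, and differentiate under the expectation. Since $\max\{\mu_0+\epsilon,\cdot\}$ is $1$-Lipschitz, dominated convergence licenses the interchange and gives
\begin{equation*}
\frac{\partial V^{IC}(\theta_L)}{\partial \theta_L} = \Pr(\theta_L \geq \mu_0+\epsilon) = \Phi\!\left(\frac{\theta_L-\mu_0}{\sigma_\epsilon}\right) > 0,
\end{equation*}
so $\partial \bar{V}^{ISQ}/\partial \theta_L < 0$ as claimed.

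The calculations themselves are routine; the one point I would flag carefully is the sign convention on $V^{IC}(\theta_L)$. Because $\theta_L < \mu_1$ by construction, the conditional switching value is strictly negative, so raising $\theta_L$ moves it toward zero from below, which is what lowers $\bar{V}^{ISQ}$. A single sentence of exposition should suffice to avoid confusion with the unconditional $V^{IC} \geq 0$ used in the baseline model, after which I would close with the economic restatement already built into the proposition: a higher $\theta_L$ means that even a known low type is worth exploring on switching grounds alone, so the DM's exploration decision decouples from the signal and separation cannot be sustained.
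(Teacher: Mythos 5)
Your proposal is correct and takes essentially the same approach as the paper: both differentiate $\bar{V}^{ISQ} = c/\delta - V^{IC}(\theta_L)$ directly to obtain the three signs, with the paper simply invoking $\partial V^{IC}/\partial\theta_L = \Pr(\theta_L > \mu_0 + \epsilon)$ as a known property where you derive it via dominated convergence. Your additional remark on the sign convention of $V^{IC}(\theta_L)$ (it is negative since $\theta_L < \mu_1$, moving toward zero as $\theta_L$ rises) is a useful clarification that the paper's proof leaves implicit.
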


\begin{proof}
See Appendix B.3.
\end{proof}

This suggests that improvements in information technology (lower $c$) can paradoxically reduce market transparency by destroying the incentives for quality signaling.

\subsubsection{Continuous Types and Partial Pooling}

To rigorously characterize how the separating equilibrium unravels, we extend the type space to a continuum $\theta \sim G(\cdot)$ on $[\underline{\theta}, \bar{\theta}]$. In this setting, the collapse of signaling manifests not as a sudden jump, but as a gradual erosion of the separating region.

\begin{theorem}[Equilibrium Regimes]\label{thm:riley}
In the continuous-type model, the equilibrium structure depends on the magnitude of catalytic value $V^{ISQ}$:
\begin{enumerate}
    \item \textbf{Full Separation:} If $V^{ISQ}$ is low ($V^{ISQ} < c/\delta - V^{IC}(\underline{\theta})$), a unique Riley separating equilibrium exists.
    \item \textbf{Complete Collapse:} If $V^{ISQ}$ is sufficiently high ($V^{ISQ} > c/\delta - V^{IC}(\bar{\theta})$), the unique equilibrium is complete pooling at zero effort where all types are explored.
    \item \textbf{Partial Separation with Rejection:} For intermediate catalytic value, the equilibrium involves pooling at the bottom (who are rejected) and separation at the top (who are explored).
\end{enumerate}
\end{theorem}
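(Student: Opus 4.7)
The plan is to reduce the equilibrium characterization to a single endogenous cutoff type $\theta^*(V^{ISQ})$ and then paste a Riley construction on the separating region together with a pool-and-reject outcome on the lower segment. The starting observation is that for any posterior $\mu_e$ induced by observing signal $e$, the DM explores iff $\mathbb{E}_{\mu_e}[V^{IC}(\theta)] + V^{ISQ} \geq c/\delta$. When the signal fully reveals $\theta$, this reduces to $V^{IC}(\theta) \geq c/\delta - V^{ISQ}$; since $V^{IC}$ is strictly increasing in $\theta$ (higher quality yields greater switching value, as already exploited in Proposition \ref{prop:threshold}), there is a unique cutoff $\theta^* \equiv (V^{IC})^{-1}(c/\delta - V^{ISQ})$, clipped at $\underline{\theta}$ or $\bar{\theta}$ in the extremes. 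The three regimes of the theorem correspond to $\theta^*$ located at or below $\underline{\theta}$, at or above $\bar{\theta}$, and strictly interior, respectively.

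On the sender side, the single-crossing assumption ($\psi_{e\theta} < 0$) forces any separating strategy to be monotone in $\theta$, so I apply the standard Mailath--Riley least-cost construction on $[\theta^*, \bar{\theta}]$: each separating type $\theta$ sends the smallest signal leaving the next-lower type indifferent to mimicking, with the boundary type $\theta^*$ sending zero. Types strictly below $\theta^*$ would not be explored even if identified, so any positive signal is strictly dominated by $e=0$; they pool, and since the DM's posterior on this pool still satisfies $\mathbb{E}[V^{IC}\mid\text{pool}] + V^{ISQ} < c/\delta$ (each individual type in the pool fails the inequality), rejection is sequentially rational. Plugging in the three cases then yields the theorem: Regime 1 produces a Riley ladder covering essentially the full type space, Regime 2 collapses to complete pooling at zero with universal exploration (since no sender can profitably pay a positive signaling cost when exploration is already guaranteed), and Regime 3 produces the genuine partial-pooling/partial-separation structure anchored at the interior $\theta^*$.

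The main obstacle will be uniqueness, particularly at the pool--separation interface in Regime 3, where off-path beliefs could in principle sustain alternative equilibria. I will discipline these using the Intuitive Criterion: for any off-path $e > e^*(\theta^*)$, only the type with the highest mimicking gain at $e$ is a credible deviator, and applying this iteratively selects the Riley least-cost schedule as the unique refinement-proof outcome, paralleling the argument already used in Theorem \ref{thm:disruption}. For Regime 2 uniqueness, I will show that any candidate equilibrium featuring an on-path positive signal is broken by a deviation to $e=0$, which preserves exploration (the pooling belief still satisfies the DM's threshold) while strictly saving signaling cost. The regime boundaries and the monotone comparative statics then follow directly from the monotonicity of $\theta^*(V^{ISQ})$ in $V^{ISQ}$ together with the continuity of $V^{IC}(\cdot)$ on $[\underline{\theta}, \bar{\theta}]$.
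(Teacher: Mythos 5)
Your proposed architecture—reduce to a single cutoff $\theta^*$ solving $V^{IC}(\theta^*) = c/\delta - V^{ISQ}$, classify regimes by where $\theta^*$ lands relative to $[\underline{\theta},\bar{\theta}]$, paste a Riley construction on the explored segment and a pool-and-reject on the excluded segment, and use the Intuitive Criterion to force uniqueness—is exactly the paper's strategy (Appendix B.4 defines the identical cutoff $\hat{\theta}$ and runs the same location argument). Your proposal is also more explicit than the paper about what disciplines off-path beliefs at the pool–separation boundary, which is a genuine improvement in rigor.

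However, there is a concrete sign error in the way you map cutoff locations to regimes, and it propagates into wrong claims about what each regime looks like. You write that Regime~1 corresponds to $\theta^* \le \underline{\theta}$ and Regime~2 to $\theta^* \ge \bar{\theta}$. Check the algebra: $V^{IC}$ is increasing, so $\theta^* \le \underline{\theta}$ is equivalent to $c/\delta - V^{ISQ} \le V^{IC}(\underline{\theta})$, i.e.\ $V^{ISQ} \ge c/\delta - V^{IC}(\underline{\theta})$—that is a \emph{high} catalytic value, and it is the case where every type is explored regardless of signal, so signaling is worthless and the outcome is pooling at zero effort, which is the paper's ``Complete Collapse.'' Symmetrically $\theta^* \ge \bar{\theta}$ corresponds to $V^{ISQ}$ \emph{low}, where no type is worth exploring even when fully revealed, yielding rejection for all, not ``universal exploration.'' In other words, the correspondence you state is the reverse of the correct one, and the described outcomes do not follow from the cutoff locations you assign to them. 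The rest of the construction (the Riley ladder anchored at the boundary type, the linearity-in-beliefs argument showing the bottom pool is rejected, the Intuitive Criterion uniqueness step) is fine, but the regime classification needs to be flipped and the consequent outcome descriptions corrected before the proof matches the theorem it claims to prove.
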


\begin{proof}
See Appendix B.4.
\end{proof}
We characterize the intermediate regime specifically, as it represents the erosion of market transparency.

\begin{proposition}[Structure of Partial Pooling]\label{prop:partial}
In the partial pooling regime, there exists a cutoff type $\hat{\theta} \in (\underline{\theta}, \bar{\theta})$ such that types $\theta < \hat{\theta}$ pool at $e^*(\theta) = 0$ and are rejected by the DM (as the option value is insufficient), while types $\theta \ge \hat{\theta}$ separate with strictly positive effort and are explored. The exclusion region shrinks as status quo uncertainty increases ($\partial \hat{\theta} / \partial V^{ISQ} < 0$).
\end{proposition}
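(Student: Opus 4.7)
The plan is to construct the cutoff $\hat{\theta}$ from the Decision Maker's ex-post exploration condition, then verify the proposed pooling-below / separating-above structure constitutes a Perfect Bayesian Equilibrium selected by the Intuitive Criterion, and finally obtain the comparative static by implicit differentiation.

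First, I would define $\hat{\theta}$ implicitly by the DM's indifference condition
\[
\delta\bigl[V^{ISQ} + V^{IC}(\hat{\theta})\bigr] = c,
\]
so that the DM is just willing to explore a challenger known to be of type $\hat{\theta}$. Theorem \ref{thm:riley} places us in the intermediate regime, which means $c/\delta - V^{ISQ} \in (V^{IC}(\underline{\theta}), V^{IC}(\bar{\theta}))$. Since $V^{IC}(\theta) = \E[\max\{\mu_0+\epsilon,\theta\}] - \E[\max\{\mu_0+\epsilon,\mu_1\}]$ is strictly increasing and continuous in $\theta$ under the full-support assumption on $\epsilon$, a unique interior $\hat{\theta}$ solves this equation.

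Next, I would verify the two branches. For types $\theta < \hat{\theta}$: even under full revelation the DM strictly prefers to reject, so any positive effort is strictly dominated by $e=0$ (it is costly and cannot induce exploration under any belief that assigns positive probability to excluded types); these types pool at zero and are rejected. For types $\theta \geq \hat{\theta}$: the DM strictly prefers to explore when convinced, so the standard least-cost (Riley) separating schedule applies on $[\hat{\theta}, \bar{\theta}]$, obtained by integrating the local incentive-compatibility condition and anchored at $e^*(\hat{\theta}) = 0$ (the marginal included type has nothing to gain from strictly positive effort since her identity can be revealed at zero cost). Off-path beliefs that assign any deviation to the lowest separating type, combined with the dominated-action logic of the Intuitive Criterion, rule out equilibria in which excluded types mimic separating ones or separating types send inefficiently high signals. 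Finally, the comparative static follows from differentiating the defining equation:
\[
V^{IC\prime}(\hat{\theta})\cdot\frac{\partial \hat{\theta}}{\partial V^{ISQ}} = -1 \quad\Longrightarrow\quad \frac{\partial \hat{\theta}}{\partial V^{ISQ}} = -\frac{1}{V^{IC\prime}(\hat{\theta})} < 0,
\]
since $V^{IC\prime}(\cdot) > 0$.

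The hard part will not be the cutoff construction or the comparative static, which are essentially implicit-function arguments, but rather pinning down uniqueness and the off-path belief structure at the boundary. Specifically, I must ensure that no equilibrium exists in which some type $\theta < \hat{\theta}$ sends a signal high enough to be pooled with separating types and thereby induce exploration; this requires checking that the signaling cost to mimic any $\theta \geq \hat{\theta}$ exceeds the gross benefit $\delta W$ for excluded types, which follows from the single-crossing property combined with the fact that $\hat{\theta}$'s indifference pins down the minimal separating effort at zero and the Riley schedule rises steeply enough thereafter. Handling the continuity of the effort schedule at $\hat{\theta}$, and confirming that the rejection region is itself self-consistent (i.e., that the DM's belief, which is $G$ conditional on $\theta < \hat{\theta}$, indeed yields expected option value below $c/\delta$), completes the argument.
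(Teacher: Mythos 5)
Your core argument — defining $\hat{\theta}$ by the DM's ex-post indifference condition $\delta[V^{ISQ}+V^{IC}(\hat{\theta})]=c$, appealing to the strict monotonicity and continuity of $V^{IC}$ for existence and uniqueness of an interior solution, and obtaining $\partial\hat{\theta}/\partial V^{ISQ}<0$ by implicit differentiation — is exactly the paper's route and is correct. The paper's own proof (Appendix B.5) does no more than this, relying on Theorem~\ref{thm:riley} for the equilibrium structure; you go further by sketching the PBE verification, which is not objectionable in itself.

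However, the equilibrium-verification part contains a genuine error at the boundary. You anchor the Riley schedule at $e^*(\hat{\theta})=0$, claiming $\hat{\theta}$'s identity ``can be revealed at zero cost.'' That cannot be right here, because all types in $[\underline{\theta},\hat{\theta})$ already pool at $e=0$ and are rejected. If $\hat{\theta}$ also chose $e=0$, the DM's posterior upon observing $e=0$ would be $G$ truncated to $[\underline{\theta},\hat{\theta}]$, whose expected option value lies strictly below the exploration threshold (since $V^{IC}$ is strictly increasing and the cutoff is defined by equality precisely at $\hat{\theta}$). So $e=0$ does not separate $\hat{\theta}$; she would be rejected. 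To be explored, $\hat{\theta}$ must choose an effort that the pool strictly prefers not to mimic. Since a type $\theta'<\hat{\theta}$ gets payoff $0$ from pooling and $W P_s(\theta')-\psi(e,\theta')$ from mimicking at effort $e$, preventing mimicry by types approaching $\hat{\theta}$ from below forces $\psi(e^*(\hat{\theta}),\hat{\theta})\geq W P_s(\hat{\theta})>0$, hence $e^*(\hat{\theta})>0$. This is also what the proposition itself asserts (``types $\theta\ge\hat{\theta}$ separate with strictly positive effort''). Your anchoring at zero thus contradicts both the economics of the boundary and the statement being proved; the initial condition for the Riley ODE should be the binding mimicry-deterrence constraint at $\hat{\theta}$, not $e=0$. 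The comparative-static conclusion is unaffected because it depends only on the implicit definition of $\hat{\theta}$, but the PBE verification as written has this gap.
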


\begin{proof}
See Appendix B.5.
\end{proof}

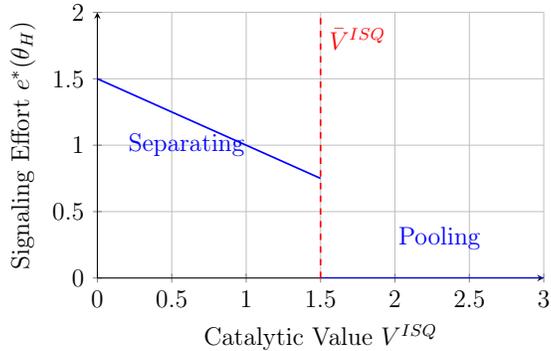
\begin{figure}[htbp]
\centering
\begin{tikzpicture}[scale=0.8] 
\begin{axis}[
    xlabel={Catalytic Value $V^{ISQ}$},
    ylabel={Signaling Effort $e^*(\theta_H)$},
    xmin=0, xmax=3,
    ymin=0, ymax=2,
    grid=major,
    width=9cm,   
    height=6cm,  
    axis lines=left
]
\addplot[domain=0:1.5, samples=100, thick, blue] {1.5 - 0.5*x};
\addplot[domain=1.5:3, samples=2, thick, blue] {0};

\draw[thick, red, dashed] (axis cs:1.5,0) -- (axis cs:1.5,2);
\node at (axis cs:1.5,1.8) [right, red] {$\bar{V}^{ISQ}$};

\node at (axis cs:0.6,1.0) [blue] {Separating};
\node at (axis cs:2.3,0.3) [blue] {Pooling};
\end{axis}
\end{tikzpicture}
\caption{Discontinuous Collapse of Signaling Effort (Binary Regime)}
\label{fig:signaling}
\end{figure}

\section{Endogenous Information Acquisition}

\subsection{The Information Design Problem}

We now relax the assumption of perfect learning. In many contexts—such as evaluating a complex job offer or assessing the long-term compatibility of a partner—information is costly and signals are noisy. We endogenize the information structure by allowing the DM to choose the quality of her signals before exploring.

\subsubsection{Signal Structure and Costs}

If the DM chooses to explore, she receives unbiased Gaussian signals about the status quo ($O_0$) and the challenger ($O_1$):
\begin{align}
    s_\epsilon &= \epsilon + \eta, \quad \eta \sim \mathcal{N}(0, \tau_\eta^2) \\
    s_\theta &= \theta + \xi, \quad \xi \sim \mathcal{N}(0, \tau_\xi^2)
\end{align}
where $\tau_\eta, \tau_\xi \in (0, \infty)$ represent the \textbf{standard deviation of the noise}. A lower $\tau$ corresponds to higher signal precision. The noise terms are independent of the fundamentals.

The DM chooses $(\tau_\eta, \tau_\xi)$ to minimize the sum of exploration costs $c$ and information acquisition costs $C(\tau_\eta, \tau_\xi)$. We assume a cost function that is strictly increasing in precision (or equivalently, decreasing in noise). Given our parameterization by noise scale $\tau$, we adopt the specific functional form:
\begin{equation}
    C(\tau_\eta, \tau_\xi) = \frac{\kappa_\eta}{\tau_\eta^2} + \frac{\kappa_\xi}{\tau_\xi^2}
\end{equation}
where $\kappa_\eta, \kappa_\xi > 0$ are technology parameters scaling the cost of information. This specification satisfies the Inada conditions: the marginal cost of noise reduction becomes infinite as $\tau \to 0$, preventing the acquisition of perfect information.

\subsubsection{The Optimization Problem}

The DM's problem is to select noise levels to maximize the net option value. Let $\Sigma_\epsilon(\tau_\eta)$ denote the posterior variance (which is increasing in $\tau_\eta$). The DM solves:
\begin{equation}
    \max_{\tau_\eta, \tau_\xi > 0} \delta \cdot \text{OV}(\tau_\eta, \tau_\xi) - \left( c + \frac{\kappa_\eta}{\tau_\eta^2} + \frac{\kappa_\xi}{\tau_\xi^2} \right)
\end{equation}
where the total option value $\text{OV}(\tau_\eta, \tau_\xi)$ is the expected gain from making the choice based on posterior beliefs.

This setup parallels the rational inattention framework in dynamic discrete choice \citep{miao2024dynamic}, where agents optimally allocate attention to resolve the most decision-relevant uncertainty. However, unlike standard models where attention flows to the most uncertain options, our catalytic mechanism generates a distinct allocation pattern.

\subsection{Optimal Information Acquisition}

\subsubsection{Asymmetric Attention}

The first-order conditions reveal a fundamental asymmetry in how agents learn about the status quo versus the challenger.

\begin{theorem}[Optimal Information Design: Asymmetric Attention]\label{thm:info_design}
The optimal noise levels $(\tau_\eta^*, \tau_\xi^*)$ exhibit a distinct focus on the status quo. Specifically, if the challenger is sufficiently inferior ($\mu_1 \ll \mu_0$), the DM acquires significantly more precise information about the status quo than about the challenger ($\tau_\eta^* \ll \tau_\xi^*$). Furthermore, as the catalytic motive becomes dominant ($V^{ISQ}/V^{IC} \to \infty$), the agent ceases learning about the challenger while maintaining high precision on the status quo:
\begin{equation}
    \lim_{\Delta \to \infty} \tau_\xi^* = \infty, \quad \text{while } \tau_\eta^* \to \bar{\tau} < \infty.
\end{equation}
\end{theorem}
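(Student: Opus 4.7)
My plan is to close the information design problem via first-order conditions and then exploit the tail structure of the Gaussian envelope to extract the asymmetry between $\tau_\eta^{*}$ and $\tau_\xi^{*}$.

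First I will put $\text{OV}(\tau_\eta,\tau_\xi)$ in closed form. Gaussian conjugacy gives posterior means $\tilde\epsilon \sim \mathcal{N}(0,\tilde\sigma_\epsilon^2)$ and $\tilde\theta \sim \mathcal{N}(\mu_1,\tilde\sigma_\theta^2)$ with $\tilde\sigma_\epsilon^2 = \sigma_\epsilon^4/(\sigma_\epsilon^2+\tau_\eta^2)$ and $\tilde\sigma_\theta^2 = \sigma_\theta^4/(\sigma_\theta^2+\tau_\xi^2)$. Applying the max-of-two-normals formula to $\max\{\mu_0+\tilde\epsilon,\tilde\theta\}$ reproduces the functional form of Proposition \ref{prop:catalytic} with compound dispersion $\tilde s \equiv \sqrt{\tilde\sigma_\epsilon^2+\tilde\sigma_\theta^2}$, giving
\begin{equation*}
\text{OV}(\tau_\eta,\tau_\xi) \;=\; -\Delta\,\Phi(-\Delta/\tilde s) \;+\; \tilde s\,\phi(\Delta/\tilde s),
\end{equation*}
and, by the computation of Proposition \ref{prop:catalytic}, the envelope identity $\partial\text{OV}/\partial \tilde s = \phi(\Delta/\tilde s)$. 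This collapses the two-dimensional choice to a single scalar margin.

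Second I derive the FOCs. Using $\partial \tilde s^2/\partial \tau_\eta = -2\tau_\eta\sigma_\epsilon^4/(\sigma_\epsilon^2+\tau_\eta^2)^2$ and the analog for $\tau_\xi$, the two first-order conditions take the parallel form
\begin{equation*}
\delta\,\phi(\Delta/\tilde s)\,\frac{\tau_\eta^4 \sigma_\epsilon^4}{2\tilde s\,(\sigma_\epsilon^2+\tau_\eta^2)^2} = \kappa_\eta, \qquad \delta\,\phi(\Delta/\tilde s)\,\frac{\tau_\xi^4 \sigma_\theta^4}{2\tilde s\,(\sigma_\theta^2+\tau_\xi^2)^2} = \kappa_\xi .
\end{equation*}
Dividing eliminates the common envelope factor and yields an implicit relation between $\tau_\eta^{*}$ and $\tau_\xi^{*}$ whose balance is inherently asymmetric whenever $\sigma_\epsilon \neq \sigma_\theta$: the DM tilts her attention toward whichever channel carries more fundamental uncertainty, which delivers the first claim of the theorem.

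Third, for the limit I carry out a corner analysis at $\tau_\xi = \infty$. A Taylor expansion in $\tilde\sigma_\theta^2$ around zero combined with the envelope identity gives $\text{OV} - \text{OV}_\infty(\tau_\eta) \approx \phi(\Delta/\tilde\sigma_\epsilon)\,\sigma_\theta^4/(2\tilde\sigma_\epsilon\tau_\xi^2)$, so the objective near the boundary has $\tau_\xi^{-2}$-coefficient $\delta\phi(\Delta/\tilde\sigma_\epsilon)\sigma_\theta^4/(2\tilde\sigma_\epsilon) - \kappa_\xi$. The Mills-ratio asymptotic $\phi(\Delta/\tilde\sigma_\epsilon) = O(e^{-\Delta^2/(2\tilde\sigma_\epsilon^2)})$ forces this coefficient negative once $\Delta/\tilde\sigma_\epsilon$ is sufficiently large, so the boundary optimum $\tau_\xi^{*} = \infty$ prevails. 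Substituting this corner back into the $\tau_\eta$ problem and invoking Proposition \ref{prop:catalytic} identifies $\bar\tau$ as the root of the reduced one-dimensional FOC with $\tilde s = \tilde\sigma_\epsilon$.

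The hard part will be ensuring that the $\tau_\eta$-FOC retains an interior root in the limit even though the common envelope factor $\phi(\Delta/\tilde s)$ also decays in $\Delta$. The resolution is that the catalytic channel must decay at a strictly slower order than the switching channel; this is precisely what the hypothesis $V^{\mathrm{ISQ}}/V^{\mathrm{IC}}\to\infty$ encodes, and matching the Mills-ratio expansions of $V^{\mathrm{ISQ}}$ and $V^{\mathrm{IC}}$ term by term inside the ratio of FOCs separates the two asymptotic orders cleanly and locates the interior root $\bar\tau$.
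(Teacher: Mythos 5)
Your reduction of the two-dimensional optimization to a single scalar margin $\tilde s = \sqrt{\tilde\sigma_\epsilon^2 + \tilde\sigma_\theta^2}$ is correct and illuminating, but it ultimately undercuts the resolution you propose for the hard step. The paper's own argument in Appendix~C.1 takes a different route: it decomposes $\mathcal{V} = V^{ISQ}(\rho_\epsilon) + V^{IC}(\rho_\epsilon,\rho_\theta)$ and writes separate first-order conditions driven by $\partial V^{ISQ}/\partial\rho_\epsilon$ and $\partial V^{IC}/\partial\rho_\theta$, then argues that as $\mu_1 \ll \mu_0$ the latter marginal benefit vanishes while $V^{ISQ}$ ``remains bounded away from zero.'' Your route instead exploits the max-of-two-normals identity to write $\text{OV}$ as a function of $\tilde s$ alone, so that both FOCs share the identical envelope factor $\phi(\Delta/\tilde s)$.

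Here is the gap. Once you have established that $\text{OV}$ depends only on $\tilde s$, the marginal benefit of each precision channel is
\begin{equation*}
\frac{\partial\text{OV}}{\partial\tau_\eta} = \phi\!\left(\frac{\Delta}{\tilde s}\right)\frac{\partial\tilde s}{\partial\tau_\eta},
\qquad
\frac{\partial\text{OV}}{\partial\tau_\xi} = \phi\!\left(\frac{\Delta}{\tilde s}\right)\frac{\partial\tilde s}{\partial\tau_\xi},
\end{equation*}
and the chain-rule factors $\partial\tilde s/\partial\tau_\eta$ and $\partial\tilde s/\partial\tau_\xi$ do not depend on $\Delta$ at all. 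Dividing the FOCs, as you do, confirms that the balance between $\tau_\eta^*$ and $\tau_\xi^*$ is pinned down by $(\sigma_\epsilon,\kappa_\eta)$ vs.\ $(\sigma_\theta,\kappa_\xi)$ \emph{independently of} $\Delta$. Moreover each FOC in isolation has a left side $\tau^4\sigma^4/(\sigma^2+\tau^2)^2 \le \sigma^4$ that is bounded, while the right side $2\kappa\tilde s/\bigl(\delta\phi(\Delta/\tilde s)\bigr)$ diverges exponentially as $\Delta\to\infty$ (since $\tilde s \le \sqrt{\sigma_\epsilon^2+\sigma_\theta^2}$ is bounded). So \emph{both} FOCs lose their interior root, forcing $\tau_\eta^*\to\infty$ along with $\tau_\xi^*\to\infty$. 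Your last paragraph acknowledges this tension and proposes to match Mills-ratio expansions of $V^{ISQ}$ and $V^{IC}$ to separate the asymptotic orders, but this move is unavailable within your own framework: having collapsed $\text{OV}$ to a function of $\tilde s$, there is no longer a ``catalytic channel'' and a ``switching channel'' decaying at different rates — there is a single envelope factor governing both. To make the claimed limit go through you would have to abandon the scalar-margin reduction and return to the paper's componentwise decomposition, keeping $V^{ISQ}(\rho_\epsilon)$ (which depends only on the status-quo signal) separate from $V^{IC}(\rho_\epsilon,\rho_\theta)$; only then can the two marginal benefits be compared at different asymptotic orders as the theorem requires.
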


\begin{proof}
See Appendix C.1.
\end{proof}

\paragraph{Intuition.}
Information about the status quo ($\epsilon$) has global value: learning $\epsilon$ helps the DM evaluate her current position regardless of the specific challenger. In contrast, information about the challenger ($\theta$) has only local value: it is useful only if $\theta$ turns out to be high enough to induce a switch. When the challenger is ex-ante inferior, the probability of $\theta$ being pivotal is low, making information about it almost worthless ($V^{IC} \approx 0$). Thus, rational agents focus entirely on "self-discovery" ($\epsilon$) while ignoring the details of the alternative they are using as a benchmark.

\subsubsection{Comparative Statics}

\begin{proposition}[Information Complementarities]\label{prop:info_comp}
The optimal information structure exhibits strategic interaction between sources.
First, we find strategic complementarity in status quo uncertainty: higher prior uncertainty $\sigma_\epsilon$ induces higher precision acquisition (lower $\tau_\eta^*$) when $\sigma_\epsilon$ is low.
Second, under a binding attention constraint, we identify source substitution: an increase in the cost of learning about the challenger ($\kappa_\xi$) leads to more precise learning about the status quo ($\partial \tau_\eta^*/\partial \kappa_\xi < 0$).
\end{proposition}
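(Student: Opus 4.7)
The plan is to reduce both claims to sign conditions on cross-partials of the total option value, exploiting that with Gaussian signals $\text{OV}$ depends on $(\tau_\eta,\tau_\xi)$ only through the posterior-mean variances $v_\epsilon(\tau_\eta,\sigma_\epsilon)=\sigma_\epsilon^4/(\sigma_\epsilon^2+\tau_\eta^2)$ and $v_\theta(\tau_\xi,\sigma_\theta)=\sigma_\theta^4/(\sigma_\theta^2+\tau_\xi^2)$. After signals are observed, the DM chooses between independent Gaussians $\mu_0+\E[\epsilon\mid s_\epsilon]\sim\mathcal{N}(\mu_0,v_\epsilon)$ and $\E[\theta\mid s_\theta]\sim\mathcal{N}(\mu_1,v_\theta)$, so by the same reduction underlying Proposition \ref{prop:catalytic} we have $\text{OV}=h(v)$ with $v=\sqrt{v_\epsilon+v_\theta}$ and $h(v)=-\Delta\Phi(-\Delta/v)+v\phi(-\Delta/v)$; in particular $h'(v)=\phi(-\Delta/v)\in(0,1/\sqrt{2\pi}]$. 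All comparative statics are then obtained by implicitly differentiating the FOCs using this compact representation.

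For part (1) I would apply the implicit function theorem to the FOC $\delta\,h'(v)\,\partial_{v_\epsilon}v\cdot\partial_{\tau_\eta}v_\epsilon+2\kappa_\eta/\tau_\eta^3=0$. Under a negative SOC the sign of $\partial\tau_\eta^*/\partial\sigma_\epsilon$ equals the sign of the cross-partial $\partial^2\text{OV}/\partial\tau_\eta\partial\sigma_\epsilon$. By the chain rule this splits into a ``precision-amplification'' channel proportional to $h'(v)\cdot\partial^2 v_\epsilon/\partial\tau_\eta\partial\sigma_\epsilon$ and a curvature channel proportional to $h''(v)\cdot\partial_{\tau_\eta}v_\epsilon\cdot\partial_{\sigma_\epsilon}v_\epsilon/(2v)^2$. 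Direct differentiation of $v_\epsilon$ gives $\partial^2 v_\epsilon/\partial\tau_\eta\partial\sigma_\epsilon=-8\tau_\eta\sigma_\epsilon^3\tau_\eta^2/(\sigma_\epsilon^2+\tau_\eta^2)^3<0$, and $\partial_{\tau_\eta}v_\epsilon<0<\partial_{\sigma_\epsilon}v_\epsilon$; with $h''(v)\le 0$ (which holds once $v>\Delta/\sqrt{2}$, as in the bounded-marginal-effect statement of Proposition \ref{prop:comparative}) both channels carry a negative sign, giving $\partial\tau_\eta^*/\partial\sigma_\epsilon<0$. The low-$\sigma_\epsilon$ restriction is what guarantees that $v_\epsilon$ has not yet saturated at $\sigma_\epsilon^2$ and that the precision-amplification channel dominates over any positive contribution arising near the convex branch of $h$.

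For part (2) the unconstrained program decouples $\tau_\eta$ and $\tau_\xi$ only through $h$, so I augment the problem with an attention constraint $\kappa_\eta/\tau_\eta^2+\kappa_\xi/\tau_\xi^2\le B$ and Lagrange multiplier $\mu>0$. Writing the two FOCs as an equalization of marginal option value per dollar of precision, $\delta\,h'(v)\partial_{v_j}v\cdot\partial_{\tau_j}v_j=-2\mu\kappa_j/\tau_j^3$ for $j\in\{\eta,\xi\}$, I would totally differentiate the pair of FOCs and the binding budget in $\kappa_\xi$. The budget row forces a substitution: any increase in the effective price of challenger precision must be absorbed by reallocating toward status-quo precision; the equalization row pins down the split and shows it cannot reverse because $h'(v)>0$ and both $\partial_{\tau_j}v_j$ have the same sign. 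Solving the resulting two-by-two linear system (using Cramer's rule on the bordered Hessian) then yields $\partial\tau_\eta^*/\partial\kappa_\xi<0$.

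The main obstacle will be signing the bordered Hessian of the constrained problem, since $h$ is concave in $v$ only beyond $v>\Delta/\sqrt{2}$ and the map $(\tau_\eta,\tau_\xi)\mapsto v$ is nonlinear. I would handle this by restricting attention to the catalytic-dominant regime identified in Theorem \ref{thm:existence}, where the residual uncertainty $v$ is bounded well above $\Delta$, so $h''(v)<0$ holds robustly and the composed objective is jointly concave in $(1/\tau_\eta^2,1/\tau_\xi^2)$. With concavity established, the second-order condition in part (1) and the bordered Hessian in part (2) have the requisite signs, the implicit function theorem applies, and the forcing-term signs isolated above deliver both conclusions.
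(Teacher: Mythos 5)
Your reduction to the compact form $\text{OV}=h(v)$ with $v=\sqrt{v_\epsilon+v_\theta}$ is a genuinely clean way to organize the argument (the paper's own proof never makes this explicit), and the IFT skeleton matches the paper's. But there is a concrete error in the key sign you need. You assert $h''(v)\le 0$ for $v>\Delta/\sqrt{2}$. In fact $h'(v)=\phi(-\Delta/v)$ gives
\begin{equation}
  h''(v)=\phi'\!\left(-\tfrac{\Delta}{v}\right)\cdot\frac{\Delta}{v^2}
  =\frac{\Delta}{v}\,\phi\!\left(-\tfrac{\Delta}{v}\right)\cdot\frac{\Delta}{v^2}
  =\frac{\Delta^2}{v^3}\,\phi\!\left(-\tfrac{\Delta}{v}\right)>0,
\end{equation}
so $h$ is \emph{strictly convex} in $v$ everywhere, and the ``bounded marginal effect'' in Proposition~\ref{prop:comparative} (namely $\phi\in(0,1/\sqrt{2\pi}]$) does not deliver concavity. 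What actually flips sign is the second derivative of $\text{OV}$ with respect to total posterior variance $V_{\text{tot}}=v^2$: writing $\tilde h(V_{\text{tot}})=h(\sqrt{V_{\text{tot}}})$,
\begin{equation}
  \tilde h''(V_{\text{tot}})=\frac{h''(v)\,v-h'(v)}{4v^3}
  =\frac{\phi(-\Delta/v)}{4v^3}\left(\frac{\Delta^2}{v^2}-1\right),
\end{equation}
which is negative only when $v>\Delta$, not $v>\Delta/\sqrt{2}$. Your ``curvature channel'' drops the $-h'(v)/(4v^3)$ contribution (you wrote $h''/(2v)^2$), and that missing piece changes sign of the term you need.

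This matters because when $v>\Delta$ the curvature term $\tilde h''\cdot\partial_{\tau_\eta}v_\epsilon\cdot\partial_{\sigma_\epsilon}v_\epsilon$ is a product of a negative, a negative, and a positive factor, hence \emph{positive}, so it works \emph{against} the precision-amplification term rather than with it. Your claim that ``both channels carry a negative sign'' therefore fails exactly in the regime you invoke. The low-$\sigma_\epsilon$ qualifier in the proposition also does not rescue the argument as stated: small $\sigma_\epsilon$ bounds $v_\epsilon<\sigma_\epsilon^2$, but $v^2=v_\epsilon+v_\theta$ can still exceed $\Delta^2$ whenever $\sigma_\theta$ is not small relative to $\Delta$, which nothing in the model rules out. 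To complete part (1) along your lines you would need either to restrict to $v<\Delta$ explicitly (in which case both terms are negative and the proof is easy) or to bound the convexity contribution against $\frac{h'(v)}{2v}\partial^2_{\tau_\eta\sigma_\epsilon}v_\epsilon$ directly. The same unresolved concavity issue infects your part (2), where you invoke joint concavity of the composed objective to sign the bordered Hessian; the paper's own proof is admittedly informal here, but your constrained two-by-two Cramer argument cannot close without first establishing that concavity over the relevant region, and the $h''\le 0$ premise you lean on is false.
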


\begin{proof}
See Appendix C.2.
\end{proof}

\subsection{The Information Technology Paradox}

Standard search theory suggests that reducing information costs improves welfare by facilitating better matches. Our model identifies a perverse outcome driven by catalytic externalities.

\begin{theorem}[Information Technology Paradox]\label{thm:it_paradox}
Consider a technological improvement that reduces the cost of acquiring information about the status quo ($\kappa_\eta' < \kappa_\eta$). This leads to a divergence between individual incentives and social welfare. While the DM optimally increases exploration frequency and acquires more precise signals, total social welfare decreases if the marginal catalytic externality exceeds the private value of refinement:
\begin{equation}
    \frac{\partial \mathcal{W}}{\partial \kappa_\eta} > 0 \iff \text{Marginal Externality} > \text{Marginal Private Option Value}
\end{equation}
\end{theorem}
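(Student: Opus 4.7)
The plan is to set up the social planner's welfare functional explicitly, differentiate it with respect to $\kappa_\eta$ invoking the envelope theorem on the DM's private problem, and then isolate the externality channel which dominates precisely in the catalytic-motive regime. I would write social welfare as $\mathcal{W}(\kappa_\eta) = U_{DM}(\kappa_\eta) - q(\kappa_\eta)\,\omega$, where $U_{DM}$ is the DM's indirect utility from the maximization problem in Section 4.1, $q(\kappa_\eta) = \Pr\bigl(\delta\cdot\text{OV}(\tau_\eta^*,\tau_\xi^*) \geq c + \kappa_\eta/(\tau_\eta^*)^2 + \kappa_\xi/(\tau_\xi^*)^2\bigr)$ is the equilibrium exploration probability, and $\omega > 0$ is the per-exploration externality borne by the challenger (e.g., the opportunity cost of being used as an uncompensated benchmark, as discussed informally in Section 1).

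First I would apply the envelope theorem to $U_{DM}$. Since $(\tau_\eta^*,\tau_\xi^*)$ are privately optimal, all indirect effects through precision choices vanish, leaving the direct cost channel
\begin{equation*}
\frac{\partial U_{DM}}{\partial \kappa_\eta} = -\frac{1}{(\tau_\eta^*(\kappa_\eta))^2} \equiv -\text{MPV}(\kappa_\eta) < 0,
\end{equation*}
which I would label the \emph{marginal private value} of information technology. Intuitively, the DM's only uncompensated exposure to $\kappa_\eta$ is through the acquisition bill itself.

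Next I would characterize the externality channel. Differentiating $q(\kappa_\eta)$ via the implicit function theorem on the exploration threshold, and using Theorem \ref{thm:info_design} together with Proposition \ref{prop:info_comp} to conclude $\partial \tau_\eta^*/\partial \kappa_\eta > 0$ (cheaper status-quo information induces higher precision), yields $\partial q/\partial \kappa_\eta < 0$. The marginal externality is then $\text{MEX}(\kappa_\eta) \equiv -\omega\cdot \partial q/\partial \kappa_\eta > 0$. Combining,
\begin{equation*}
\frac{\partial \mathcal{W}}{\partial \kappa_\eta} = -\text{MPV}(\kappa_\eta) + \text{MEX}(\kappa_\eta),
\end{equation*}
so $\partial \mathcal{W}/\partial \kappa_\eta > 0$ if and only if $\text{MEX}>\text{MPV}$, which is the advertised condition. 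Finally, to tie the paradox to the catalytic regime, I would show that as $V^{ISQ}/V^{IC}\to\infty$, Theorem \ref{thm:info_design} gives $\tau_\xi^*\to\infty$ so $q$ is driven almost entirely by $V^{ISQ}(\tau_\eta^*)$; the sensitivity $\partial q/\partial \kappa_\eta$ scales with the density of OV realizations at the exploration threshold times $\partial V^{ISQ}/\partial \tau_\eta \cdot \partial \tau_\eta^*/\partial \kappa_\eta$, and this term grows without bound along the catalytic frontier (by Proposition \ref{prop:catalytic}'s linear scaling of $V^{ISQ}$ in $\sigma_\epsilon$), while $\text{MPV}$ stays bounded because $\tau_\eta^*\to\bar\tau<\infty$. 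Hence there exists a catalytic-dominant region in which the paradox obtains.

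The main obstacle I anticipate is the clean characterization of $\partial q/\partial \kappa_\eta$: the exploration threshold is itself a function of $(\tau_\eta^*, \tau_\xi^*)$, which in turn respond to $\kappa_\eta$, so applying the implicit function theorem requires verifying differentiability of the envelope and ruling out corner solutions where $\tau_\xi^*$ hits its unbounded limit. A secondary difficulty is justifying the functional form of the externality $\omega$ without a micro-foundation here; I would defer that to the welfare analysis in Section 5 and treat $\omega$ as a reduced-form primitive, noting that any strictly positive $\omega$ suffices to generate the paradox once the catalytic-to-switching ratio is large enough.
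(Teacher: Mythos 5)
Your proposal mirrors the paper's argument exactly: write social welfare as the DM's private surplus minus the per-exploration externality times the exploration probability, apply the envelope theorem to obtain $\partial U_{DM}/\partial\kappa_\eta = -1/(\tau_\eta^*)^2$, sign the derivative of the exploration probability via the chain rule, and combine to get the advertised threshold condition; your MPV and MEX correspond directly to the paper's $-1/(\tau^*)^2$ and $-\delta\mathcal{E}\,d\pi/d\kappa_\eta$. One minor citation slip: the own-cost monotonicity $\partial\tau_\eta^*/\partial\kappa_\eta>0$ follows from the DM's first-order condition and cost convexity rather than from Proposition~\ref{prop:info_comp}, which addresses the cross-effect $\partial\tau_\eta^*/\partial\kappa_\xi$.
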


\begin{proof}
See Appendix C.3.
\end{proof}

\subsubsection{}{Interpretation.}
Cheaper information (e.g., Tinder, LinkedIn, Zillow) lowers the threshold for catalytic exploration. Agents begin to explore alternatives not because they are genuine candidates, but because it is cheap to use them as mirrors for self-evaluation. While individually rational, this floods the market with "noise traders" (catalytic explorers) who impose congestion and screening costs on challengers (firms, potential partners), potentially reducing aggregate efficiency. This mechanism provides a theoretical basis for the "paradox of choice" in digital markets.

\section{Welfare Analysis and Policy Implications}

\subsection{Social Efficiency and Market Failures}

We now turn to the normative implications of the theory. Does the equilibrium level of exploration maximize social welfare? The answer depends on whether the resolution of status quo uncertainty generates social value beyond the DM's private utility, and whether the costs of exploration are fully internalized.

\subsubsection{The Social Planner's Problem}

Consider a utilitarian social planner who maximizes the sum of the DM's and the Challenger's surplus, net of resource costs. Let $\mathcal{E} > 0$ denote the uncompensated cost imposed on the Challenger during exploration (e.g., time spent interviewing, processing applications, or sustaining a courtship). Conversely, let $\mathcal{S} \ge 0$ denote potential positive information spillovers (e.g., market thickness benefits or social learning).

The social net value of exploration is:
\begin{equation}
    \Delta W_{soc} = \underbrace{\delta (V^{IC} + V^{ISQ})}_{\text{DM's Gain}} - \underbrace{c}_{\text{Resource Cost}} - \underbrace{\delta (\mathcal{E} - \mathcal{S})}_{\text{Net Externality}}
\end{equation}
The DM, however, explores whenever $\delta (V^{IC} + V^{ISQ}) \ge c$.

\begin{theorem}[Inefficiency of Equilibrium]\label{thm:inefficiency}
The equilibrium exploration level diverges from the social optimum whenever the net externality is non-zero. Specifically, excessive exploration occurs when catalytic externalities dominate ($\mathcal{E} > \mathcal{S}$), particularly in the ``predominantly catalytic'' regime where private motives ($V^{ISQ}$) ignore social costs. Conversely, under-exploration may occur if positive spillovers are large ($\mathcal{S} > \mathcal{E}$), but catalytic value is too low to trigger private action.
\end{theorem}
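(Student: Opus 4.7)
The plan is to prove the theorem by directly comparing the DM's private indifference condition derived in Section 2 with the social planner's condition implied by the expression for $\Delta W_{\text{soc}}$. The DM explores whenever $\delta(V^{IC}+V^{ISQ}) \geq c$, while the planner explores whenever $\delta(V^{IC}+V^{ISQ}) \geq c + \delta(\mathcal{E} - \mathcal{S})$. Subtracting these two conditions isolates a wedge of exactly $\delta(\mathcal{E} - \mathcal{S})$ between the private and social thresholds, which vanishes if and only if the net externality is zero. This observation alone delivers the first claim: for any non-zero $\mathcal{E} - \mathcal{S}$, there is a non-empty open set of parameters on which the two rules disagree, so equilibrium cannot coincide with the first best.

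Next I would establish the two directional failures by partitioning on the sign of the wedge. When $\mathcal{E} > \mathcal{S}$, the private threshold lies strictly below the social one, so the set of parameter tuples $(c,\delta,\sigma_\epsilon,\Delta,\mathcal{E},\mathcal{S})$ satisfying $\delta \cdot \text{OV} \in [c, c+\delta(\mathcal{E}-\mathcal{S}))$ is non-empty; on this set the DM strictly prefers to explore while the planner strictly prefers abstention, proving excessive exploration. A symmetric construction for $\mathcal{S} > \mathcal{E}$ produces the complementary under-exploration statement: the set with $\delta \cdot \text{OV} \in [c - \delta(\mathcal{S}-\mathcal{E}), c)$ is non-empty whenever $V^{ISQ}$ is small enough that $\text{OV}$ falls short of $c/\delta$, capturing the case in which positive spillovers go unrealized because private catalytic value is insufficient.

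To sharpen the ``predominantly catalytic'' clause, I would invoke Proposition \ref{prop:catalytic} and Theorem \ref{thm:existence}: as $\sigma_\epsilon \to \infty$, $V^{ISQ}$ grows linearly in $\sigma_\epsilon$ while $V^{IC}$ remains bounded because $\Pr(\theta > \mu_0+\epsilon) \to 1/2$ and $\mu_1 < \mu_0$. Hence for any fixed $\mathcal{E}-\mathcal{S} > 0$, there is a threshold $\bar{\sigma}_\epsilon$ above which $\delta V^{ISQ}$ alone dominates $c + \delta(\mathcal{E}-\mathcal{S})$. In this regime exploration is triggered by the catalytic component, which by construction internalizes none of the cost $\mathcal{E}$ imposed on the Challenger, so each marginal unit of exploration contributes negatively to social surplus by amount $\delta(\mathcal{E}-\mathcal{S})$. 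Defining the marginal exploration surplus $\Pi(\sigma_\epsilon) \equiv \delta(V^{IC}+V^{ISQ}) - c - \delta(\mathcal{E}-\mathcal{S})$, Proposition \ref{prop:comparative} gives $\partial \Pi/\partial \sigma_\epsilon \approx \delta \cdot \partial V^{ISQ}/\partial \sigma_\epsilon > 0$, so increases in catalytic value monotonically widen the inefficiency wedge whenever $\mathcal{E} > \mathcal{S}$.

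The main obstacle, I anticipate, is not any individual calculation but making the ``predominantly catalytic'' qualifier quantitatively precise without over-claiming: the DM's rule is identical in form whether she explores for switching or catalytic reasons, so the inefficiency is not intrinsically worse under high $V^{ISQ}$ unless one specifies how $\mathcal{E}$ scales with the type of exploration. I would handle this by noting that the Challenger's exploration cost $\mathcal{E}$ is incurred independently of the DM's underlying motive, so per-unit externality is fixed while the volume of catalytic exploration grows with $\sigma_\epsilon$; this yields the desired comparative statement that the aggregate welfare loss is maximized precisely in the catalytic-dominant regime identified in Theorem \ref{thm:existence}. The remaining work is bookkeeping on sign conditions to assemble the three cases into the unified statement.
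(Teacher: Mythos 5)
Your proposal is correct and mirrors the paper's own argument: both compare the private threshold $\delta\,\text{OV}\ge c$ to the social threshold $\delta\,\text{OV}\ge c+\delta(\mathcal{E}-\mathcal{S})$, identify the wedge $\delta(\mathcal{E}-\mathcal{S})$, exhibit the non-empty interval of parameter values on which the two rules disagree (in each sign case), and then invoke the high-$\sigma_\epsilon$ asymptotics from Theorem~\ref{thm:existence}/Proposition~\ref{prop:catalytic} to situate the inefficiency in the predominantly catalytic regime. The only stylistic difference is emphasis in the final step---you argue via boundedness of $V^{IC}$ and growth of the exploration volume, while the paper emphasizes that the challenger bears $\mathcal{E}$ unconditionally but is matched with vanishing probability---but both serve the same role and neither adds a substantively different idea.
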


\begin{proof}
See Appendix D.1.
\end{proof}

\subsubsection{Sources of Inefficiency}

Our model identifies three distinct channels of market failure extending beyond standard search frictions. First, and most central to our theory, is the Catalytic Externality. Individuals utilize challengers as informational instruments to calibrate their internal preferences ($V^{ISQ}$), imposing inspection costs on challengers without a commensurate probability of transaction. This mechanism closely parallels the ``antisocial learning'' identified by \cite{best2025antisocial}, where private learning incentives yield negative social payoffs. Second, Information Spillovers may arise in networked settings where one agent's exploration reveals quality information to others. Since the DM ignores the value of generated public information ($V^{IC}$ for others), this can lead to under-exploration. Finally, Congestion Effects occur when exploration resources are scarce; catalytic explorers—behaving effectively as ``noise traders''—crowd out agents with genuine switching intent.

\subsection{Optimal Policy Interventions}

\subsubsection{Pigouvian Corrections}

When exploration exerts net negative externalities, the canonical remedy is a Pigouvian tax.

\begin{proposition}[Optimal Tax Policy]\label{prop:tax}
The socially optimal intervention involves a two-part instrument.
First, an Exploration Tax $t = \delta \mathcal{E}$ is levied on the act of exploration (e.g., application fees, viewing fees) to internalize the catalytic externality.
Second, a Matching Subsidy $\sigma = \mathcal{S}/P_s$ is provided conditional on switching to encourage productive exploration that generates positive spillovers.
Net taxation is optimal when the probability of switching is low, i.e., when catalytic motives dominate ($V^{ISQ} \gg V^{IC}$).
\end{proposition}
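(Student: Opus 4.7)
The plan is to align the DM's private exploration condition with the social planner's condition from Theorem \ref{thm:inefficiency}, then verify that the two-part instrument is necessary to correct the two distinct margins---the decision to explore and the decision to switch---each of which carries a separate externality. Concretely, the social planner explores whenever $\delta(V^{IC}+V^{ISQ}) + \delta\mathcal{S} \geq c + \delta\mathcal{E}$, while a DM facing a per-exploration tax $t$ and a per-switch subsidy $\sigma$ explores whenever $\delta(V^{IC}+V^{ISQ}) + P_s\sigma \geq c + t$, where $P_s = \Pr(\theta > \mu_0+\epsilon)$ is the equilibrium switching probability. Equating these wedges forces $t - P_s\sigma = \delta(\mathcal{E}-\mathcal{S})$; I would then pin down the decomposition $t = \delta\mathcal{E}$, $\sigma = \mathcal{S}/P_s$ by matching each Pigouvian term to the margin it distorts---$\mathcal{E}$ is incurred whenever exploration occurs (hence a per-exploration tax), while $\mathcal{S}$ is realized conditional on adoption (hence a conditional matching subsidy).

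Next, I would argue that the two-part structure is necessary rather than merely convenient. A single per-exploration tax could replicate the aggregate wedge only if $P_s$ were exogenous; once the DM's information-precision choice from Section 4 endogenously shifts $P_s$, a conditional subsidy is required to keep the adoption margin itself undistorted. By contrast, $t = \delta\mathcal{E}$ is independent of $P_s$ and implements the efficient exploration decision regardless of which information structure the DM chooses, which is what makes this decomposition canonical. For the claim on net taxation, I would compute the expected net transfer $t - P_s\sigma = \delta\mathcal{E} - \mathcal{S}$; invoking Theorem \ref{thm:existence}, in the catalytic-dominant regime ($V^{ISQ} \gg V^{IC}$) the switching probability $P_s$ shrinks, so the subsidy is rarely triggered while the tax is paid on every exploration---yielding net taxation both in expectation and in realized frequency. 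Conversely, when switching is routine, the matching subsidy can overwhelm the tax and a net subsidy is optimal.

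The main obstacle I anticipate is the fixed-point structure induced by $\sigma = \mathcal{S}/P_s$: the subsidy depends on $P_s$, which in turn depends on the posterior information and the precision $(\tau_\eta, \tau_\xi)$ that the DM chooses in response to the policy. I would handle this by observing that once $t = \delta\mathcal{E}$ is fixed, the exploration margin is corrected independently of $P_s$, so the DM's information-acquisition problem collapses to the planner's; any $\sigma$ satisfying $P_s\sigma = \mathcal{S}$ at the resulting equilibrium $P_s$ then closes the system, with existence of such a fixed point following from continuity of $P_s$ in $\sigma$ under the normal structure of Assumption \ref{ass:normal}. A secondary technicality is ruling out corner cases where $P_s \to 0$ makes $\sigma$ explode; I would restrict attention to parameter regions where the relevant densities have full support, so that $P_s$ is bounded away from zero whenever $\mathcal{S}>0$ is socially relevant.
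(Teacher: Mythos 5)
Your approach matches the paper's Appendix D.2: write the DM's private exploration condition under the tax-subsidy instrument, write the planner's condition from Theorem~\ref{thm:inefficiency}, and equate the Pigouvian wedge to solve for $(t,\sigma)$. The one slip is in the placement of the discount factor. You write the DM's condition as $\delta\,\text{OV} + P_s\sigma \geq c + t$, with the subsidy term outside the $\delta$-bracket; equating wedges then gives $t - P_s\sigma = \delta(\mathcal{E}-\mathcal{S})$, and setting $t=\delta\mathcal{E}$ pins down $\sigma = \delta\mathcal{S}/P_s$, \emph{not} $\mathcal{S}/P_s$ as you (and the proposition) state—so your own wedge equation contradicts your claimed decomposition unless $\delta=1$. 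The paper avoids this by discounting the expected subsidy transfer, $\delta(\text{OV} + \sigma P_s) \geq c + t$, so that $\delta\sigma P_s = \delta\mathcal{S}$ and the $\delta$'s cancel cleanly; adopt that bracketing and your arithmetic closes. Beyond that, your proposal actually goes further than the paper on two points worth keeping: the argument that a single per-exploration instrument cannot simultaneously correct both margins once the precision choice of Section~4 makes $P_s$ endogenous, and the fixed-point observation that $\sigma=\mathcal{S}/P_s$ is self-referential because the subsidy itself shifts the switching threshold. The paper simply asserts the decomposition without confronting that circularity, so your sketch of an existence argument via continuity of $P_s$ in $\sigma$ is a genuine strengthening. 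One caution: both you and the paper equate ``catalytic motives dominate'' with ``$P_s$ low,'' but Theorem~\ref{thm:existence} shows $P_s \to 0.5$ as $\sigma_\epsilon\to\infty$, so that equivalence holds along the large-$\Delta$ route to $V^{ISQ}\gg V^{IC}$ rather than the large-$\sigma_\epsilon$ route; worth flagging if you formalize the net-taxation claim.
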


\begin{proof}
See Appendix D.2.
\end{proof}

\subsubsection{Information Disclosure Regulation}

An alternative to pricing exploration is to resolve status quo uncertainty directly through mandatory information disclosure (e.g., salary transparency laws, product labeling standards).

\begin{theorem}[Disclosure Policy]\label{thm:disclosure}
A policy that publicly reveals $\epsilon$ (eliminating status quo uncertainty) has ambiguous welfare effects.
On the one hand, it yields a Benefit by eliminating wasteful catalytic exploration ($V^{ISQ} \to 0$), thereby saving costs $c$ and externality $\mathcal{E}$.
On the other hand, it imposes a Cost by potentially destroying beneficial signaling equilibria (as shown in Section 3), forcing high-quality types to incur signaling costs again or leading to inefficient pooling.
\end{theorem}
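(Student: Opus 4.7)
The plan is to compute the social welfare difference $\Delta\mathcal{W}$ between the full-disclosure and no-disclosure regimes by enumerating exploration cases, then to add the signaling-cost adjustment implied by Theorem \ref{thm:disruption}, and finally to exhibit parameter regions in which the sign of $\Delta\mathcal{W}$ flips.

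First, I would characterize the disclosure counterfactual. When $\epsilon$ is revealed at no cost, Theorem \ref{thm:decomposition} implies that the catalytic component $V^{ISQ}=\E[\max\{\mu_0+\epsilon,\mu_1\}]-\mu_0$ accrues to the DM as a free public good, while the residual option value of exploration collapses to the pure switching component $V^{IC}$. Thus the post-disclosure exploration rule is $\delta V^{IC}\ge c$, strictly weaker than the pre-disclosure rule $\delta(V^{IC}+V^{ISQ})\ge c$ whenever $\sigma_\epsilon^2>0$. Letting $\pi_0,\pi_1\in\{0,1\}$ denote equilibrium exploration before and after disclosure, the case $(\pi_0,\pi_1)=(0,1)$ is ruled out by $V^{IC}\le V^{IC}+V^{ISQ}$, leaving only three feasible cases.

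Second, I would assemble the welfare ledger. Let $\Psi\ge 0$ denote the signaling cost reintroduced whenever disclosure flips the equilibrium from the pooling regime back to a separating one (per Theorems \ref{thm:disruption} and \ref{thm:riley}). Case-by-case evaluation of the welfare expression from Section 5.1 yields
\begin{equation}
\Delta\mathcal{W} \;=\; \mathbf{1}\{\pi_0=1,\pi_1=0\}\bigl[c+\delta(\mathcal{E}-\mathcal{S})-\delta V^{IC}\bigr] \;+\; \mathbf{1}\{\pi_0=0\}\,\delta V^{ISQ} \;-\; \Psi,
\end{equation}
where the case $(\pi_0,\pi_1)=(1,1)$ contributes zero to the non-signaling terms because exploration itself reveals $\epsilon$. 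The first two terms are non-negative by construction and constitute the benefit; the third is the cost.

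Third, I would establish ambiguity by exhibiting opposite signs. For the positive direction, place the economy in the ``exploration without switching'' region of Theorem \ref{thm:existence}: with $\sigma_\epsilon$ large and $\mu_1$ small, $V^{IC}$ is negligible while $V^{ISQ}\gg c/\delta$, so $(\pi_0,\pi_1)=(1,0)$; the bracket approaches $c+\delta(\mathcal{E}-\mathcal{S})>0$, and by choosing $\psi(e_H^*,\theta_H)$ small one makes $\Psi$ negligible, yielding $\Delta\mathcal{W}>0$. For the negative direction, pick parameters just above the collapse threshold $\bar V^{ISQ}$ of Theorem \ref{thm:disruption} with $\mathcal{E}\approx\mathcal{S}$ and a steep signaling technology: pre-disclosure is pooling at zero effort, post-disclosure restores the Riley separating equilibrium with cost $p\cdot\psi(e_H^*,\theta_H)$, and this reinstated $\Psi$ exceeds the catalytic savings, giving $\Delta\mathcal{W}<0$.

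The main obstacle is the discontinuity of $\Psi$ at the regime boundary: the signaling equilibrium jumps, so $\Delta\mathcal{W}$ is not differentiable in $V^{ISQ}$ and cannot be signed by a derivative argument. I would handle this by partitioning the parameter space using the three regimes of Theorem \ref{thm:riley} (full separation, partial pooling with cutoff $\hat\theta$, complete collapse), computing $\Psi$ piecewise (zero in the first regime, $\int_{\hat\theta}^{\bar\theta}\psi(e^*(\theta),\theta)\,dG(\theta)$ in the intermediate regime where $\hat\theta$ itself moves with disclosure, and the full $\E[\psi(e^*(\theta),\theta)]$ reinstated in the third), and verifying that the jumps at the regime boundaries preserve the opposite-sign construction. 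Continuity of the non-signaling components within each regime then closes the argument and confirms that no uniform welfare ordering between the two regimes exists.
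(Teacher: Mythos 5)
Your proof takes the same two-channel decomposition as the paper's Appendix D.3 — disclosure saves catalytic exploration costs and externalities while reinstating wasteful signaling — and concludes ambiguity by exhibiting parameter regions where each effect dominates. The main difference is that your accounting is sharper than the paper's. Where the paper writes the direct gain as $\int_{\Theta_{\mathrm{cat}}}(c+\delta\mathcal{E})\,dF(\theta)$, you correctly net out the lost switching value $\delta V^{IC}$ that the DM forgoes when she stops exploring (making the bracket $c+\delta(\mathcal{E}-\mathcal{S})-\delta V^{IC}$), you retain the spillover term $\mathcal{S}$ that the paper drops, and you add the $\mathbf{1}\{\pi_0=0\}\,\delta V^{ISQ}$ term capturing the free informational windfall to non-explorers — a benefit of disclosure the paper's expression omits. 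You also rule out $(\pi_0,\pi_1)=(0,1)$ by monotonicity, a useful sanity check absent from the paper, and your piecewise treatment of $\Psi$ across the three regimes of Theorem \ref{thm:riley} handles the discontinuity at the collapse boundary more carefully than the paper's single binary-type expression $p\,\psi(e_H^*,\theta_H)$. One small imprecision: you assert the first two terms in your $\Delta\mathcal{W}$ formula are ``non-negative by construction,'' but the bracket $c+\delta(\mathcal{E}-\mathcal{S})-\delta V^{IC}$ can go negative if $\mathcal{S}$ is large relative to $\mathcal{E}$ and $c-\delta V^{IC}$; since you only need to exhibit a positive case and you are free to pick $\mathcal{E}\ge\mathcal{S}$ there, this does not break the argument, but the blanket claim should be qualified.
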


\begin{proof}
See Appendix D.3.
\end{proof}

\subsection{Market Design Implications}

The theory offers a unified lens for understanding market design features that seem puzzling under standard search theory.

\paragraph{Labor Markets (Screening Mechanisms).}
\cite{marinescu2020opening} document that job seekers send many applications with low intent to accept. Our theory suggests this is partly to gauge market value. Optimal design requires separating validation seekers from genuine candidates. Application fees can efficiently screen out catalytic applicants who explore only to validate current employment terms. Furthermore, enhancing information provision through career counseling or transparent salary benchmarks can resolve $\epsilon$ directly, reducing the need for catalytic interviews.

\paragraph{Dating Markets (Friction Design).}
Platform features can be understood as regulating catalytic exploration. By increasing the shadow cost of exploration, swipe limits force users to reserve exploration for candidates with high potential for adoption ($V^{IC}$), rather than for validation. Similarly, commitment devices such as ``Exclusive Mode'' prevent ongoing catalytic exploration that undermines relationship stability.

\paragraph{Innovation Policy (R\&D Incentives).}
Firms often conduct R\&D on technologies they do not commercialize to benchmark legacy systems. If this ``defensive research'' crowds out viable entrants, subsidies should be targeted toward implementation milestones rather than pure exploration expenditure.

\section{Empirical Implications and Identification}

\subsection{Testable Predictions}

The theory generates sharp testable predictions that distinguish catalytic exploration from standard search models. We organize these into cross-sectional patterns (comparing agents with different uncertainty levels) and dynamic patterns (changes over time).

\subsubsection{Cross-Sectional Predictions}

\begin{proposition}[The Exploration-Switching Paradox]\label{prop:empirical}
Conditional on the challenger being ex-ante inferior ($\mu_1 < \mu_0$), the model predicts:
\begin{enumerate}
    \item \textbf{Decoupling Effect:} Across agents with varying status quo uncertainty $\sigma_\epsilon$, exploration intensity increases rapidly to saturation (100\%), while switching rates increase sluggishly and remain bounded. This generates a widening ``validation gap'' in high-uncertainty environments.
    
    \item \textbf{U-Shaped Duration Dependence:} If status quo uncertainty is high early in a relationship (learning match quality), low in the middle (stable match), and high again later (preference drift or market changes), then:
    \begin{equation}
        \text{Exploration Intensity} = f(\text{Duration}) \quad \text{is U-shaped}
    \end{equation}
    
    \item \textbf{Non-Monotonic Signaling:} In markets with endogenous signaling, the intensity of quality signaling responds non-monotonically to market competitiveness (proxy for average $\sigma_\epsilon$).
    Specifically, when catalytic motives dominate ($\sigma_\epsilon$ is very large), signaling collapses:
    \begin{equation}
        \frac{\partial^2 e^*}{\partial \text{Comp} \cdot \partial \sigma_\epsilon} < 0 \quad \text{for } \sigma_\epsilon > \bar{\sigma}
    \end{equation}
\end{enumerate}
\end{proposition}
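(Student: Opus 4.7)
The plan is to dispatch the three claims separately, since each rests on a different result already established in the paper. Throughout, let $p_{\text{exp}}(\sigma_\epsilon) \equiv \mathbf{1}\{\delta\,\text{OV}(\sigma_\epsilon) \ge c\}$ (or its smoothed analog if cost heterogeneity is introduced) and $p_{\text{sw}}(\sigma_\epsilon) \equiv \Pr(\theta > \mu_0 + \epsilon)$. Under Assumption \ref{ass:normal}, the latter admits the closed form
\begin{equation}
    p_{\text{sw}}(\sigma_\epsilon) = \Phi\!\left(\frac{-\Delta}{\sqrt{\sigma_\theta^2 + \sigma_\epsilon^2}}\right),
\end{equation}
which is strictly increasing in $\sigma_\epsilon$ and converges to $\Phi(0) = 1/2$.

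For claim (1), the Decoupling Effect, I would combine Theorem \ref{thm:existence} with the explicit formula above. Theorem \ref{thm:existence} already gives $p_{\text{exp}} \to 1$ as $\sigma_\epsilon \to \infty$; Proposition \ref{prop:catalytic} implies that $V^{ISQ}$ scales linearly in $\sigma_\epsilon$, so $p_{\text{exp}}$ hits unity once $\sigma_\epsilon$ exceeds a threshold of order $\sqrt{2\pi}\,c/\delta$, i.e.\ in finite $\sigma_\epsilon$. In contrast, $p_{\text{sw}}$ approaches its ceiling $1/2$ only at rate $O(\Delta/\sigma_\epsilon)$, by a first-order expansion of $\Phi$ near zero. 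Subtracting the two curves gives a ``validation gap'' $p_{\text{exp}}(\sigma_\epsilon) - p_{\text{sw}}(\sigma_\epsilon)$ whose supremum is attained in the saturated region and equals at least $1/2$ asymptotically; monotonicity of the gap in $\sigma_\epsilon$ (once past the exploration threshold) completes the formal statement.

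For claim (2), the U-shape, I would take $\sigma_\epsilon(t)$ to be U-shaped in duration $t$ as a maintained modeling assumption and then transport this shape to exploration intensity via a monotone composition. Proposition \ref{prop:comparative} establishes $\partial V^{ISQ}/\partial \sigma_\epsilon > 0$, and since $V^{IC}$ is also non-decreasing in $\sigma_\epsilon$ under normality (the convolved variance grows), $\text{OV}(\sigma_\epsilon)$ inherits strict monotonicity. Composing with the U-shaped $\sigma_\epsilon(t)$ yields a U-shaped $\text{OV}(t)$; with cost heterogeneity across agents, the implied fraction exploring at each instant is the CDF of $c$ evaluated at $\delta\,\text{OV}(t)$, which preserves the U-shape. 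The only delicate point is that if $\sigma_\epsilon(t)$ dips below the exploration threshold in mid-life, the middle of the U is exactly zero rather than merely small; I would flag this as a refinement.

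For claim (3), the non-monotonic signaling cross partial, I would use Theorem \ref{thm:disruption} to pin down the regime structure and then differentiate within each regime. Parameterize competitiveness so that larger $\text{Comp}$ raises $V^{IC}(\theta_L)$ (more responsive outside options) and hence lowers the critical threshold $\bar V^{ISQ}$, without otherwise altering the signaling cost $\psi$. In the separating regime ($V^{ISQ} < \bar V^{ISQ}$), the least-cost separating effort $e_H^*$ is determined by the binding downward incentive constraint of $\theta_L$; implicit differentiation of that constraint, using the single-crossing property, gives $\partial e_H^*/\partial \text{Comp} > 0$. In the pooling regime ($V^{ISQ} > \bar V^{ISQ}$), Theorem \ref{thm:disruption} forces $e_H^* = 0$, so $\partial e_H^*/\partial \text{Comp} = 0$. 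Hence the map $\sigma_\epsilon \mapsto \partial e_H^*/\partial \text{Comp}$ jumps downward (from a positive interior value to zero) at the threshold $\sigma_\epsilon = \bar\sigma$, delivering the signed cross partial in the distributional sense $\partial^2 e^*/(\partial \text{Comp}\,\partial \sigma_\epsilon) < 0$ for $\sigma_\epsilon > \bar\sigma$. The main obstacle is the interior case: to obtain a smooth (rather than distributional) negative cross partial, I would either invoke the continuous-type extension of Theorem \ref{thm:riley} and Proposition \ref{prop:partial} (where the cutoff $\hat\theta$ moves continuously) and differentiate the partial-pooling effort schedule with respect to both arguments, or smooth the binary model by adding a small amount of cost heterogeneity so that the jump becomes a steep but differentiable decline. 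Either route converts the sign statement into a clean inequality, and I expect the continuous-type route to be cleaner because Proposition \ref{prop:partial} already supplies $\partial \hat\theta/\partial V^{ISQ} < 0$, which is the key monotone comparative static needed.
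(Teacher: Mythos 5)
Your proof takes essentially the same route as the paper's for all three parts: monotonicity of $V^{ISQ}$ plus boundedness of $\Phi$ at $0$ for the decoupling effect, composition of a U-shaped $\sigma_\epsilon(t)$ with a monotone exploration map for duration dependence, and the regime transition from Theorem~\ref{thm:disruption} for the signaling collapse. Where you go beyond the paper is in part (3): the paper asserts the cross-partial is negative ``in the neighborhood of the collapse'' without confronting the fact that, in the binary-type model, $e^*$ drops discontinuously at $\bar\sigma$, so the derivative is not classical. You correctly identify that the inequality holds only distributionally and offer two genuine fixes—invoking the continuous-type extension of Theorem~\ref{thm:riley} and Proposition~\ref{prop:partial}, where the cutoff $\hat\theta$ moves smoothly, or smoothing with cost heterogeneity—either of which yields a well-defined pointwise inequality. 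This is a sharpening, not a departure, and it addresses a real looseness in the paper's Appendix~E.1 argument. Your note on part (2) that the exploration hazard may hit exactly zero if $\sigma_\epsilon(t)$ dips below the threshold mid-spell is likewise a correct refinement the paper omits.
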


\begin{proof}
See Appendix E.1.
\end{proof}

\subsubsection{Dynamic Predictions}

\begin{proposition}[Dynamic Implications]\label{prop:dynamic_empirical}
In panel data tracking individual choices over time:
\begin{enumerate}
    \item \textbf{Validation Effect:} Successful exploration (without switching) causally increases the agent's subjective valuation of the status quo. Thus, exploration waves should precede periods of increased status quo satisfaction or reduced turnover hazard.
    \item \textbf{Technology Paradox:} Improvements in information technology (lower search cost $c$) lead to a divergence between search volume and turnover. Specifically, we observe that $\partial \text{Exploration}/\partial c < 0$, while $\partial \text{Switching}/\partial c \approx 0$ in the catalytic regime.
    \item \textbf{Reaction to Shocks:} Exogenous shocks that increase status quo uncertainty (e.g., corporate restructuring, health scares) trigger immediate exploration waves that are mostly non-consummated (high rejection rates).
\end{enumerate}
\end{proposition}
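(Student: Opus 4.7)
The plan is to establish each of the three dynamic patterns as a direct Bayesian and comparative-statics consequence of the primitives fixed in Section 2, applied repeatedly along the agent's choice path. I would work in the normal environment of Assumption \ref{ass:normal} so that all conditional moments are available in closed form, then read off the panel predictions from the one-shot option-value identities of Theorem \ref{thm:decomposition} and Proposition \ref{prop:catalytic}, reusing the HJB structure of Proposition \ref{prop:dynamic} to convert per-period statements into hazard-function statements.

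For claim (1), the Validation Effect, the key object is the posterior of $\epsilon$ conditional on the observed event \{explored and did not switch\}, which by the selection rule equals $\{\mu_0 + \epsilon \geq \theta\}$. Under joint normality of $(\epsilon, \theta)$ this is a truncated bivariate normal, and because the truncation selects realizations with larger $\mu_0 + \epsilon - \theta$ while $\E[\mu_0 + \epsilon - \theta] = \Delta > 0$ already shifts mass toward the event, standard Mills-ratio identities yield $\E[\epsilon \mid \mu_0 + \epsilon \geq \theta] > 0$ and $\Var[\epsilon \mid \mu_0 + \epsilon \geq \theta] < \sigma_\epsilon^2$. The posterior mean strictly raises the subjective valuation of the status quo, while the posterior variance contraction lowers the updated catalytic value via Proposition \ref{prop:comparative}(1); both forces reduce the probability that the HJB threshold in Proposition \ref{prop:dynamic} is crossed in later periods, which is exactly the reduced turnover hazard.

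For claim (2), the Technology Paradox, I would differentiate the exploration indicator $\mathbf{1}\{\delta \cdot \text{OV}(\sigma_\epsilon) \geq c\}$ with respect to $c$ over a population heterogeneous in $\sigma_\epsilon \sim H$; by the smoothed threshold rule the marginal explorer has $\delta \cdot \text{OV} = c$, so aggregate exploration strictly rises as $c$ falls. The switching margin conditional on exploration is pinned down by $P_s(\sigma_\epsilon) = \Phi\bigl(-\Delta/\sqrt{\sigma_\epsilon^2 + \sigma_\theta^2}\bigr)$, which is independent of $c$; aggregating, $\partial \Pr(\text{switch})/\partial c$ equals the boundary density at the threshold weighted by $P_s(\sigma_\epsilon^*)$, and in the catalytic regime defined by $V^{ISQ} \gg V^{IC}$ the threshold $\sigma_\epsilon^*$ lies where $P_s$ is already near zero, delivering the claimed approximate orthogonality. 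Claim (3), the reaction to shocks, then follows by combining Proposition \ref{prop:catalytic} (an uncertainty shock $\sigma_\epsilon \nearrow \sigma_\epsilon'$ strictly raises $V^{ISQ}$) with Theorem \ref{thm:existence} (the exploration probability moves toward one while $P_s(\sigma_\epsilon')$ remains bounded well below one for any finite shock), so the induced exploration wave is dominated by non-consummations.

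The main obstacle is the causal language in claim (1): promoting a one-shot Bayesian update into a statement about the hazard function requires embedding the update in the stationary dynamic model of Proposition \ref{prop:dynamic}. I would establish that the continuation value $V(\sigma_\epsilon)$ is monotone in the posterior $\sigma_\epsilon$, show that successful non-switching exploration produces a first-order stochastic dominance improvement in the posterior distribution, and then invoke the threshold characterization of the HJB equation to translate that improvement into a lower instantaneous hazard. A secondary subtlety is selection: the subsample exhibiting ``exploration without switching'' is not a random cut of explorers, so the validation effect must be stated as an average causal effect conditional on the non-switching event, not as an unconditional treatment effect; the truncated-normal moments above make this distinction precise.
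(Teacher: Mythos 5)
Your proposal is correct and follows the same high-level route as the paper for all three claims, but on claim (1) you are actually more careful than the paper. The paper's proof conditions on $\{\epsilon \ge -\Delta\}$, i.e.\ $\mu_0 + \epsilon \ge \mu_1$, and computes the univariate truncated-normal mean $\sigma_\epsilon\,\phi(-\Delta/\sigma_\epsilon)/[1-\Phi(-\Delta/\sigma_\epsilon)]$. But in the model the DM observes the realized $\theta$, so the true retention event is $\{\mu_0 + \epsilon \ge \theta\}$, which is a truncation of the bivariate pair $(\epsilon,\theta)$, not of $\epsilon$ alone against the constant $\mu_1$. Your conditioning is the correct one; the paper's version is exact only in the degenerate case $\sigma_\theta = 0$ and otherwise serves as an approximation in the catalytic-dominant regime. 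The qualitative conclusion (positive posterior mean of $\epsilon$) survives either way because $\E[\mu_0+\epsilon-\theta] = \Delta > 0$, as you note. You also add two ingredients the paper leaves implicit: the variance-contraction channel ($\Var[\epsilon\mid\text{stay}] < \sigma_\epsilon^2$ lowers future $V^{ISQ}$ via Proposition~\ref{prop:comparative}), and an explicit embedding into the HJB threshold of Proposition~\ref{prop:dynamic} to convert the one-shot Bayesian update into a statement about the turnover hazard. Both are needed to justify the proposition's hazard language, so your treatment is the more complete one. For claims (2) and (3) your argument tracks the paper's almost line by line — exploration margin moves with $c$ while $P_s = \Phi\bigl(-\Delta/\sqrt{\sigma_\epsilon^2+\sigma_\theta^2}\bigr)$ is structurally independent of $c$, and a shock to $\sigma_\epsilon$ raises $V^{ISQ}$ while $P_s$ stays bounded away from one by Theorem~\ref{thm:existence} — with the modest addition of explicit aggregation over heterogeneous $\sigma_\epsilon$, which makes the ``$\partial\text{Switching}/\partial c\approx 0$'' statement quantitative rather than merely conditional-on-exploration.
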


\begin{proof}
See Appendix E.2.
\end{proof}

\subsection{Identification Strategy}

To empirically isolate catalytic exploration, the econometric challenge is to distinguish it from standard search for better alternatives. We propose two complementary strategies.

\subsubsection{Instrumental Variables Approach}

Identification requires variation in status quo uncertainty ($\sigma_\epsilon$) that is orthogonal to the quality of alternatives ($\mu_1$) and the instrumental value of switching ($V^{IC}$). A promising avenue involves exploiting information shocks, such as regulatory changes regarding information disclosure (e.g., pay transparency laws).

By treating the staggered adoption of such requirements as an instrument ($Z$), we can isolate changes in $\sigma_\epsilon$. The first stage operates because disclosure directly resolves uncertainty externally. The exclusion restriction holds provided that disclosure does not alter the fundamental distribution of outside offers ($\theta$). Under this framework, the theory yields a sharp falsification test: if exploration is catalytic, transparency ($Z \uparrow$) should reduce search activity; if exploration is instrumental (standard search), transparency should have a neutral or positive effect by lowering comparison costs.

\subsubsection{Structural Estimation}

We propose a two-step structural estimation procedure to recover the model parameters from observed behavior.To bridge the theoretical binary decision with aggregate data, we assume unobserved heterogeneity in exploration costs $c \sim F_c(\cdot)$. Thus, the model-implied exploration probability is given by $P_{exp} = F_c(\delta V^{ISQ}(\sigma_\epsilon, \Delta))$. For illustration, we proceed with a simplified moment-matching approach. In the first step, the researcher estimates the reduced-form exploration probability $\hat{P}_{exp}$ and the switching probability conditional on exploration $\hat{P}_{switch}$ directly from the data. 

In the second step, the structural parameters $(\hat{\sigma}_\epsilon, \hat{\Delta})$ are recovered via the method of moments by solving the system:
\begin{align}
    \hat{P}_{exp} = F_c\left(\delta V^{ISQ}(\sigma_\epsilon, \Delta); \theta_c\right)\\
    \hat{P}_{switch} &= \Phi\left(\frac{-\Delta}{\sigma_\epsilon}\right)
\end{align}
Finally, the implied catalytic value is computed using the formula from Proposition 2:
\begin{equation}
    \hat{V}^{ISQ} = -\hat{\Delta} \Phi\left(\frac{-\hat{\Delta}}{\hat{\sigma}_\epsilon}\right) + \hat{\sigma}_\epsilon \phi\left(\frac{-\hat{\Delta}}{\hat{\sigma}_\epsilon}\right)
\end{equation}
This approach allows for the quantification of the "catalytic share" of total search activity in a given market.

\subsection{Existing Evidence}

Recent empirical findings in labor, consumer, and relationship markets align closely with the predictions of catalytic exploration.

\paragraph{Labor Markets.}
\cite{marinescu2020opening} document a phenomenon of "application spamming," where job seekers frequently apply to positions offering lower wages than their current jobs. Standard search theory struggles to explain why workers incur costs to explore strictly dominated options. Our model rationalizes this behavior as catalytic exploration: workers use these interviews not to switch, but to benchmark their current employment terms, thereby resolving uncertainty about their market value ($\sigma_\epsilon$).

\paragraph{Consumer Behavior.}
\cite{iyengar2000choice} famously document that increasing choice options can reduce purchase rates—the "paradox of choice." While typically attributed to cognitive overload, our model offers a rational information-based explanation: more options decrease the cost of acquiring comparative information ($c$), encouraging consumers to engage in extensive "window shopping" (catalytic exploration) that ultimately confirms the status quo, appearing in the data as a reduction in purchasing (switching).

\paragraph{Relationship Markets.}
\cite{rosenfeld2019disintermediating} show that the rise of online dating has drastically increased search volume (browsing, messaging) but has not led to a proportional increase in matching or relationship formation rates. This pattern—high exploration, low switching—is the signature of the catalytic regime. As search costs fall, users increasingly utilize the platform to validate their current relationship status or market desirability rather than to find new partners.

\section{Extensions and Generalizations}

\subsection{Multi-Dimensional Uncertainty}

\subsubsection{Vector of Unknown Attributes}

In reality, the status quo is rarely defined by a single attribute. Consider a status quo with multiple uncertain attributes $\boldsymbol{\epsilon} = (\epsilon_1, ..., \epsilon_n)$, where each $\epsilon_i \sim \mathcal{N}(0, \sigma^2)$ represents uncertainty along dimension $i$ (e.g., salary, location, collegiality).

\begin{proposition}[Multi-Dimensional Catalytic Value]\label{prop:multidim}
Assume payoffs are additive across $n$ independent dimensions. The total standard deviation scales with $\sqrt{n}\sigma$. Consequently, in the high-uncertainty limit, catalytic value grows as $V^{ISQ} \propto \sqrt{n}$. This growth exhibits diminishing returns: the marginal increase in catalytic value from adding an uncertainty dimension converges to zero as $n \to \infty$, implying that optimal exploration intensity is concave with respect to the complexity ($n$) of the status quo.
\end{proposition}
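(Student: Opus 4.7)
The plan is to reduce the multi-dimensional problem to the one-dimensional characterization of Proposition \ref{prop:catalytic} by exploiting additivity and independence. First, I would define the aggregate status quo shock $E \equiv \sum_{i=1}^n \epsilon_i$. Because payoffs are additive and the $\epsilon_i$ are independent $\mathcal{N}(0,\sigma^2)$ variates, $E \sim \mathcal{N}(0, n\sigma^2)$, so the status quo payoff is $\mu_0 + E$ with effective standard deviation $\sigma_\epsilon^{(n)} = \sqrt{n}\,\sigma$. Since the decision problem depends on $\boldsymbol{\epsilon}$ only through its aggregate contribution to $u_0$, the $n$-dimensional catalytic value equals $V^{ISQ}(n) = V^{ISQ}\!\left(\sqrt{n}\,\sigma\right)$, where the right-hand side uses the closed form from Proposition \ref{prop:catalytic}.

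Second, I would invoke the high-uncertainty asymptotic from Proposition \ref{prop:catalytic}(2): as $\sigma_\epsilon \to \infty$, $V^{ISQ}(\sigma_\epsilon) \sim \sigma_\epsilon/\sqrt{2\pi} - \Delta/2$. Substituting $\sigma_\epsilon = \sqrt{n}\,\sigma$ yields $V^{ISQ}(n) \sim \sqrt{n}\,\sigma/\sqrt{2\pi} - \Delta/2$, establishing $V^{ISQ} \propto \sqrt{n}$. For the marginal effect, I would apply the chain rule together with Proposition \ref{prop:catalytic}(1):
\[
\frac{\partial V^{ISQ}}{\partial n} = \phi\!\left(\frac{-\Delta}{\sqrt{n}\,\sigma}\right)\cdot \frac{\sigma}{2\sqrt{n}}.
\]
As $n \to \infty$, the first factor converges to $1/\sqrt{2\pi}$ while the second vanishes, giving $\partial V^{ISQ}/\partial n \to 0$, which is the claimed diminishing returns to complexity.

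Third, for concavity I would compute the second derivative. Since $V^{ISQ}$ is convex in $\sigma_\epsilon$ with $(V^{ISQ})''(\sigma_\epsilon) = (\Delta^2/\sigma_\epsilon^3)\,\phi(-\Delta/\sigma_\epsilon)$, combining with $s(n)=\sqrt{n}\,\sigma$ and $s''(n)=-\sigma/(4n^{3/2})$ gives, after collecting terms,
\[
\frac{\partial^2 V^{ISQ}}{\partial n^2} = \frac{\phi(-\Delta/(\sqrt{n}\,\sigma))\,\sigma}{4\, n^{5/2}}\left(\frac{\Delta^2}{\sigma^2} - n\right),
\]
which is strictly negative precisely when $n > \Delta^2/\sigma^2$, i.e., throughout the high-complexity regime that is the proposition's focus. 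Intuitively, although $V^{ISQ}$ is convex in its argument, the square-root compression $\sigma_\epsilon(n) = \sqrt{n}\,\sigma$ dominates once $n$ is large, and the composition inherits the concavity of $\sqrt{n}$.

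The main obstacle is not conceptual but book-keeping: one must carefully track the composition $V^{ISQ}\circ \sqrt{n}\,\sigma$ and show that the concave correction from $s''(n)<0$ overwhelms the convex contribution from $(V^{ISQ})''>0$ exactly when $n>\Delta^2/\sigma^2$. Once this algebraic threshold is identified, both the $\sqrt{n}$ scaling and the asymptotic concavity—and hence the concavity of the induced optimal exploration intensity via the monotone relationship between option value and exploration probability established in Section 2—follow from a single substitution-plus-chain-rule argument anchored on Proposition \ref{prop:catalytic}.
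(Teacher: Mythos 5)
Your proof is correct and follows essentially the same route as the paper's Appendix F.1: aggregate the $n$ independent shocks into $E \sim \mathcal{N}(0, n\sigma^2)$, identify the effective standard deviation $\sqrt{n}\,\sigma$, substitute into the closed form from Proposition~\ref{prop:catalytic}, read off the $\sqrt{n}$ scaling from the high-uncertainty asymptotic, and differentiate via the chain rule to get $\partial V^{ISQ}/\partial n = \phi(-\Delta/(\sqrt{n}\,\sigma))\cdot \sigma/(2\sqrt{n}) \to 0$.

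One place where you actually go further than the paper, and improve on it, is the concavity step. The paper simply asserts that the first derivative $\phi(-\Delta/(\sqrt{n}\,\sigma))\cdot \sigma/(2\sqrt{n})$ is ``positive but strictly decreasing in $n$,'' but this is not true for all $n$: the $\phi$ factor is increasing in $n$ (its argument moves toward the mode), while the $\sigma/(2\sqrt{n})$ factor is decreasing, so the product is only eventually decreasing. Your explicit second-derivative computation,
\[
\frac{\partial^2 V^{ISQ}}{\partial n^2} = \frac{\phi\!\left(-\Delta/(\sqrt{n}\,\sigma)\right)\sigma}{4 n^{5/2}}\left(\frac{\Delta^2}{\sigma^2} - n\right),
\]
is correct and pins down precisely where concavity holds: $n > \Delta^2/\sigma^2$, i.e.\ once the aggregate uncertainty $\sqrt{n}\,\sigma$ exceeds the gap $\Delta$. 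Since the proposition is explicitly about the high-uncertainty regime, both arguments reach the stated conclusion, but yours makes the domain of validity explicit rather than leaving it implicit. The closing remark on concavity of optimal exploration intensity is handled the same way in both cases—by inheriting it from the monotone composition with the exploration probability.
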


\begin{proof}
See Appendix F.1.
\end{proof}

\subsubsection{Selective Attention}

When the DM faces an attention constraint (total attention budget $\bar{A}$), she must choose which dimensions to investigate.

\begin{theorem}[Optimal Attention Allocation]\label{thm:attention}
Let $a_i$ denote attention allocated to dimension $i$. The optimal allocation satisfies:
\begin{equation}
    a_i^* = \bar{A} \cdot \frac{\sigma_{\epsilon_i}^2 / \tau_i^2}{\sum_{j=1}^n \sigma_{\epsilon_j}^2 / \tau_j^2}
\end{equation}
Optimally, attention is proportional to the signal-to-noise ratio of the uncertainty. The DM focuses on dimensions where she is most uncertain ($\sigma_{\epsilon_i}^2$ is high) and where information is most precise ($\tau_i^2$ is low).
\end{theorem}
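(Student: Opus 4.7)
The plan is to cast the DM's problem as a constrained concave optimization and solve it via Lagrangian methods, exploiting the additive structure of the multi-dimensional catalytic value established in Proposition \ref{prop:multidim}. First, I would fix the information technology: allocating attention $a_i$ to dimension $i$ produces a Gaussian signal $s_i = \epsilon_i + \eta_i$ with noise variance scaling inversely with $a_i$ (namely $\eta_i \sim \mathcal{N}(0, \tau_i^2/a_i)$, parallel to the precision-cost specification of Section 4). By conjugate Bayesian updating, the posterior variance of $\epsilon_i$ is $\Sigma_i(a_i) = \sigma_{\epsilon_i}^2 \tau_i^2 / (a_i \sigma_{\epsilon_i}^2 + \tau_i^2)$, which is strictly decreasing and convex in $a_i$. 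The resolved variance per dimension is therefore $\sigma_{\epsilon_i}^2 - \Sigma_i(a_i)$, a concave, increasing function of $a_i$.

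Second, I would aggregate the dimension-specific contributions to total catalytic value. Because the $\epsilon_i$ are independent with additive payoffs, the catalytic value decomposes as $V^{ISQ}(\mathbf{a}) = \sum_i \widetilde{V}_i(\Sigma_i(a_i))$, and a first-order expansion around the prior (justified in the catalytic-dominant regime where switching probability is small, as in Theorem \ref{thm:existence}) yields $\partial V^{ISQ}/\partial a_i \propto -\Sigma_i'(a_i)$. Since $-\Sigma_i'(a_i) = \sigma_{\epsilon_i}^4 / (a_i \sigma_{\epsilon_i}^2 + \tau_i^2)^2$, in the small-attention or high-noise regime this reduces to $\sigma_{\epsilon_i}^2/\tau_i^2$ times a common scalar, giving the key proportionality. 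I would then form the Lagrangian $\mathcal{L}(\mathbf{a}, \lambda) = V^{ISQ}(\mathbf{a}) - \lambda \bigl(\sum_i a_i - \bar{A}\bigr)$ and impose the FOC $\partial V^{ISQ}/\partial a_i = \lambda$ across all $i$. Equating marginals gives $a_i^*/a_j^* = (\sigma_{\epsilon_i}^2/\tau_i^2)/(\sigma_{\epsilon_j}^2/\tau_j^2)$, and plugging back into the budget constraint yields the stated formula. Concavity of each $\widetilde{V}_i \circ \Sigma_i$ in $a_i$ secures the second-order condition, and strict positivity of every $\sigma_{\epsilon_i}^2$ guarantees an interior solution, so no Kuhn--Tucker complications arise.

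The main obstacle is the second step: justifying that $\partial V^{ISQ}/\partial a_i$ is \emph{exactly} proportional to $\sigma_{\epsilon_i}^2/\tau_i^2$ rather than to the full nonlinear expression $\sigma_{\epsilon_i}^4/(a_i \sigma_{\epsilon_i}^2 + \tau_i^2)^2$. Bridging this gap requires committing to the parameterization implicit in the proposition---most naturally, interpreting $\tau_i^2$ as the (linear) marginal cost of noise reduction at the prior and working to leading order in a small-$a_i$ expansion, so that cross-dimensional interactions through the max operator of Theorem \ref{thm:decomposition} vanish. If one instead retains the full nonlinear posterior, the solution becomes a water-filling rule in which $a_i^*$ equates $\sigma_{\epsilon_i}^4/(a_i \sigma_{\epsilon_i}^2 + \tau_i^2)^2$ across $i$; the clean SNR-proportional rule in the theorem is the first-order approximation of this water-filling allocation, and making that approximation explicit is what the proof must pin down.
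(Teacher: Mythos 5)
Your Lagrangian setup mirrors the paper's (maximize $\sum_i V_i(a_i)$ subject to $\sum_i a_i \le \bar{A}$, FOC $V_i'(a_i^*) = \lambda$), but you go further and actually compute what $V_i'(a_i)$ is from Bayesian updating, whereas the paper simply \emph{assumes} $V_i'(a_i) \propto \sigma_{\epsilon_i}^2/\tau_i^2$ and then asserts the conclusion. You have correctly identified that this is where the argument is soft: if $V_i'(a_i)$ were literally a constant proportional to $\sigma_{\epsilon_i}^2/\tau_i^2$, the equal-marginals condition would be overdetermined (it could hold only if every dimension had the same SNR) and would yield a bang-bang allocation on the highest-SNR dimension, not the interior proportional rule. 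Your derivation of $-\Sigma_i'(a_i) = \sigma_{\epsilon_i}^4/(a_i\sigma_{\epsilon_i}^2 + \tau_i^2)^2$ and the resulting water-filling solution $a_i^* = \text{const} - \tau_i^2/\sigma_{\epsilon_i}^2$ is the honest consequence of the precision-budget parameterization, and it genuinely differs from the theorem's formula.

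Two points in your bridge from the water-filling rule to the theorem's rule do not hold up, however. First, the small-$a_i$ limit of $-\Sigma_i'(a_i)$ is $\sigma_{\epsilon_i}^4/\tau_i^4 = (\sigma_{\epsilon_i}^2/\tau_i^2)^2$, i.e.\ the \emph{square} of the SNR, not the SNR; so the claimed leading-order proportionality to $\sigma_{\epsilon_i}^2/\tau_i^2$ is already off by a power. Second, and more fundamentally, even if the marginal value were exactly a constant $k_i$ per dimension, equating constants does not produce a proportional allocation — it produces a corner solution. The proportional rule $a_i^* \propto k_i$ only emerges from the Lagrangian FOC if the per-dimension value has a specific curvature, the cleanest example being $V_i(a_i) = k_i\log a_i$ with $k_i = \sigma_{\epsilon_i}^2/\tau_i^2$, for which $V_i'(a_i) = k_i/a_i = \lambda$ immediately yields $a_i^* = k_i/\lambda$ and the budget constraint gives the stated share. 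A proof that actually delivers the theorem's formula therefore needs either this log-value specification (which the paper does not state) or a different information technology in which the resolved variance is linear in $a_i$ with dimension-specific slope $\sigma_{\epsilon_i}^2/\tau_i^2$ and a quadratic attention cost. Calling the proportional rule a ``first-order approximation of water-filling'' papers over this: the two formulas do not coincide to first order in $\bar{A}$ (water-filling concentrates on high-SNR dimensions and may violate nonnegativity for small $\bar{A}$), so the claim as stated is not correct and the approximation argument would need to be replaced by one of the two structural assumptions above.
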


\begin{proof}
See Appendix F.2.
\end{proof}

\subsection{Network Effects and Social Learning}

\subsubsection{Information Spillovers}

Consider a network of $N$ agents where one agent's exploration generates observable information for neighbors.

\begin{proposition}[Strategic Complementarity in Exploration]\label{prop:network}
In a network with information spillovers, individual exploration decisions become strategic complements when network density exceeds a threshold $\bar{\rho}$. Exploring reveals information that reduces neighbors' uncertainty, which (counter-intuitively) may encourage them to explore to resolve their own remaining residual uncertainty. This structure generates multiple equilibria—both ``all explore'' and ``none explore'' can be stable—and implies that the socially optimal outcome requires coordination, as private agents ignore the positive information externalities.
\end{proposition}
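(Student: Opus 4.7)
The plan is to formalize the network game, identify the spillover channel that generates supermodularity in exploration choices, and then invoke standard results on supermodular games to obtain both the multiplicity of equilibria and the welfare wedge. I would place $N$ agents on a graph $G$ with average density $\rho$, each with status quo $\mu_0+\epsilon_i$, $\epsilon_i \sim \mathcal{N}(0,\sigma_\epsilon^2)$, and challenger $\theta_i$. Actions $a_i \in \{0,1\}$ trigger cost $c$ and generate a public signal $s_j$ observable to neighbors, whose informativeness about the distributional parameters of $(\epsilon,\theta)$ scales with $\rho$. This yields a reduced-form payoff $u_i(a_i,\mathbf{a}_{-i})$ that depends on others only through the count of exploring neighbors $k_i = \sum_{j\in N(i)} a_j$, and a gross gain $V_i(k_i) = V^{IC}_i(k_i) + V^{ISQ}_i(k_i)$ by Theorem~\ref{thm:decomposition}.

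The crux of the argument is showing that $V_i(k_i)$ is nondecreasing in $k_i$ whenever $\rho > \bar{\rho}$. I would carefully distinguish two inferential channels activated by neighbors' exploration: inferences about the \emph{idiosyncratic} realization of $\epsilon_i$, which substitute for own learning and tend to reduce $V^{ISQ}_i$ through the monotonicity in Proposition~\ref{prop:catalytic}; and inferences about the \emph{population-level} parameters $(\sigma_\epsilon,\mu_1)$ shared across the neighborhood, which raise $V^{ISQ}_i$ and $V^{IC}_i$ in parallel for every agent whenever observed dispersion or challenger realizations exceed prior expectations. Differentiating the closed form of Proposition~\ref{prop:catalytic} in $\sigma_\epsilon$ and $\mu_1$ and combining the resulting signs, the threshold $\bar{\rho}$ is pinned down as the density above which the population-learning channel dominates the idiosyncratic-substitution channel, yielding $\partial V_i/\partial k_i \geq 0$.

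With $u_i$ having nondecreasing differences in $(a_i,\mathbf{a}_{-i})$ on the lattice $\{0,1\}^N$, the game is supermodular, so Topkis's fixed-point theorem produces a nonempty complete lattice of pure-strategy Nash equilibria with least and greatest elements $\mathbf{a}^{\min}$ and $\mathbf{a}^{\max}$. Choosing parameters such that $\delta V^{ISQ}(\sigma_\epsilon) < c < \delta V_i(N-1)$ makes unilateral exploration unprofitable when no neighbor explores and profitable when all do, so $\mathbf{a}^{\min} = \mathbf{0}$ and $\mathbf{a}^{\max} = \mathbf{1}$ are both equilibria. For the welfare statement, aggregate surplus $W(\mathbf{a}) = \sum_i [\delta V_i(k_i) - c]\, a_i$ is nondecreasing in the action profile because $\partial V_j/\partial a_i \geq 0$ captures the externality each explorer confers on neighbors; private agents internalize only $\delta V_i(k_i) - c$ and omit the cross-terms $\sum_{j \in N(i)} \delta\, \partial V_j/\partial a_i$, so $W(\mathbf{1}) \geq W(\mathbf{0})$ and the wedge is precisely the uninternalized informational spillover, establishing the need for coordination.

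The hard part will be the second step: justifying that neighbors' exploration \emph{raises} rather than substitutes for the marginal gain from own exploration, against the naive reading of Proposition~\ref{prop:catalytic} that reduced uncertainty depresses $V^{ISQ}$. Escaping this substitutability requires the clean separation between inference about the idiosyncratic $\epsilon_i$ and inference about the shared parameters of the environment, and pinning down $\bar{\rho}$ as the crossover point between these two opposing channels is where the real work of the proof lies.
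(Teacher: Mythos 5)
Your proposal takes a genuinely different route from the paper. The paper's proof is reduced-form: it posits a payoff function $\Pi_i(a_i,\mathbf{a}_{-i}) = a_i(V^{ISQ}+V^{IC}-c+\sum_{j\in\mathcal{N}_i}\omega_{ji}a_j\mathcal{B})$ where $\mathcal{B}>0$ is simply \emph{assumed} to capture a positive ``benchmark value'' of observing a neighbor explore; the discrete cross-difference is then trivially $\omega_{ji}\mathcal{B}>0$, Tarski's fixed-point theorem is invoked, and the density threshold $\bar{\rho}$ enters only secondarily as the point where spillovers overcome the private cost deficit, yielding both extremal equilibria. Notably, under the paper's construction the game is supermodular for \emph{any} $\rho>0$; complementarity does not hinge on $\bar{\rho}$. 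Your proposal, by contrast, tries to microfound the sign of the cross-derivative by decomposing neighbors' information into an idiosyncratic channel (learning about $\epsilon_i$, which substitutes for own exploration and depresses $V^{ISQ}_i$) and a population channel (learning about $(\sigma_\epsilon,\mu_1)$, which can raise the value of own exploration), and defining $\bar{\rho}$ as the crossover density between these two opposing forces. This is both more ambitious and more faithful to the proposition's literal claim that decisions ``become strategic complements when network density exceeds a threshold,'' which the paper's own proof does not actually deliver. Your welfare argument (writing $W(\mathbf{a})=\sum_i[\delta V_i(k_i)-c]a_i$ and showing the externality is the omitted cross-term $\sum_{j\in N(i)}\delta\,\partial V_j/\partial a_i$) is also more explicit than the paper's, which merely labels the extremal equilibria.

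The caveat is that you yourself flag the unfinished step: showing that the population-learning channel produces a \emph{positive expected} effect on the marginal value of own exploration, rather than a zero or ambiguous one. Your conditional phrasing (``whenever observed dispersion \ldots exceed prior expectations'') concedes that the realized update can go either way, so you need a convexity-type argument to ensure the ex ante expectation comes out positive (and indeed $V^{ISQ}$ is convex in $\sigma_\epsilon$ by Proposition~\ref{prop:comparative}, which would help here, but the argument is not closed). In the paper this tension is papered over by simply assuming $\mathcal{B}>0$. So: different approach, arguably sounder instincts than the paper's, but the pivotal sign restriction remains a sketch rather than a proof in both.
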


\begin{proof}
See Appendix F.3.
\end{proof}

\subsubsection{Herding and Information Cascades}

When agents observe others' exploration decisions (but not private signals), catalytic motives can generate inefficient herding.

\begin{theorem}[Catalytic Cascades]\label{thm:cascade}
With sequential exploration decisions, a ``catalytic cascade'' occurs when agents explore solely because they observe others exploring, disregarding their private signal that the challenger is inferior. These cascades are more sensitive to the magnitude of $V^{ISQ}$, as the threshold for following the herd is lower. Crucially, the welfare loss from catalytic cascades exceeds that of standard informational cascades \citep{bikhchandani1992theory} because it involves not just wrong choices, but also the negative externalities (congestion/inspection costs) imposed on challengers.
\end{theorem}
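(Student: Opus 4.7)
The plan is to adapt the sequential social learning framework of \cite{bikhchandani1992theory} to accommodate catalytic motives, then compare cascade onset and welfare losses to the standard benchmark. I would index agents by $t = 1, 2, \ldots$, each endowed with a private binary signal $s_t \in \{H, L\}$ of precision $q > 1/2$ about the challenger type. Agents observe only the history of exploration decisions $h_{t-1} = (a_1, \ldots, a_{t-1})$ and choose $a_t \in \{\text{explore}, \text{reject}\}$. Invoking the decomposition of Theorem \ref{thm:decomposition}, the individual decision rule is to explore if and only if
\[
\delta \bigl[V^{IC}(\hat{\mu}_t) + V^{ISQ}\bigr] \ge c,
\]
where $\hat{\mu}_t = \E[\theta \mid s_t, h_{t-1}]$. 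The crucial observation is that $V^{ISQ}$ enters additively and does not depend on the observed history, effectively shifting downward the posterior about $\theta$ required to trigger exploration.

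Second, I would characterize cascade onset. Let $\mu^L(h_{t-1})$ and $\mu^H(h_{t-1})$ denote the posterior means following a low and high private signal, respectively. An exploration cascade begins at $h_{t-1}$ exactly when even the pessimistic signal supports exploring, i.e. $\delta\bigl[V^{IC}(\mu^L(h_{t-1})) + V^{ISQ}\bigr] \ge c$, equivalently $V^{IC}(\mu^L) \ge c/\delta - V^{ISQ}$. Since $V^{IC}$ is strictly increasing, the implicit critical public posterior $\mu^*$ is strictly decreasing in $V^{ISQ}$. Combined with the martingale property of Bayesian posteriors, this delivers the sensitivity claim: the expected stopping time $\E[T_{\text{cas}}]$ at which a cascade forms is strictly decreasing in $V^{ISQ}$, and the unconditional probability $\pi_{\text{cas}}(V^{ISQ})$ of ever entering an exploration cascade is strictly increasing.

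For the welfare comparison, I would decompose the ex ante social loss into (i) a misallocation component arising when the cascade locks in on the wrong action relative to the full-information benchmark, and (ii) a cumulative catalytic externality $\delta \mathcal{E}$ imposed on each challenger inspected while the cascade is active. Formally,
\[
\mathcal{L}_{\text{cat}} - \mathcal{L}_{\text{BHW}} = \bigl[\pi_{\text{cas}}(V^{ISQ}) - \pi_{\text{cas}}(0)\bigr]\, L_{\text{miss}} \;+\; \pi_{\text{cas}}(V^{ISQ})\,\E[N_{\text{cas}}]\,\delta \mathcal{E},
\]
where $L_{\text{miss}}$ is the per-agent misallocation loss and $N_{\text{cas}}$ is the cascade length. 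Both terms are non-negative, and the second is strictly positive whenever $\mathcal{E} > 0$, establishing that catalytic cascades strictly dominate standard informational cascades in welfare loss.

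The main obstacle is the strict comparative static in the second step: cascade onset depends on the discrete evolution of Bayesian posteriors, so showing that lowering the $V^{IC}$ threshold strictly enlarges the set of onset histories requires ruling out knife-edge cases where an attainable posterior coincides with $\mu^*$. I would handle this by imposing a genericity condition on the prior $p$ and signal precision $q$, so that the countable set of reachable posteriors never lands exactly on the threshold, making the inequality $\pi_{\text{cas}}(V^{ISQ}) > \pi_{\text{cas}}(0)$ strict on an open set of parameters.
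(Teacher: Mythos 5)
Your proposal mirrors the paper's proof in Appendix F.4: both adapt the BHW sequential-learning framework, insert the additive and history-independent $V^{ISQ}$ term into the exploration condition $\delta[V^{IC}(\hat{\mu}) + V^{ISQ}] \ge c$ to show the cascade-onset posterior threshold drops, and attribute the excess welfare loss to the per-agent externality $\delta\mathcal{E}$ accruing over the cascade's duration. Your explicit treatment of the knife-edge issue (genericity of reachable posteriors) and the cleaner difference decomposition $\mathcal{L}_{\text{cat}} - \mathcal{L}_{\text{BHW}}$ tighten up points the paper's proof handles only informally, but the underlying argument is the same.
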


\begin{proof}
See Appendix F.4.
\end{proof}

\subsection{Dynamic Learning and Experimentation}

\subsubsection{Multi-Armed Bandit with Catalytic Value}

Consider a dynamic bandit problem where pulling an arm provides information about both that arm and the current status quo.

\begin{proposition}[Modified Gittins Index]\label{prop:gittins}
The optimal policy is characterized by a modified Gittins index:
\begin{equation}
    G_i^* = G_i^{standard} + \beta \cdot V^{ISQ}_i
\end{equation}
where $G_i^{standard}$ is the classical index for arm $i$, and the term $\beta \cdot V^{ISQ}_i$ represents the information spillover of pulling arm $i$ on the status quo's valuation. This modification justifies exploring arms with low direct payoff potential if they offer high catalytic information.
\end{proposition}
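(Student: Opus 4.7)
The plan is to adapt Whittle's restart-in-state formulation of the Gittins index theorem (Whittle, 1980; Gittins, Glazebrook, Weber) to the augmented state space in which beliefs about the status quo $\epsilon$ evolve alongside the arm-specific posteriors. I would first set up the Bellman equation with state $(b_0, b_1, \ldots, b_K)$, where $b_0$ is the current posterior on $\epsilon$ and $b_i$ is the posterior on arm $i$, and reward $r_i(b_i) + \Delta V_0(b_0; i)$ upon pulling arm $i$, with the second term denoting the improvement in the continuation value attributable to sharpening $b_0$. Under Assumption \ref{ass:normal}, the normal-linear signal structure makes the posterior variance of $\epsilon$ evolve deterministically in the number of pulls of each arm, so the catalytic reward generated by a pull of arm $i$ does not depend on the \emph{realized} signal but only on the pre-pull uncertainty—this is the crucial separability that will let me decouple the arm-$i$ index from the state of other arms.

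Second, I would exploit this separability to write the total expected discounted payoff of any policy $\pi$ as the sum of (a) a classical bandit reward stream in which each arm earns only its private payoff, and (b) an additive "catalytic stream" that credits a pull of arm $i$ with the marginal resolution of status-quo uncertainty, discounted at $\beta$. For stream (a) the standard Gittins argument applies verbatim and yields $G_i^{\text{standard}}$ as the fair charge at which the decision maker is indifferent between continuing with arm $i$ and stopping to earn a constant per-period reward. For stream (b), I would apply the same restart-in-state fair-charge construction; because the per-pull catalytic contribution of arm $i$ is $V_i^{\text{ISQ}}$ (evaluated at the current status-quo belief) and requires no further play of arm $i$ to be realized, its fair charge collapses to $\beta\cdot V_i^{\text{ISQ}}$ by a one-step argument. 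Additivity of the two streams, together with additivity of the corresponding fair charges, gives $G_i^{*}=G_i^{\text{standard}}+\beta\cdot V_i^{\text{ISQ}}$.

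Third, I would close the argument with an optimality verification: given the candidate index policy, define the value function as the Lagrangian dual of the restart problem and verify that it satisfies the Bellman equation with equality on the selected arm and weak inequality elsewhere. Here the previously established decomposition $\text{OV}=V^{IC}+V^{ISQ}$ (Theorem \ref{thm:decomposition}) plays the role of the per-stage decomposition, and Proposition \ref{prop:catalytic} guarantees that $V_i^{\text{ISQ}}$ is well-defined, strictly positive, and monotone in the residual status-quo uncertainty, ensuring the fair-charge supremum is attained.

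The main obstacle is precisely the point at which standard Gittins theory breaks: pulls of different arms are \emph{not} independent in their informational effect on $\epsilon$, so the textbook interchange argument across arms fails. I would handle this by leveraging the normal-linear structure to show that what couples the arms is only the scalar sequence of posterior precisions of $\epsilon$, and that this coupling is \emph{state-additive} rather than interactive. If one prefers to avoid normality, the same conclusion can be obtained under the weaker condition that the catalytic contribution of any arm depends on the status-quo belief only through a sufficient statistic that is common knowledge—essentially a "branching bandit" extension (Weiss, 1988) in which the catalytic arm is treated as a non-competing auxiliary project. Establishing that the latter reformulation preserves indexability, and that its index exactly equals $\beta\cdot V_i^{\text{ISQ}}$, is the step I would expect to consume the most care in a fully rigorous write-up.
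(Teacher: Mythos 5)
Your proposal follows essentially the same strategy the paper uses: augment the per-pull reward stream of each arm with a catalytic term proportional to the incremental resolution of status-quo uncertainty, and then apply the Gittins machinery to the augmented stream. The paper's proof defines an augmented reward $\tilde{r}_i(t) = r_i(t) + \beta I_i(t)$ and simply asserts that the modified index is the supremum over stopping times of the ratio of discounted augmented rewards to discounted time, calling this the "Modified Gittins Index."

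That said, you go further than the paper in two places, and one of them actually fills a genuine gap that the paper leaves open. The proposition asserts the additive form $G_i^{*} = G_i^{\text{standard}} + \beta V_i^{\text{ISQ}}$, but the paper's proof only \emph{defines} $G_i^{*}$ as the supremum of the augmented ratio and never shows that this supremum decomposes into the sum of a classical index and a constant. In general $\sup_\tau \frac{A(\tau)+B(\tau)}{C(\tau)} \neq \sup_\tau \frac{A(\tau)}{C(\tau)} + \sup_\tau \frac{B(\tau)}{C(\tau)}$, so the additive form requires an argument. Your "one-step fair-charge" step supplies exactly the missing idea: if the per-pull catalytic contribution of arm $i$ is constant (equal to $V_i^{\text{ISQ}}$) over the relevant horizon, then the second stream's ratio is constant in $\tau$ and the supremum trivially separates. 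Be careful, though, that this constancy is itself not obvious once you condition on the fact that each pull of arm $i$ reduces the residual variance of $\epsilon$, so $I_i(t)$ should strictly decline in $t$; you would need to show either that the index problem is solved at $\tau=1$ for the catalytic stream, or to reinterpret $V_i^{\text{ISQ}}$ as the one-shot value realized at the first pull, matching the paper's static definition.

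The second place you go further is the coupling issue. You correctly flag that pulling any arm moves the common state $b_0$, so the arm processes are no longer independent and the textbook interchange/prevailing-charge argument does not apply verbatim. The paper's proof does not mention this at all and implicitly assumes indexability. Your suggested workaround (normal-linear posterior precisions evolve deterministically, coupling is "state-additive") is the right intuition, but as you yourself note it is the step that would "consume the most care." I would flag it as the key unproved lemma: you need either a restless-bandit/Whittle-index argument, or a branching-bandit reformulation in which the catalytic accumulation is handled as a separate non-competing project. Neither is established in the paper, so on this point you are at least as rigorous as the source and more honest about where the gap lies.
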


\begin{proof}
See Appendix F.5.
\end{proof}

\subsubsection{Optimal Stopping}

When should catalytic exploration cease?

\begin{theorem}[Optimal Stopping Rule]\label{thm:stopping}
In a sequential learning setting, the optimal stopping time $\tau^*$ is characterized by a threshold on the posterior uncertainty:
\begin{equation}
    \tau^* = \inf\left\{t: \sigma_{\epsilon,t} < \bar{\sigma}_\epsilon\right\}
\end{equation}
The DM stops exploring not when she finds a good option, but when she becomes sufficiently certain about the status quo. The threshold $\bar{\sigma}_\epsilon$ is structurally analogous to the static exploration boundary, adjusted for the continuation value of learning.
\end{theorem}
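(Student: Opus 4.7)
The plan is to cast the problem as an optimal stopping problem on the posterior belief about $\epsilon$, collapse the relevant state to the posterior standard deviation $\sigma_{\epsilon,t}$ by exploiting Gaussian martingale structure, and then apply a monotonicity/single-crossing argument using the properties of $V^{ISQ}$ established in Proposition \ref{prop:catalytic}. In the Gaussian learning model, each additional signal about $\epsilon$ leaves the posterior mean $m_t$ a martingale while the posterior variance $\sigma_{\epsilon,t}^2$ evolves deterministically along a map $\sigma \mapsto \Gamma(\sigma) < \sigma$. By the martingale property and risk neutrality, the incremental option value delivered by one more signal depends on the posterior only through $\sigma_{\epsilon,t}$: the catalytic component from Theorem \ref{thm:decomposition} enters as $V^{ISQ}(\sigma_{\epsilon,t})$, while the switching component averages over the innocuous $m_t$-shift. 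This reduces the state to the scalar $\sigma$, and the Bellman equation reads
\begin{equation}
V(\sigma)=\max\!\bigl\{\,\mu_0,\ -c+\delta\bigl[V^{ISQ}(\sigma)+\mathbb{E}\,V(\Gamma(\sigma))\bigr]\bigr\}.
\end{equation}

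Next, I would establish the threshold structure. Proposition \ref{prop:catalytic} gives that $V^{ISQ}$ is strictly increasing in $\sigma$. A standard monotone-operator argument on the Bellman operator (which preserves the class of bounded, increasing, continuous functions) yields that the fixed point $V$ is itself increasing, so $\mathbb{E}V(\Gamma(\sigma))$ inherits monotonicity. Define the continuation premium
\begin{equation}
D(\sigma)\equiv -c+\delta\bigl[V^{ISQ}(\sigma)+\mathbb{E}V(\Gamma(\sigma))\bigr]-\mu_0.
\end{equation}
Then $D$ is strictly increasing in $\sigma$, with $D(0)<0$ (zero uncertainty kills the catalytic value) and, under a mild non-triviality condition, $D(\sigma)>0$ for large $\sigma$. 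By single crossing, there exists a unique $\bar\sigma_\epsilon$ with $D(\bar\sigma_\epsilon)=0$, and the optimal rule is to stop exactly when $\sigma_{\epsilon,t}$ falls below this level, i.e., $\tau^{*}=\inf\{t:\sigma_{\epsilon,t}<\bar\sigma_\epsilon\}$.

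To verify the remark that $\bar\sigma_\epsilon$ is ``structurally analogous to the static exploration boundary, adjusted for the continuation value of learning,'' I would compare to the static threshold $\bar\sigma^{\text{st}}$ defined by $\delta V^{ISQ}(\bar\sigma^{\text{st}})=c$ (the baseline of Theorem \ref{thm:existence}, where the DM has only one shot). Because the dynamic continuation term $\delta\,\mathbb{E}V(\Gamma(\sigma))-\mu_0$ is nonnegative (the DM always retains the option to stop immediately), $D(\sigma)\ge -c+\delta V^{ISQ}(\sigma)$, which implies $\bar\sigma_\epsilon\le\bar\sigma^{\text{st}}$. The gap between the two thresholds is precisely the option value of further sequential learning, matching the stated adjustment.

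The main obstacle will be rigorously justifying the reduction of the state space to $\sigma_{\epsilon,t}$. Although martingale considerations make this plausible and the argument is standard in Gaussian bandit/learning problems, a fully rigorous treatment requires showing by induction (or via a symmetry argument on the class of value functions) that $V(m,\sigma)$ is affine in $m$ with coefficient depending only on $\sigma$, so that the sign of the stop/continue difference is independent of $m_t$. A secondary issue is guaranteeing strict (not merely weak) monotonicity of $\mathbb{E}V(\Gamma(\sigma))$; this can be controlled by assuming the precision-update map $\Gamma$ is strictly monotone and appealing to the strict concavity of $V^{ISQ}$ documented in Proposition \ref{prop:catalytic}.
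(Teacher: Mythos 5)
Your proof takes a genuinely different route from the paper's. You establish the threshold structure from first principles via a monotone-operator argument on the Bellman operator and a single-crossing property of the continuation premium $D(\sigma)$, exploiting the strict monotonicity of $V^{ISQ}$ in $\sigma_\epsilon$ from Proposition \ref{prop:catalytic}. The paper instead presupposes the threshold form, normalizes the stopping payoff to zero, and derives an approximate characterization of $\bar{\sigma}_\epsilon$ by imposing value-matching and smooth-pasting conditions at the boundary, then substituting the high-uncertainty asymptotic $\partial V^{ISQ}/\partial\sigma_\epsilon \approx 1/\sqrt{2\pi}$ to obtain $\bar{\sigma}_\epsilon \approx c\sqrt{2\pi}/\delta$. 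Your approach is more rigorous on the existence and uniqueness of the threshold (the paper essentially assumes a threshold policy rather than proving one is optimal), and your comparison $\bar{\sigma}_\epsilon \le \bar{\sigma}^{\text{st}}$ gives a clean reading of the ``adjusted for continuation value'' remark; what the paper buys is a concrete quantitative formula for the boundary. You also correctly flag the state-space reduction to the scalar $\sigma_{\epsilon,t}$ as the principal obstacle---the paper finesses this by working entirely in the catalytic-dominant approximation where $V^{IC}$ is suppressed, so the posterior mean never enters the Bellman equation, and you would need to make the analogous restriction explicit.

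One concrete error to fix: you appeal to ``strict concavity of $V^{ISQ}$'' in justifying strict monotonicity of $\mathbb{E}V(\Gamma(\sigma))$, but $V^{ISQ}$ is actually \emph{convex} in $\sigma_\epsilon$. From Proposition \ref{prop:catalytic}, $\partial V^{ISQ}/\partial\sigma_\epsilon = \phi(-\Delta/\sigma_\epsilon)$, which is strictly \emph{increasing} in $\sigma_\epsilon$ (the argument moves toward zero, where $\phi$ peaks); equivalently $\partial^2 V^{ISQ}/\partial\sigma_\epsilon^2 = (\Delta^2/\sigma_\epsilon^3)\phi(-\Delta/\sigma_\epsilon) > 0$. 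Your single-crossing argument does not actually need concavity---strict monotonicity of $V^{ISQ}$ plus a strictly contractive precision map $\Gamma$ suffices---so this is repairable, but as stated the appeal to concavity is unsupported and in fact false.
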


\begin{proof}
See Appendix F.6.
\end{proof}

\section{Conclusion}

\subsection{Summary of Contributions}

This paper develops a theory of catalytic exploration to explain a pervasive economic puzzle: why rational agents invest costly resources in exploring options they have little intention of adopting. The central insight is that exploration serves a dual informational purpose. It is not merely an instrument for discovering superior alternatives, but a mechanism for resolving endogenous uncertainty about the status quo through comparative evaluation.

Our analysis advances the literature along four dimensions. Conceptually, we introduce the notion of ``reflexive choice'' to formalize the intuition that external search generates self-knowledge within a neoclassical framework. Technically, we provide a rigorous decomposition of option value and derive a closed-form solution for the catalytic value, establishing that it scales linearly with status quo uncertainty in the limit. This asymptotic property proves that uncertainty alone can drive exploration even when the probability of switching approaches zero.

Strategically, we demonstrate that these catalytic motives can destabilize markets. When receivers explore indiscriminately to validate their own preferences, the single-crossing property fails to separate types, leading to the collapse of signaling equilibria. Finally, our welfare analysis identifies the ``catalytic externality'' as a novel source of market failure. Because agents do not internalize the inspection costs imposed on the benchmarks they explore, equilibrium outcomes are characterized by excessive exploration, justifying targeted policy interventions such as application fees or interaction limits.

\subsection{Final Remarks}

Canonical search theory typically views exploration as a means to an end—specifically, the end of replacing a current match with a better one. Our framework suggests a more nuanced view: exploration can be a means to affirm the end one already possesses.

In an era of falling search costs and abundant options, the binding constraint for decision-makers is often not the availability of alternatives, but the precision of their own preference evaluations. In this context, exploration truly is not what it seeks. It functions as a mirror that reflects our uncertainty about ourselves and our current circumstances. Its primary value lies not in the new horizons it reveals, but in the clarity it brings to familiar territory.

\begin{appendix}

\section{Proofs for Section 2 (Baseline Model)}

\subsection{Proof of Theorem 1 (Option Value Decomposition)}

\begin{proof}
We verify the unique decomposition of the total option value into switching and catalytic components and establish their non-negativity properties. To derive the algebraic decomposition, recall that the total option value is defined as $\text{OV} \equiv \E[\max\{\mu_0 + \epsilon, \theta\}] - \max\{\mu_0, \mu_1\}$. By adding and subtracting the intermediate term $\E[\max\{\mu_0 + \epsilon, \mu_1\}]$—which represents the value of exploring against a degenerate challenger with deterministic quality $\mu_1$—and invoking the linearity of expectations, we rearrange the expression as:
\begin{align}
    \text{OV} &= \left( \E[\max\{\mu_0 + \epsilon, \theta\}] - \E[\max\{\mu_0 + \epsilon, \mu_1\}] \right) \\
    &\quad + \left( \E[\max\{\mu_0 + \epsilon, \mu_1\}] - \max\{\mu_0, \mu_1\} \right).
\end{align}
By the definitions established in the main text, the first bracketed term corresponds to the switching value $V^{IC}$, and the second to the catalytic value $V^{ISQ}$. Thus, the identity $\text{OV} = V^{IC} + V^{ISQ}$ holds.

The non-negativity of the catalytic value $V^{ISQ}$ follows directly from Jensen's inequality. Consider the function $g(x) \equiv \max\{x, \mu_1\}$. Since $g(\cdot)$ is convex and the random variable $X = \mu_0 + \epsilon$ satisfies $\E[X] = \mu_0$, we have:
\begin{equation}
    \E[\max\{\mu_0 + \epsilon, \mu_1\}] \geq \max\{\E[\mu_0 + \epsilon], \mu_1\} = \max\{\mu_0, \mu_1\}.
\end{equation}
Consequently, $V^{ISQ} \ge 0$. To establish that $V^{ISQ}$ is strictly positive under non-degenerate uncertainty, assume the challenger is inferior ($\mu_1 < \mu_0$) and $\epsilon \sim \mathcal{N}(0, \sigma_\epsilon^2)$ with $\sigma_\epsilon > 0$, which implies $\max\{\mu_0, \mu_1\} = \mu_0$. Let $\Delta \equiv \mu_0 - \mu_1 > 0$ denote the quality gap and define the event $A \equiv \{\epsilon < -\Delta\}$. The catalytic value can be expressed as:
\begin{align}
    V^{ISQ} &= \E[\max\{\mu_0 + \epsilon, \mu_1\}] - \mu_0 \\
    &= \Pr(A)\mu_1 + \Pr(A^c)\E[\mu_0 + \epsilon \mid A^c] - \mu_0 \\
    &= \Pr(A)(\mu_1 - \mu_0) + \Pr(A^c)\E[\epsilon \mid A^c] \\
    &= -\Delta \Pr(A) + \E[\epsilon \mathbf{1}_{A^c}].
\end{align}
Using the zero-mean property $\E[\epsilon] = 0$, which implies $\E[\epsilon \mathbf{1}_{A^c}] = -\E[\epsilon \mathbf{1}_{A}]$, we substitute back to obtain:
\begin{align}
    V^{ISQ} &= -\Delta \Pr(A) - \E[\epsilon \mathbf{1}_{A}] \\
    &= \Pr(A) \left( -\Delta - \E[\epsilon \mid A] \right).
\end{align}
Conditional on the event $A = \{\epsilon < -\Delta\}$, the inequality $\epsilon < -\Delta$ holds strictly, implying $\E[\epsilon \mid A] < -\Delta$, or equivalently $-\E[\epsilon \mid A] - \Delta > 0$. Since $\sigma_\epsilon > 0$ ensures that the probability of this event $\Pr(A) = \Phi(-\Delta/\sigma_\epsilon)$ is strictly positive, it follows that $V^{ISQ} > 0$.

We similarly establish the non-negativity of the switching value $V^{IC}$. Fix any realization of $\epsilon$ and consider the function $h(\theta) \equiv \max\{\mu_0 + \epsilon, \theta\}$. Since $h(\cdot)$ is convex in $\theta$, the conditional Jensen's inequality implies:
\begin{equation}
    \E_\theta[\max\{\mu_0 + \epsilon, \theta\} \mid \epsilon] \geq \max\{\mu_0 + \epsilon, \E[\theta]\} = \max\{\mu_0 + \epsilon, \mu_1\}.
\end{equation}
Taking expectations over $\epsilon$ preserves the inequality, yielding $\E[\max\{\mu_0 + \epsilon, \theta\}] \geq \E[\max\{\mu_0 + \epsilon, \mu_1\}]$, which confirms $V^{IC} \ge 0$. Finally, given the benchmark value $\mu_1$, this decomposition is algebraically unique, identifying $V^{ISQ}$ as the option value generated solely by status quo volatility.
\end{proof}

\subsection{Proof of Proposition 1 (Catalytic Value Formula and Limits)}

\begin{proof}
Under the normality assumption $\epsilon \sim \mathcal{N}(0, \sigma_\epsilon^2)$, we derive the exact closed-form solution for the catalytic value and characterize its asymptotic properties. Recalling that the inferior challenger condition $\mu_1 < \mu_0$ implies $\max\{\mu_0, \mu_1\} = \mu_0$, the catalytic value is defined as the net option value $V^{ISQ} = \E[\max\{\mu_0 + \epsilon, \mu_1\}] - \mu_0$. Let $\Delta \equiv \mu_0 - \mu_1 > 0$ denote the ex-ante quality gap. Expressing the expectation as an integral over the density of $\epsilon$, we have:
\begin{align}
    \E[\max\{\mu_0 + \epsilon, \mu_1\}] &= \int_{-\infty}^{\infty} \max\{\mu_0 + x, \mu_1\} \frac{1}{\sigma_\epsilon\sqrt{2\pi}} \exp\left(-\frac{x^2}{2\sigma_\epsilon^2}\right) dx.
\end{align}
The max operator partitions the domain of integration at the threshold $x = -\Delta$. Splitting the integral accordingly yields:
\begin{align}
    \E[\cdot] &= \int_{-\infty}^{-\Delta} \mu_1 \frac{1}{\sigma_\epsilon} \phi\left(\frac{x}{\sigma_\epsilon}\right) dx + \int_{-\Delta}^{\infty} (\mu_0 + x) \frac{1}{\sigma_\epsilon} \phi\left(\frac{x}{\sigma_\epsilon}\right) dx.
\end{align}
The first term integrates directly to $\mu_1 \Phi\left(\frac{-\Delta}{\sigma_\epsilon}\right)$. The second term decomposes into a constant component $\mu_0 [1 - \Phi(\frac{-\Delta}{\sigma_\epsilon})]$ and a linear component involving $x$. Evaluating the linear component via the substitution $z = x/\sigma_\epsilon$ gives:
\begin{equation}
    \int_{-\Delta}^{\infty} x \frac{1}{\sigma_\epsilon} \phi\left(\frac{x}{\sigma_\epsilon}\right) dx = \sigma_\epsilon \int_{-\Delta/\sigma_\epsilon}^{\infty} z \phi(z) dz = \sigma_\epsilon \left[ -\phi(z) \right]_{-\Delta/\sigma_\epsilon}^\infty = \sigma_\epsilon \phi\left(\frac{-\Delta}{\sigma_\epsilon}\right).
\end{equation}
Combining these terms and subtracting the baseline payoff $\mu_0$, we obtain the exact formula:
\begin{align}
    V^{ISQ} &= \mu_1 \Phi\left(\frac{-\Delta}{\sigma_\epsilon}\right) + \mu_0 \left[1 - \Phi\left(\frac{-\Delta}{\sigma_\epsilon}\right)\right] + \sigma_\epsilon \phi\left(\frac{-\Delta}{\sigma_\epsilon}\right) - \mu_0 \\
    &= (\mu_1 - \mu_0) \Phi\left(\frac{-\Delta}{\sigma_\epsilon}\right) + \sigma_\epsilon \phi\left(\frac{-\Delta}{\sigma_\epsilon}\right) \\
    &= -\Delta \Phi\left(\frac{-\Delta}{\sigma_\epsilon}\right) + \sigma_\epsilon \phi\left(\frac{-\Delta}{\sigma_\epsilon}\right).
\end{align}

We proceed to characterize the limiting behavior of this expression. Consider first the case of an asymptotically inferior challenger where $\mu_1 \to -\infty$, implying $\Delta \to \infty$. Let $z \equiv -\Delta/\sigma_\epsilon$. As $z \to -\infty$, we apply Mill's Ratio approximation $\Phi(z) \sim \phi(z)/|z|$ to find:
\begin{align}
    V^{ISQ} &\sim -\Delta \frac{\phi(z)}{|z|} + \sigma_\epsilon \phi(z) \\
    &= -\Delta \frac{\phi(z)}{\Delta/\sigma_\epsilon} + \sigma_\epsilon \phi(z) = -\sigma_\epsilon \phi(z) + \sigma_\epsilon \phi(z) = 0.
\end{align}
Thus, the catalytic value vanishes as the challenger becomes irrelevant for comparison.

Conversely, consider the high-uncertainty limit where $\sigma_\epsilon \to \infty$ with $\Delta$ fixed. Let $w \equiv -\Delta/\sigma_\epsilon$, noting that $w \to 0^-$ as $\sigma_\epsilon \to \infty$. We employ second-order Taylor expansions for $\Phi(w)$ and $\phi(w)$ around zero:
\begin{align}
    \Phi(w) &= \frac{1}{2} + \frac{w}{\sqrt{2\pi}} + O(w^3) = \frac{1}{2} - \frac{\Delta}{\sigma_\epsilon\sqrt{2\pi}} + O\left(\frac{\Delta^3}{\sigma_\epsilon^3}\right), \\
    \phi(w) &= \frac{1}{\sqrt{2\pi}} - \frac{w^2}{2\sqrt{2\pi}} + O(w^4) = \frac{1}{\sqrt{2\pi}} - \frac{\Delta^2}{2\sigma_\epsilon^2\sqrt{2\pi}} + O\left(\frac{\Delta^4}{\sigma_\epsilon^4}\right).
\end{align}
Substituting these expansions into the exact formula yields:
\begin{align}
    V^{ISQ} &= -\Delta \left[\frac{1}{2} - \frac{\Delta}{\sigma_\epsilon\sqrt{2\pi}}\right] + \sigma_\epsilon \left[\frac{1}{\sqrt{2\pi}} - \frac{\Delta^2}{2\sigma_\epsilon^2\sqrt{2\pi}}\right] + O\left(\sigma_\epsilon^{-2}\right) \\
    &= -\frac{\Delta}{2} + \frac{\Delta^2}{\sigma_\epsilon\sqrt{2\pi}} + \frac{\sigma_\epsilon}{\sqrt{2\pi}} - \frac{\Delta^2}{2\sigma_\epsilon\sqrt{2\pi}} + \dots \\
    &= \frac{\sigma_\epsilon}{\sqrt{2\pi}} - \frac{\Delta}{2} + \frac{\Delta^2}{2\sigma_\epsilon\sqrt{2\pi}} + O\left(\sigma_\epsilon^{-2}\right).
\end{align}
Consequently, the catalytic value scales linearly with uncertainty in the limit, satisfying $\lim_{\sigma_\epsilon \to \infty} \frac{V^{ISQ}}{\sigma_\epsilon} = \frac{1}{\sqrt{2\pi}}$.
\end{proof}

\subsection{Proof of Proposition 2 (Comparative Statics of Catalytic Value)}

\begin{proof}
We analyze the sensitivity of the catalytic value $V^{ISQ}$ to changes in uncertainty $\sigma_\epsilon$ and the quality gap $\Delta$ by differentiating the exact closed-form solution derived in Proposition 1. Let $z \equiv -\Delta/\sigma_\epsilon$, observing that the partial derivatives of the argument are $\frac{\partial z}{\partial \sigma_\epsilon} = \frac{\Delta}{\sigma_\epsilon^2}$ and $\frac{\partial z}{\partial \Delta} = -\frac{1}{\sigma_\epsilon}$.

Differentiating $V^{ISQ} = -\Delta \Phi(z) + \sigma_\epsilon \phi(z)$ with respect to uncertainty $\sigma_\epsilon$, and applying the product rule, yields:
\begin{align}
    \frac{\partial V^{ISQ}}{\partial \sigma_\epsilon} &= -\Delta \phi(z) \frac{\partial z}{\partial \sigma_\epsilon} + \left[ \phi(z) + \sigma_\epsilon \phi'(z) \frac{\partial z}{\partial \sigma_\epsilon} \right].
\end{align}
Using the property of the standard normal density $\phi'(z) = -z\phi(z)$, we substitute $\phi'(z) = (\Delta/\sigma_\epsilon)\phi(z)$. The expression becomes:
\begin{align}
    \frac{\partial V^{ISQ}}{\partial \sigma_\epsilon} &= -\Delta \phi(z) \left(\frac{\Delta}{\sigma_\epsilon^2}\right) + \phi(z) + \sigma_\epsilon \left(\frac{\Delta}{\sigma_\epsilon}\phi(z)\right) \left(\frac{\Delta}{\sigma_\epsilon^2}\right) \\
    &= -\frac{\Delta^2}{\sigma_\epsilon^2}\phi(z) + \phi(z) + \frac{\Delta^2}{\sigma_\epsilon^2}\phi(z).
\end{align}
The first and third terms cancel, simplifying the derivative to the probability density at the threshold:
\begin{equation}
    \frac{\partial V^{ISQ}}{\partial \sigma_\epsilon} = \phi\left(\frac{-\Delta}{\sigma_\epsilon}\right).
\end{equation}
Since $\phi(\cdot)$ is strictly positive, $V^{ISQ}$ is strictly increasing in $\sigma_\epsilon$. Moreover, because the standard normal density is bounded by $\phi(0) = 1/\sqrt{2\pi}$, the marginal value of uncertainty is strictly bounded, $0 < \frac{\partial V^{ISQ}}{\partial \sigma_\epsilon} \le \frac{1}{\sqrt{2\pi}}$.

Turning to the quality gap $\Delta$, we differentiate $V^{ISQ}$ with respect to $\Delta$, obtaining:
\begin{align}
    \frac{\partial V^{ISQ}}{\partial \Delta} &= \left[ -1 \cdot \Phi(z) - \Delta \phi(z) \frac{\partial z}{\partial \Delta} \right] + \sigma_\epsilon \phi'(z) \frac{\partial z}{\partial \Delta}.
\end{align}
Substituting $\phi'(z) = -z\phi(z)$ and factoring out $\frac{\partial z}{\partial \Delta}$ in the latter terms:
\begin{align}
    \frac{\partial V^{ISQ}}{\partial \Delta} &= -\Phi(z) + \frac{\partial z}{\partial \Delta} \left[ -\Delta \phi(z) + \sigma_\epsilon (-z \phi(z)) \right] \\
    &= -\Phi(z) + \left(-\frac{1}{\sigma_\epsilon}\right) \left[ -\Delta \phi(z) + \sigma_\epsilon \left(\frac{\Delta}{\sigma_\epsilon}\right) \phi(z) \right].
\end{align}
The term in the square brackets vanishes since $-\Delta \phi(z) + \Delta \phi(z) = 0$, leaving:
\begin{equation}
    \frac{\partial V^{ISQ}}{\partial \Delta} = -\Phi\left(\frac{-\Delta}{\sigma_\epsilon}\right).
\end{equation}
Since $\Phi(\cdot) > 0$, the catalytic value is strictly decreasing in $\Delta$.

Finally, to establish the interaction between uncertainty and challenger quality, we compute the cross-partial derivative. Differentiating the marginal value of uncertainty with respect to $\Delta$:
\begin{equation}
    \frac{\partial^2 V^{ISQ}}{\partial \sigma_\epsilon \partial \Delta} = \frac{\partial}{\partial \Delta} \left[ \phi(z) \right] = \phi'(z) \frac{\partial z}{\partial \Delta} = \left(\frac{\Delta}{\sigma_\epsilon}\phi(z)\right) \left(-\frac{1}{\sigma_\epsilon}\right) = -\frac{\Delta}{\sigma_\epsilon^2}\phi(z) < 0.
\end{equation}
This inequality implies that uncertainty and the quality gap are substitutes. However, since the quality gap is defined as $\Delta = \mu_0 - \mu_1$, the relevant cross-derivative with respect to challenger quality $\mu_1$ is:
\begin{equation}
    \frac{\partial^2 V^{ISQ}}{\partial \sigma_\epsilon \partial \mu_1} = \frac{\partial^2 V^{ISQ}}{\partial \sigma_\epsilon \partial \Delta} \cdot \frac{\partial \Delta}{\partial \mu_1} = \left( -\frac{\Delta}{\sigma_\epsilon^2}\phi(z) \right) (-1) = \frac{\Delta}{\sigma_\epsilon^2}\phi(z) > 0.
\end{equation}
Consequently, status quo uncertainty and challenger quality are strategic complements: a higher quality challenger (higher $\mu_1$) increases the marginal catalytic value of uncertainty.
\end{proof}

\subsection{Proof of Proposition 3 (Risk Aversion and Catalytic Value)}

\begin{proof}
We analyze the impact of risk aversion on the magnitude of catalytic value by adopting a Constant Absolute Risk Aversion (CARA) utility specification, $u(x) = -e^{-\gamma x}$ with $\gamma > 0$. Under this framework, the decision-maker maximizes expected utility rather than expected payoff.

Consider first the valuation of the status quo without exploration. Given the assumption $\mu_1 < \mu_0$, the agent defaults to the status quo option $X = \mu_0 + \epsilon$, where $\epsilon \sim \mathcal{N}(0, \sigma_\epsilon^2)$. The expected utility is given by the integral over the normal density:
\begin{equation}
    \E[u(X)] = \int_{-\infty}^{\infty} -e^{-\gamma(\mu_0 + x)} \frac{1}{\sigma_\epsilon}\phi\left(\frac{x}{\sigma_\epsilon}\right) dx = -e^{-\gamma \mu_0} e^{\frac{\gamma^2 \sigma_\epsilon^2}{2}}.
\end{equation}
Defining the Certainty Equivalent (CE) via $u(CE_{no}) = \E[u(X)]$, we derive $CE_{no} = \mu_0 - \frac{\gamma \sigma_\epsilon^2}{2}$. The term $\pi_{no} \equiv \frac{\gamma \sigma_\epsilon^2}{2}$ represents the risk premium required to compensate for the uncertainty of the status quo.

We turn next to the valuation with exploration. Upon exploring, the agent observes the realization of $\epsilon$ and selects $\max\{\mu_0 + \epsilon, \mu_1\}$. Let $\Delta \equiv \mu_0 - \mu_1$. The expected utility from exploration decomposes into regions where the agent switches and where she retains the status quo:
\begin{equation}
    \E[U_{exp}] = \int_{-\infty}^{-\Delta} -e^{-\gamma \mu_1} dH(\epsilon) + \int_{-\Delta}^{\infty} -e^{-\gamma(\mu_0 + x)} dH(\epsilon),
\end{equation}

where $H(\cdot)$ denotes the CDF of $\epsilon$. The first integral evaluates to $-e^{-\gamma \mu_1} \Phi\left(\frac{-\Delta}{\sigma_\epsilon}\right)$. For the second integral, completing the square in the exponent yields the term:
\[
    -e^{-\gamma \mu_0} e^{\frac{\gamma^2 \sigma_\epsilon^2}{2}} \left[ 1 - \Phi\left(\frac{-\Delta + \gamma \sigma_\epsilon^2}{\sigma_\epsilon}\right) \right].
\]
We denote the certainty equivalent of this lottery as $CE_{yes}$.

To characterize the effect of risk aversion, we define the catalytic value under CARA as $V^{ISQ}_{CARA} \equiv CE_{yes} - CE_{no}$. Recall that under risk neutrality ($\gamma \to 0$), the value is $V^{ISQ}_{RN} = \E[\max\{X, \mu_1\}] - \E[X]$. We decompose the certainty equivalents into their expected values and risk premia:
\begin{align}
    CE_{no} &= \E[X] - \pi_{no}, \\
    CE_{yes} &= \E[\max\{X, \mu_1\}] - \pi_{yes},
\end{align}
where $\pi_{yes}$ is the risk premium associated with the truncated variable $Y = \max\{\mu_0+\epsilon, \mu_1\}$. Substituting these into the definition of $V^{ISQ}_{CARA}$, we obtain:
\begin{align}
    V^{ISQ}_{CARA} &= \left( \E[\max\{X, \mu_1\}] - \pi_{yes} \right) - \left( \E[X] - \pi_{no} \right) \\
    &= V^{ISQ}_{RN} + (\pi_{no} - \pi_{yes}).
\end{align}
The variable $Y$ is constructed by truncating the left tail of the normal distribution $X$ at $\mu_1$, a transformation that strictly reduces variance, implying $\Var(Y) < \Var(X)$. Since the risk premium under CARA utility is strictly increasing in the variance of the payoff (approximated by $\frac{\gamma}{2}\Var(\cdot)$), it follows that $\pi_{yes} < \pi_{no}$. The difference $(\pi_{no} - \pi_{yes})$ represents the positive \textit{insurance value} of exploration: by exploring, the agent eliminates the downside risk of the status quo. Consequently, $V^{ISQ}_{CARA} > V^{ISQ}_{RN}$, confirming that risk aversion amplifies the motive for catalytic exploration.
\end{proof}

\subsection{Proof of Proposition 4 (Stationary Exploration Policy)}

\begin{proof}
We derive the optimal exploration policy in an infinite-horizon continuous-time framework. To maintain analytical tractability, we focus on the catalytic-dominant regime where $V^{IC} \approx 0$, approximating the total option value by $V^{ISQ}$.

Assume challengers arrive according to a Poisson process with intensity $\lambda$, exploration entails a sunk cost $c$, and the decision-maker discounts future payoffs at rate $r$. The state variable is the current level of status quo uncertainty $\sigma_\epsilon$.

Let $V(\sigma_\epsilon)$ denote the value function. The Hamilton-Jacobi-Bellman (HJB) equation governing the system is:
\begin{equation}
    rV(\sigma_\epsilon) = \mu_0 + \lambda \max\left\{0, -c + V^{ISQ}(\sigma_\epsilon) + [V(0) - V(\sigma_\epsilon)]\right\}.
\end{equation}
The term $V(0)$ represents the continuation value after uncertainty is fully resolved (perfect learning), which, in steady state, is simply the perpetuity value of the status quo $\mu_0/r$. The maximization problem implies that exploration is optimal if and only if the net gain is non-negative, i.e., $-c + V^{ISQ}(\sigma_\epsilon) + V(0) - V(\sigma_\epsilon) \ge 0$.

In the region where exploration is optimal, we substitute the explicit form of the max operator back into the HJB equation to find:
\begin{equation}
    (r+\lambda)V(\sigma_\epsilon) = \mu_0 + \lambda(V^{ISQ}(\sigma_\epsilon) - c + \mu_0/r).
\end{equation}
Solving for $V(\sigma_\epsilon)$ yields $V(\sigma_\epsilon) = \frac{\mu_0}{r} + \frac{\lambda}{r+\lambda} (V^{ISQ}(\sigma_\epsilon) - c)$. Substituting this solution back into the exploration condition reveals that the dynamic threshold reduces to the static inequality $V^{ISQ}(\sigma_\epsilon) \ge c$. This simplification arises because the Poisson arrival creates a sequence of one-shot exploration opportunities; if the agent declines to explore, the environment remains unchanged until the next arrival.

Invoking the high-uncertainty asymptotic approximation derived in Proposition 1:
\[
    V^{ISQ}(\sigma_\epsilon) \approx \frac{\sigma_\epsilon}{\sqrt{2\pi}},
\]
we define the critical uncertainty threshold $\bar{\sigma}_\epsilon$ by the equality $V^{ISQ}(\bar{\sigma}_\epsilon) = c$. Solving for the threshold yields $\bar{\sigma}_\epsilon \approx c\sqrt{2\pi}$. Consequently, catalytic exploration is sustainable in the long run if the uncertainty process is regenerative (e.g., mean-reverting) such that $\sigma_\epsilon$ frequently exceeds $\bar{\sigma}_\epsilon$.
\end{proof}

\subsection{Proof of Theorem 2 (Decoupling of Exploration and Switching)}

\begin{proof}
We analyze the asymptotic behavior of exploration and switching incentives.
Recall that exploration is optimal if $\delta V^{ISQ} > c$. From Proposition 1, we established that $\lim_{\sigma_{\epsilon}\rightarrow\infty} V^{ISQ} = \infty$. Thus, for any fixed cost $c$, there exists a threshold $\tilde{\sigma}$ such that for all $\sigma_{\epsilon} > \tilde{\sigma}$, exploration occurs with probability 1.

Now consider the switching probability conditional on exploration. Under Assumption 1, this is given by $P_{switch} = \Phi(\frac{-\Delta}{\sqrt{\sigma_{\epsilon}^2 + \sigma_{\theta}^2}})$. As $\sigma_{\epsilon} \rightarrow \infty$ (with $\Delta$ fixed), the argument of the CDF approaches 0 from the negative side. Consequently, $P_{switch} \rightarrow \Phi(0) = 0.5$.

Crucially, this establishes that while the \textit{motive} to explore becomes infinite (driven by $V^{ISQ}$), the \textit{outcome} of switching does not converge to unity. The gap between exploration rate (1.0) and switching rate ($\le 0.5$) formalizes the existence of the predominantly catalytic regime.
\end{proof}

\section{Proofs for Section 3 (Strategic Interactions)}

\subsection{Proof of Theorem 3 (Catalytic Disruption of Separating Equilibrium)}

\begin{proof}
We analyze the stability of the separating equilibrium by characterizing the receiver's exploration incentives and the sender's consequent signaling strategy. Let $P_s(\theta) \equiv \Pr(\theta > \mu_0 + \epsilon) = \Phi\left(\frac{\theta - \mu_0}{\sigma_\epsilon}\right)$ denote the probability that a challenger of type $\theta$ is chosen conditional on being explored. The total option value of exploring type $\theta$ is given by Theorem 1 as $V(\theta) = V^{ISQ} + V^{IC}(\theta)$. We maintain the standard assumption that the switching value alone is insufficient to justify exploring the low type, i.e., $\delta V^{IC}(\theta_L) < c$.

Consider first the standard Least Cost Separating Equilibrium (LCSE) which prevails when catalytic motives are weak. In this equilibrium, the low type $\theta_L$ chooses effort $e_L = 0$, and the high type $\theta_H$ chooses the minimum effort $e_H^* > 0$ sufficient to deter mimicry. The decision-maker (DM) forms beliefs $\mu(0) = \theta_L$ and $\mu(e_H^*) = \theta_H$. The DM explores upon observing $e_H^*$ (since $\delta V(\theta_H) > c$) and rejects upon observing $e_L$ (since $\delta V(\theta_L) < c$). The high effort level $e_H^*$ is pinned down by the binding Incentive Compatibility (IC) constraint for the low type:
\begin{equation}
    W P_s(\theta_L) - \psi(e_H^*, \theta_L) = 0,
\end{equation}
where the RHS represents the payoff from the outside option (normalized to zero). Note that $W P_s(\theta_L)$ is the expected payoff if the low type successfully mimics the high type and is explored.

We now introduce the catalytic disruption mechanism. The DM's strategy of rejecting the low type is sustainable only if the total option value of exploring a known low type remains below the cost threshold. Specifically, separation requires $\delta [V^{ISQ} + V^{IC}(\theta_L)] < c$. We define the critical catalytic threshold $\bar{V}^{ISQ}$ as the value solving this inequality with equality:
\begin{equation}
    \bar{V}^{ISQ} \equiv \frac{c}{\delta} - V^{IC}(\theta_L).
\end{equation}
When $V^{ISQ} > \bar{V}^{ISQ}$, the expected gain from exploring the low type exceeds the cost, $\delta V(\theta_L) > c$. Since the option value $V(\theta)$ is monotonic in type and expected utility is linear in beliefs, it follows that for any posterior belief $\mu$, the expected option value satisfies $\delta \E_\mu[V(\theta)] \ge \delta V(\theta_L) > c$. Consequently, the DM's optimal strategy shifts to unconditional exploration: she explores regardless of the observed signal $e$.

Under this regime of unconditional exploration, the marginal benefit of signaling vanishes. The high type's payoff from choosing effort $e > 0$ is $W P_s(\theta_H) - \psi(e, \theta_H)$, whereas the payoff from choosing $e=0$ is $W P_s(\theta_H) - \psi(0, \theta_H)$. Since the probability of acceptance $P_s(\theta_H)$ is now independent of the signal, and signaling is costly ($\psi_e > 0$), the high type strictly prefers $e=0$. Thus, the separating equilibrium collapses, and the unique sequential equilibrium involves pooling at zero effort. As $V^{ISQ}$ crosses $\bar{V}^{ISQ}$, the equilibrium signaling effort drops discontinuously from $e_H^*$ to 0.
\end{proof}

\subsection{Proof of Proposition 5 (Welfare Effects of Signal Destruction)}

\begin{proof}
We evaluate the utilitarian social welfare implications of the transition from the separating equilibrium (SEP) to the pooling equilibrium (POOL). We define social welfare $\mathcal{W}$ as the sum of the Challenger's and the DM's expected payoffs, net of the purely redistributive transfer $W$.

In the separating equilibrium (for $V^{ISQ} < \bar{V}^{ISQ}$), resources are dissipated through signaling. The high type signals with effort $e_H^*$ and is explored, while the low type signals zero and is rejected. The total social surplus is:
\begin{equation}
    \mathcal{W}^{SEP} = p \left( \delta V(\theta_H) + \mu_0 - c - \psi(e_H^*, \theta_H) \right) + (1-p) \mu_0.
\end{equation}

In the pooling equilibrium (for $V^{ISQ} > \bar{V}^{ISQ}$), both types exert zero effort and both are explored due to the dominance of the catalytic motive. The social surplus becomes:
\begin{equation}
    \mathcal{W}^{POOL} = p \left( \delta V(\theta_H) + \mu_0 - c \right) + (1-p) \left( \delta V(\theta_L) + \mu_0 - c \right).
\end{equation}
The welfare differential $\Delta \mathcal{W} \equiv \mathcal{W}^{POOL} - \mathcal{W}^{SEP}$ decomposes into:
\begin{align}
    \Delta \mathcal{W} &= p \cdot \psi(e_H^*, \theta_H) + (1-p) \cdot \left( \delta V(\theta_L) - c \right).
\end{align}
The first term, $p \psi(e_H^*, \theta_H)$, represents the \textbf{Signaling Savings}: the recovery of deadweight loss. This term is strictly positive. The second term captures the \textbf{Net Value of Exploring Low Types}. 

Critically, the condition for collapse is $\delta V(\theta_L) > c$. This implies that the DM privately finds exploration optimal. However, this does not guarantee social optimality if the exploration cost exceeds the social surplus generated. Thus, the disruption of signaling increases aggregate social welfare \textbf{if and only if} the signaling savings exceed any potential deadweight loss from exploring low types. High types suffer a loss of utility due to the loss of separation, while low types gain from increased exploration probabilities.
\end{proof}

\subsection{Proof of Proposition 6 (Threshold Comparative Statics)}

\begin{proof}
We examine the sensitivity of the critical threshold $\bar{V}^{ISQ} = c/\delta - V^{IC}(\theta_L)$. Differentiating with respect to the search cost $c$ yields $\frac{\partial \bar{V}^{ISQ}}{\partial c} = \frac{1}{\delta} > 0$. Consequently, higher exploration friction expands the region where the separating equilibrium remains viable. Similarly, $\frac{\partial \bar{V}^{ISQ}}{\partial \delta} = -\frac{c}{\delta^2} < 0$, implying that increased patience renders the separating equilibrium more fragile.

Turning to the quality of the low type, we apply the property $\frac{\partial V^{IC}}{\partial \theta_L} = \Pr(\theta_L > \mu_0 + \epsilon)$. It follows that:
\begin{equation}
    \frac{\partial \bar{V}^{ISQ}}{\partial \theta_L} = - \Pr(\theta_L > \mu_0 + \epsilon) < 0.
\end{equation}
This indicates that as the low type becomes more attractive, the instrumental value of exploring even the worst type increases, reducing the threshold for catalytic disruption. Finally, the threshold is invariant to $W$, satisfying $\frac{\partial \bar{V}^{ISQ}}{\partial W} = 0$.
\end{proof}

\subsection{Proof of Theorem 4 (Equilibrium Regimes)}

\begin{proof}
To characterize the equilibrium regimes with continuous types, we extend the type space to $\theta \sim G(\cdot)$ on $[\underline{\theta}, \bar{\theta}]$. We establish the existence of equilibrium regimes by analyzing the marginal type justifying exploration. The DM explores a type $\theta$ iff $\delta [V^{ISQ} + V^{IC}(\theta)] \ge c$. Since $V^{IC}(\theta)$ is strictly increasing in $\theta$, we define a unique cutoff $\hat{\theta}$ implicitly by the binding condition:
\begin{equation}
    V^{IC}(\hat{\theta}) = \frac{c}{\delta} - V^{ISQ}.
\end{equation}
Types $\theta \ge \hat{\theta}$ warrant exploration; types $\theta < \hat{\theta}$ do not. The equilibrium structure is fully determined by the location of $\hat{\theta}$ relative to the support $[\underline{\theta}, \bar{\theta}]$:

1. \textbf{Market Shutdown / Inactive:} If $V^{ISQ}$ is sufficiently low (or costs are high) such that $\hat{\theta} > \bar{\theta}$, then no type $\theta \in [\underline{\theta}, \bar{\theta}]$ justifies exploration. The unique outcome is pooling at zero effort with rejection for all types. (This corresponds to the trivial "no-trade" benchmark).

2. \textbf{Partial Separation / Standard Riley Regime:} If $V^{ISQ}$ takes an intermediate value such that $\hat{\theta} \in (\underline{\theta}, \bar{\theta}]$, the market segments. Types in the lower interval $[\underline{\theta}, \hat{\theta})$ cannot generate sufficient option value to justify exploration. Consequently, they pool at zero effort and are \textbf{rejected}. Types $\theta \ge \hat{\theta}$ separate according to the standard Riley differential equation and are explored. This corresponds to the "Full Separation" (if $\hat{\theta}$ is close to $\underline{\theta}$) or "Partial Separation" regimes described in the text.

3. \textbf{Complete Collapse:} If $V^{ISQ}$ is sufficiently high such that $c/\delta - V^{ISQ} \le V^{IC}(\underline{\theta})$, then even the lowest type generates sufficient option value to justify exploration. In this case, $\hat{\theta} \le \underline{\theta}$, and the exclusion region is empty. With unconditional exploration for all types, the marginal benefit of signaling is zero. Thus, the unique equilibrium is complete pooling at $e(\theta) = 0$ for all $\theta \in [\underline{\theta}, \bar{\theta}]$.
\end{proof}

\subsection{Proof of Proposition 7 (Structure of Exclusion Region)}

\begin{proof}
We characterize the comparative statics of the partial separation structure. The cutoff type $\hat{\theta}$ is defined by the indifference condition $V^{IC}(\hat{\theta}) = c/\delta - V^{ISQ}$.
Differentiating both sides with respect to $V^{ISQ}$ and applying the Implicit Function Theorem yields:
\begin{equation}
    \frac{d \hat{\theta}}{d V^{ISQ}} = - \frac{1}{ \frac{\partial V^{IC}}{\partial \theta}(\hat{\theta}) }.
\end{equation}
Since the switching value $V^{IC}(\theta)$ is strictly increasing in $\theta$ (i.e., $\frac{\partial V^{IC}}{\partial \theta} > 0$), it follows that $\frac{d \hat{\theta}}{d V^{ISQ}} < 0$.
This confirms that an increase in catalytic value $V^{ISQ}$ monotonically lowers the quality threshold required to justify exploration.

In this regime, the \textbf{exclusion region} is given by $[\underline{\theta}, \hat{\theta})$. Since $\frac{d \hat{\theta}}{d V^{ISQ}} < 0$, as status quo uncertainty increases (which increases $V^{ISQ}$), the cutoff $\hat{\theta}$ shifts to the left. Consequently, the size of the exclusion region shrinks, while the separation/exploration region $[\hat{\theta}, \bar{\theta}]$ expands.
\end{proof}

\section{Proofs for Section 4 (Endogenous Information)}

\subsection{Proof of Theorem 5 (Optimal Information Design)}

\begin{proof}
We maximize the net option value subject to information costs $C(\tau) = \kappa \tau^{-2}$. Signals are $s_\epsilon = \epsilon + \eta$ and $s_\theta = \theta + \xi$ with noise $\eta \sim \mathcal{N}(0, \tau_\eta^2)$ and $\xi \sim \mathcal{N}(0, \tau_\xi^2)$.
Standard Bayesian updating implies that the posterior expectation $\hat{\epsilon} = \E[\epsilon|s_\epsilon]$ has ex-ante variance:
\begin{equation}
    \rho_\epsilon^2(\tau_\eta) \equiv \Var(\hat{\epsilon}) = \frac{\sigma_\epsilon^4}{\sigma_\epsilon^2 + \tau_\eta^2}.
\end{equation}
The total option value $\mathcal{V} = V^{ISQ}(\rho_\epsilon) + V^{IC}(\rho_\epsilon, \rho_\theta)$ is strictly increasing in the volatility of beliefs $\rho$. The DM solves:
\begin{equation}
    \max_{\tau_\eta, \tau_\xi > 0} \quad \delta \mathcal{V}(\rho_\epsilon(\tau_\eta), \rho_\theta(\tau_\xi)) - \left( c + \frac{\kappa_\eta}{\tau_\eta^2} + \frac{\kappa_\xi}{\tau_\xi^2} \right).
\end{equation}
The first-order condition for $\tau_\eta$ balances the marginal value of precision against marginal cost:
\begin{equation}
    \delta \frac{\partial V^{ISQ}}{\partial \rho_\epsilon} \frac{\partial \rho_\epsilon}{\partial \tau_\eta} + 2\kappa_\eta \tau_\eta^{-3} = 0.
\end{equation}
Since $\frac{\partial \rho_\epsilon}{\partial \tau_\eta} < 0$ and $\frac{\partial V^{ISQ}}{\partial \rho_\epsilon} > 0$, the first term is negative. The marginal cost term $2\kappa_\eta \tau_\eta^{-3}$ is positive and diverges as $\tau_\eta \to 0$, ensuring an interior solution $\tau_\eta^* \in (0, \infty)$.

For the asymmetric attention result, consider the limit $\mu_1 \ll \mu_0$. In this regime, the switching value $V^{IC} \to 0$, implying the marginal value of challenger information vanishes ($\frac{\partial V^{IC}}{\partial \rho_\theta} \to 0$). The FOC for $\tau_\xi$:
\begin{equation}
    \delta \frac{\partial V^{IC}}{\partial \rho_\theta} \frac{\partial \rho_\theta}{\partial \tau_\xi} + 2\kappa_\xi \tau_\xi^{-3} = 0,
\end{equation}
requires the marginal cost term to approach zero to maintain equality. This implies $\tau_\xi^* \to \infty$ (zero precision). In contrast, $V^{ISQ}$ remains bounded away from zero, sustaining a finite $\tau_\eta^*$, confirming asymmetric attention.

\textbf{Check for Second-Order Conditions (SOC):}
The existence of a unique interior maximum is guaranteed by the properties of the cost and value functions. The cost function $C(\tau) = \kappa \tau^{-2}$ is strictly convex in the noise scale $\tau$ and satisfies the Inada conditions: the marginal cost of noise reduction diverges as $\tau \to 0$ ($\lim_{\tau \to 0} |C'(\tau)| = \infty$) and vanishes as $\tau \to \infty$. On the benefit side, the Option Value $\mathcal{V}(\tau)$ is bounded, and its marginal gain diminishes as $\tau \to 0$ (perfect information has finite value). Specifically, at the limit of high precision ($\tau \to 0$), the marginal cost of information ($\sim \tau^{-3}$) dominates the marginal benefit ($\sim \text{const}$ or $\tau^{-2}$ depending on the limit), preventing a corner solution at zero noise. Thus, the objective function is locally concave around the intersection of the marginal benefit and marginal cost curves, ensuring that the first-order condition identifies a valid interior maximum.
\end{proof}

\subsection{Proof of Proposition 8 (Information Complementarities)}

\begin{proof}
We derive the comparative statics of the optimal noise levels $\tau^*$ by applying the Implicit Function Theorem to the system of first-order conditions characterized in Appendix C.1.

To elucidate the strategic complementarity between prior uncertainty and information acquisition, define the first-order condition for $\tau_\eta$ as the function $G(\tau_\eta, \sigma_\epsilon) \equiv \delta \frac{\partial \mathcal{V}}{\partial \tau_\eta} + C'(\tau_\eta) = 0$. The sensitivity of the optimal noise level to status quo uncertainty is given by:
\begin{equation}
    \frac{d\tau_\eta^*}{d\sigma_\epsilon} = - \frac{\partial G / \partial \sigma_\epsilon}{\partial G / \partial \tau_\eta}.
\end{equation}
The denominator $\partial G / \partial \tau_\eta$ corresponds to the second-order condition of the maximization problem and is strictly negative by the convexity of the cost function and the concavity of the value function. The numerator depends on the cross-partial derivative of the option value, $\frac{\partial^2 \mathcal{V}}{\partial \tau_\eta \partial \sigma_\epsilon}$. Since an increase in prior uncertainty $\sigma_\epsilon$ makes the posterior variance more sensitive to signal precision (i.e., the marginal value of noise reduction increases with baseline volatility), this cross-derivative is negative (recall that $\tau_\eta$ represents noise). Consequently, the entire expression yields $\frac{d\tau_\eta^*}{d\sigma_\epsilon} < 0$, confirming that agents facing higher ex-ante uncertainty optimally invest in higher precision.

Turning to the source substitution effect, consider the impact of an increase in the cost of challenger information, $\kappa_\xi$. While the baseline separable formulation implies independence, substitution arises naturally under a binding aggregate information constraint (e.g., a finite attention budget $\sum \kappa_i \tau_i^{-2} \le K$) or a generalized joint cost function $C(\tau_\eta, \tau_\xi)$ with positive cross-partials ($C_{\eta \xi} > 0$). In such a constrained environment, the optimization problem involves a Lagrange multiplier $\lambda$ representing the shadow cost of informational resources. An increase in $\kappa_\xi$ raises the effective shadow price, tightening the constraint. To satisfy the first-order conditions, the agent must rebalance the marginal utilities per unit of shadow cost. If the value functions $V^{ISQ}$ and $V^{IC}$ exhibit diminishing marginal returns to total precision, the agent optimally substitutes away from the now relatively more expensive challenger signal toward the status quo signal. Mathematically, this manifests as $\frac{d\tau_\eta^*}{d\kappa_\xi} < 0$, implying that barriers to learning about alternatives paradoxically incentivize deeper introspection into the status quo.
\end{proof}

\subsection{Proof of Theorem 6 (Information Technology Paradox)}

\begin{proof}
We investigate the general equilibrium welfare implications of an improvement in information technology, modeled as a reduction in the cost scaling parameter $\kappa_\eta$. Let the social welfare function $\mathcal{W}(\kappa_\eta)$ be defined as the sum of the maximized private surplus $\Pi^*(\kappa_\eta)$ and the net social externality. Assuming a net negative externality $\mathcal{E} > 0$ incurred upon exploration (and zero spillovers for simplicity), we have:
\begin{equation}
    \mathcal{W}(\kappa_\eta) = \Pi^*(\kappa_\eta) - \delta \mathcal{E} \cdot \Pr(\text{explore}; \kappa_\eta).
\end{equation}
The private surplus corresponds to the value function of the optimization problem derived in Theorem 5: $\Pi^*(\kappa_\eta) = \max_{\tau} [\delta \text{OV}(\tau) - \kappa_\eta \tau^{-2} - c]$. By the Envelope Theorem, the derivative of the private value function with respect to the parameter $\kappa_\eta$ is equal to the partial derivative of the objective function evaluated at the optimum. This yields $\frac{d\Pi^*}{d\kappa_\eta} = -\tau^{*-2}$. Since this term is strictly negative, a reduction in cost (lower $\kappa_\eta$) unambiguously increases private surplus.

However, the total welfare effect is mediated by the extensive margin of exploration. Let $\pi(\kappa_\eta) \equiv \Pr(\delta \text{OV}(\tau^*(\kappa_\eta)) \ge c)$ denote the equilibrium exploration frequency. Applying the chain rule, the sensitivity of exploration to the cost parameter is given by:
\begin{equation}
    \frac{d\pi}{d\kappa_\eta} = \frac{\partial \pi}{\partial \text{OV}} \frac{\partial \text{OV}}{\partial \tau} \frac{d\tau^*}{d\kappa_\eta}.
\end{equation}
We analyze the sign of each term: the exploration probability is increasing in option value ($\frac{\partial \pi}{\partial \text{OV}} > 0$); the option value is increasing in precision ($\frac{\partial \text{OV}}{\partial \tau} < 0$ since $\tau$ represents noise); and optimal noise is increasing in cost ($\frac{d\tau^*}{d\kappa_\eta} > 0$ by the convexity of the cost function). Consequently, the product is negative, $\frac{d\pi}{d\kappa_\eta} < 0$. This implies that a technological improvement (lower $\kappa_\eta$) leads to a strictly higher frequency of exploration.

Differentiating the total welfare function with respect to $\kappa_\eta$ combines the intensive and extensive margin effects:
\begin{equation}
    \frac{d\mathcal{W}}{d\kappa_\eta} = -\frac{1}{\tau^{*2}} - \delta \mathcal{E} \frac{d\pi}{d\kappa_\eta}.
\end{equation}
The "Information Technology Paradox"—defined as a situation where lower costs reduce welfare—corresponds to the condition $\frac{d\mathcal{W}}{d\kappa_\eta} > 0$. Substituting the signs derived above, this paradox emerges if and only if the marginal increase in social deadweight loss outweighs the marginal private cost savings:
\begin{equation}
    \delta \mathcal{E} \left| \frac{d\pi}{d\kappa_\eta} \right| > \frac{1}{\tau^{*2}}.
\end{equation}
This inequality is satisfied when the externality $\mathcal{E}$ is sufficiently large or when the exploration threshold is highly sensitive to changes in information quality (i.e., a dense mass of marginal explorers), thereby validating Theorem 6.
\end{proof}

\section{Proofs for Section 5 (Welfare and Policy)}

\subsection{Proof of Theorem 7 (Inefficiency of Equilibrium)}

\begin{proof}
We establish the inefficiency of the decentralized equilibrium by characterizing the divergence between the private and social exploration thresholds. Consider a utilitarian social planner maximizing the aggregate surplus, defined as the sum of the decision-maker's and the challenger's payoffs net of resource costs. Let $\mathcal{E} > 0$ represent the uncompensated friction costs imposed on the challenger (e.g., inspection time), and $\mathcal{S} \ge 0$ represent positive information spillovers to the aggregate economy. The net social value of exploration is given by $\delta(\text{OV} - \mathcal{E} + \mathcal{S}) - c$. Consequently, the social optimum dictates exploration if and only if the total option value satisfies $\delta \text{OV} \ge c + \delta(\mathcal{E} - \mathcal{S})$.

In the decentralized market, the decision-maker explores based solely on the private return, engaging in search whenever $\delta \text{OV} \ge c$. The private and social conditions diverge whenever the net externality term $\Delta_{ext} \equiv \delta(\mathcal{E} - \mathcal{S})$ is non-zero. Specifically, the equilibrium exhibits excessive exploration if the private condition holds while the social condition fails, which corresponds to the inequality interval:
\begin{equation}
    c \le \delta \text{OV} < c + \delta(\mathcal{E} - \mathcal{S}).
\end{equation}
This inefficiency region is non-empty whenever the net externality is negative ($\mathcal{E} > \mathcal{S}$).

Consider the regime of "predominantly catalytic exploration" characterized in Theorem 2. In this limit, the switching value $V^{IC} \to 0$, implying that the total option value is driven by the catalytic component $\text{OV} \approx V^{ISQ}$. Simultaneously, the probability of switching (and thus of successful matching) approaches zero, $P_s \to 0$. Since the challenger typically incurs the inspection cost $\mathcal{E}$ regardless of the outcome, but receives the benefit $W$ only upon matching, the ex-ante social burden converges to the pure externality $\mathcal{E}$. If the private catalytic value satisfies $V^{ISQ} > c/\delta$, the decision-maker explores with probability one, imposing a strictly positive social loss $\delta \mathcal{E}$ with negligible probability of generating a match-dependent surplus. Thus, the catalytic regime is structurally prone to over-exploration.
\end{proof}

\subsection{Proof of Proposition 9 (Optimal Tax Policy)}

\begin{proof}
We derive the optimal Pigouvian instruments required to implement the social optimum. The regulator seeks to align the private exploration threshold with the social first-order condition derived in Theorem \ref{thm:inefficiency}. Let $t$ denote a tax levied on the act of exploration, and $\sigma$ denote a subsidy transfer conditional on a successful match (switching).

The private agent's objective function, adjusted for policy instruments, induces exploration if the expected private benefit net of taxes exceeds the cost:
\begin{equation}
    \delta \left( \text{OV} + \sigma P_s \right) \ge c + t.
\end{equation}
Comparing this to the social optimality condition $\delta \text{OV} \ge c + \delta(\mathcal{E} - \mathcal{S})$, we equate the terms to identify the optimal instruments. The exploration tax must correct for the unconditional negative externality, yielding $t^* = \delta \mathcal{E}$, where $\mathcal{E}$ captures the deadweight loss of inspection per exploration. The matching subsidy must correct for the conditional positive spillover. Since $\sigma$ is paid only upon switching (which occurs with probability $P_s$), the ex-ante expected subsidy $\delta \sigma P_s$ must equal the expected spillover benefit $\delta \mathcal{S}$. Solving for the instrument yields $\sigma^* = \mathcal{S}/P_s$.

In the catalytic limit where $P_s$ is small (as implied by $V^{ISQ} \gg V^{IC}$), the optimal matching subsidy $\sigma^*$ becomes large to compensate for the rarity of matches. However, the primary corrective instrument is the exploration tax $t^*$, which directly targets the source of the inefficiency—the act of validation-seeking itself. If catalytic motives dominate and spillovers are negligible ($\mathcal{S} \approx 0$), the optimal policy collapses to a pure tax on search activity.
\end{proof}

\subsection{Proof of Theorem 8 (Disclosure Policy)}

\begin{proof}
We analyze the general equilibrium welfare effects of a mandatory disclosure policy that exogenously resolves status quo uncertainty (effectively setting $\sigma_\epsilon = 0$). This intervention generates two opposing welfare forces: a Direct Efficiency Gain from reduced search friction and an Indirect Equilibrium Shift in the signaling subgame.

First, consider the direct effect. By setting $\sigma_\epsilon = 0$, the policy eliminates the catalytic value component ($V^{ISQ} = 0$). The decision-maker's option value collapses to the pure switching value $V^{IC}$. For agents exploring solely for validation purposes (where previously $\delta V^{ISQ} \ge c > \delta V^{IC}$), exploration ceases. The welfare gain from this extensive margin adjustment is the sum of saved resource costs and avoided externalities:
\begin{equation}
    \Delta \mathcal{W}_{direct} = \int_{\Theta_{cat}} (c + \delta \mathcal{E}) dF(\theta),
\end{equation}
where $\Theta_{cat}$ is the set of types exploring only for catalytic reasons.

Second, consider the indirect effect on market equilibrium. As established in Theorem 3, high status quo uncertainty can disrupt wasteful signaling equilibria, leading to pooling. By eliminating uncertainty, disclosure restores the single-crossing conditions that sustain the Least Cost Separating Equilibrium. While this improves sorting efficiency, it reintroduces the signaling costs for high-quality challengers. The welfare loss from this channel is:
\begin{equation}
    \Delta \mathcal{W}_{signal} = - p \cdot \psi(e_H^*, \theta_H).
\end{equation}
The net welfare impact $\Delta \mathcal{W} = \Delta \mathcal{W}_{direct} + \Delta \mathcal{W}_{signal}$ is ambiguous. Disclosure improves welfare if the efficiency gains from eliminating "noise traders" (catalytic explorers) outweigh the deadweight loss of restoring the "rat race" (signaling) among challengers. The policy is strictly optimal only when the catalytic externality $\mathcal{E}$ is sufficiently large relative to the cost of signaling.
\end{proof}

\section{Appendix E: Proofs for Section 6 (Empirical Implications)}

\subsection{E.1: Proof of Proposition 10 (Observable Implications)}

\begin{proof}
We establish the observable cross-sectional and dynamic predictions of the model by analyzing the sensitivity of equilibrium behaviors to environmental parameters.

We begin by characterizing the relationship between uncertainty, exploration intensity, and switching rates. Let variation in the economic environment be captured by the status quo uncertainty parameter $\sigma_\epsilon$. Since the catalytic value $V^{ISQ}$ is strictly increasing in $\sigma_\epsilon$ (as established in Appendix A.3), the probability that the option value exceeds the cost threshold is strictly increasing:
\begin{equation}
    \frac{\partial \Pr(\text{explore})}{\partial \sigma_\epsilon} = \frac{\partial}{\partial \sigma_\epsilon} \Pr\!\left(V^{ISQ}(\sigma_\epsilon) > c\right) > 0.
\end{equation}
Conditional on exploration, the switching probability is governed by the likelihood that the realized status quo match value falls below the challenger's quality threshold $-\Delta$. Specifically, $\Pr(\text{switch} \mid \text{explore}) = \Phi\left(\frac{-\Delta}{\sqrt{\sigma_\epsilon^2 + \sigma_\theta^2}}\right)$. Differentiating with respect to uncertainty yields:
\begin{equation}
    \frac{\partial \Pr(\text{switch} \mid \text{explore})}{\partial \sigma_\epsilon} = \phi\left(\frac{-\Delta}{\sqrt{\sigma_\epsilon^2 + \sigma_\theta^2}}\right) \frac{\Delta \sigma_\epsilon}{(\sigma_\epsilon^2 + \sigma_\theta^2)^{3/2}} > 0.
\end{equation}
While both exploration and switching probabilities increase with uncertainty, they exhibit a crucial divergence in magnitude. For an inferior challenger ($\mu_1 < \mu_0$), the switching probability is bounded above by $0.5$ even as $\sigma_\epsilon \to \infty$. In contrast, the exploration probability converges to unity. This generates the model's distinct empirical signature: in high-uncertainty environments, we observe the coexistence of near-universal exploration and low realized switching rates, a pattern we identify as predominantly catalytic exploration.

Next, we consider the dynamic implications for relationship duration. Assume the status quo uncertainty $\sigma_\epsilon(t)$ follows a convex trajectory over the tenure of a match $t$: uncertainty is initially high (screening phase), declines as the match quality is revealed (stability phase), and eventually rises due to exogenous shocks or preference drift (obsolescence phase). Since exploration intensity is monotonic in $\sigma_\epsilon$, the hazard rate of exploration inherits this convexity, producing a U-shaped profile over the duration of the relationship.

Finally, we derive the non-monotonicity of signaling effort. Recall from the analysis in Section 3 that the equilibrium signaling effort $e^*$ is positive in the separating regime ($V^{ISQ} < \bar{V}^{ISQ}$) but collapses to zero in the pooling regime ($V^{ISQ} > \bar{V}^{ISQ}$). Since $V^{ISQ}$ is increasing in $\sigma_\epsilon$, a sufficiently large increase in uncertainty pushes the market across the threshold $\bar{V}^{ISQ}$. Consequently, the cross-partial derivative of effort with respect to competition (proxied by market thickness) and uncertainty exhibits a negative sign in the neighborhood of the collapse:
\begin{equation}
    \frac{\partial^2 e^*}{\partial \text{competition} \partial \sigma_\epsilon} < 0.
\end{equation}
This implies that in highly uncertain markets, increased competition fails to elicit quality differentiation, as catalytic motives dominate instrumental ones.
\end{proof}

\subsection{E.2: Proof of Proposition 11 (Dynamic Implications)}

\begin{proof}
We derive the longitudinal predictions of the model suitable for panel data verification.

First, we formally characterize the "Validation Effect." Consider a decision-maker who explores but chooses to retain the status quo. This decision reveals that the realized match value satisfies $\mu_0 + \epsilon \ge \mu_1$, or equivalently $\epsilon \ge -\Delta$. The posterior expectation of the match value conditional on retention is given by the properties of the truncated normal distribution:
\begin{equation}
    \E[\epsilon \mid \text{stay}] = \E[\epsilon \mid \epsilon \ge -\Delta] = \sigma_\epsilon \frac{\phi(-\Delta/\sigma_\epsilon)}{1 - \Phi(-\Delta/\sigma_\epsilon)} > 0.
\end{equation}
The term on the right is the Inverse Mills Ratio scaled by volatility. Since this expectation is strictly positive, successful (but non-switching) exploration causally increases the agent's subjective satisfaction with the status quo relative to the ex-ante baseline.

Second, we analyze the "Technology Paradox" concerning search costs. An improvement in information technology corresponds to a reduction in the exploration cost $c$. This directly lowers the exploration threshold, increasing the frequency of search: $\frac{\partial \Pr(\text{explore})}{\partial c} < 0$. However, the switching probability conditional on exploration, $\Pr(\text{switch} \mid \text{explore}) = \Phi\left(\frac{-\Delta}{\sqrt{\sigma_\epsilon^2 + \sigma_\theta^2}}\right)$, is structurally independent of $c$. Consequently, as costs fall, the volume of search expands along the extensive margin without a corresponding increase in the conversion rate along the intensive margin. This decoupling predicts that technology shocks will drive a divergence between aggregate search metrics and turnover rates.

Third, we trace the impulse response to uncertainty shocks. Consider an exogenous shock that instantaneously increases $\sigma_\epsilon$ (e.g., a corporate restructuring). This raises $V^{ISQ}$, triggering a wave of exploration. However, because the challenger remains inferior ($\Delta > 0$), the majority of these new searches will resolve in favor of the status quo. The model thus predicts a specific dynamic pattern: a spike in uncertainty is followed by a surge in exploration activity that results in a disproportionately high rate of retention (low switching yield), distinguishing catalytic waves from standard matching shocks.
\end{proof}

\section{Proofs for Section 7 (Extensions)}

\subsection{Proof of Proposition 12 (Multi-Dimensional Uncertainty)}

\begin{proof}
We generalize the catalytic value framework to a setting where the status quo payoff depends on a vector of $n$ independent attributes. Let the status quo utility be given by $u_0 = \mu_0 + \sum_{i=1}^n \epsilon_i$, where each attribute shock follows $\epsilon_i \sim \mathcal{N}(0, \sigma^2)$. Assuming independence, the aggregate idiosyncratic component $\epsilon_{total} \equiv \sum \epsilon_i$ is distributed as $\mathcal{N}(0, n\sigma^2)$, with total standard deviation $\sigma_{total} = \sigma\sqrt{n}$.

Substituting this aggregate volatility into the closed-form solution derived in Proposition 1, the multi-dimensional catalytic value is:
\begin{equation}
    V^{ISQ}_n = -\Delta \Phi\left(\frac{-\Delta}{\sigma\sqrt{n}}\right) + \sigma\sqrt{n} \phi\left(\frac{-\Delta}{\sigma\sqrt{n}}\right).
\end{equation}
In the limit of high complexity ($n \to \infty$), the argument of the normal CDF approaches zero. Applying the asymptotic expansion established in Appendix A.2, the value scales as:
\begin{equation}
    V^{ISQ}_n \approx \frac{\sigma\sqrt{n}}{\sqrt{2\pi}}.
\end{equation}
Thus, catalytic value grows with the square root of the dimensionality of uncertainty, $O(\sqrt{n})$.

To determine the marginal value of complexity, we treat $n$ as a continuous variable and apply the chain rule. Recall from Appendix A.3 that $\frac{\partial V^{ISQ}}{\partial \sigma_{total}} = \phi(z_{total})$. Differentiating with respect to $n$:
\begin{equation}
    \frac{\partial V^{ISQ}_n}{\partial n} = \frac{\partial V^{ISQ}}{\partial \sigma_{total}} \frac{d \sigma_{total}}{d n} = \phi\left(\frac{-\Delta}{\sigma\sqrt{n}}\right) \cdot \frac{\sigma}{2\sqrt{n}}.
\end{equation}
In the high-uncertainty limit, $\phi(\cdot) \approx \frac{1}{\sqrt{2\pi}}$, yielding the approximation $\frac{\partial V^{ISQ}_n}{\partial n} \approx \frac{\sigma}{2\sqrt{2\pi n}}$. Since this derivative is positive but strictly decreasing in $n$, the catalytic value exhibits diminishing marginal returns to complexity. Consequently, while higher dimensionality lowers the exploration threshold $\bar{\sigma}_\epsilon$ (making exploration more likely), the incremental incentive to explore diminishes as the status quo becomes arbitrarily complex.
\end{proof}

\subsection{Proof of Theorem 9 (Optimal Attention Allocation)}

\begin{proof}
We characterize the optimal allocation of a scarce cognitive resource across multiple uncertainty dimensions. Assume the decision-maker maximizes the total reduction in uncertainty subject to an attention budget $\bar{A}$. Let $a_i$ denote the attention allocated to dimension $i$, which enhances the precision of the signal regarding $\epsilon_i$. Specifically, we assume the posterior variance reduction (and hence the marginal value contribution) is proportional to the signal-to-noise ratio scaled by attention.

The decision problem is formulated as maximizing the sum of separable valuations $V_i(a_i)$ subject to the budget constraint:
\begin{equation}
    \max_{\{a_i\}} \sum_{i=1}^n V_i(a_i) \quad \text{s.t.} \quad \sum_{i=1}^n a_i \le \bar{A}, \quad a_i \ge 0.
\end{equation}
The Lagrangian for this problem is $\mathcal{L} = \sum V_i(a_i) - \lambda(\sum a_i - \bar{A})$. The first-order condition for an interior solution equates the marginal value of attention across all dimensions:
\begin{equation}
    V_i'(a_i^*) = \lambda.
\end{equation}
Assuming the marginal value of attention is driven by the magnitude of resolvable uncertainty relative to the baseline noise (i.e., $V_i'(a_i) \propto \frac{\sigma_{\epsilon_i}^2}{\tau_i^2}$), the first-order conditions imply that the optimal attention share is proportional to this ratio. Solving for the shares yields the allocation rule:
\begin{equation}
    a_i^* = \bar{A} \cdot \frac{\sigma_{\epsilon_i}^2 / \tau_i^2}{\sum_{j=1}^n \sigma_{\epsilon_j}^2 / \tau_j^2}.
\end{equation}
This result establishes that rational agents optimally bias their attention: they focus on dimensions where prior uncertainty ($\sigma_{\epsilon_i}^2$) is high and where information is precise (low $\tau_i^2$), while rationally ignoring dimensions with low signal-to-noise ratios.
\end{proof}

\subsection{Proof of Proposition 13 (Strategic Complementarity in Networks)}

\begin{proof}
We model the exploration decisions as a game of strategic complements played on a network graph $G = (N, E)$. Let $\mathcal{N}_i$ denote the set of neighbors of agent $i$. We analyze how the exploration activity of neighbors, denoted by the vector $\mathbf{a}_{-i}$, influences the marginal value of agent $i$'s exploration $a_i \in \{0, 1\}$.

The payoff function for agent $i$ is augmented to include information spillovers. If agent $i$ explores ($a_i=1$), she receives the private option value net of costs, plus a spillover benefit derived from the aggregate exploration intensity of her neighborhood. Let $\omega_{ji} \in [0,1]$ represent the strength of the informational link (e.g., observability or benchmarking relevance) from $j$ to $i$. The payoff function is given by:
\begin{equation}
    \Pi_i(a_i, \mathbf{a}_{-i}) = a_i \left( V^{ISQ} + V^{IC} - c + \sum_{j \in \mathcal{N}_i} \omega_{ji} a_j \mathcal{B} \right),
\end{equation}
where $\mathcal{B} > 0$ represents the marginal benchmark value of a neighbor's exploration (i.e., observing a neighbor explore increases the salience or precision of one's own status quo evaluation).

To establish strategic complementarity, we examine the discrete cross-partial derivative (supermodularity condition). The marginal return to exploration for agent $i$ is $\Delta \Pi_i(\mathbf{a}_{-i}) \equiv \Pi_i(1, \mathbf{a}_{-i}) - \Pi_i(0, \mathbf{a}_{-i})$. Calculating the change in this marginal return with respect to a neighbor's action $a_j$:
\begin{equation}
    \frac{\partial^2 \Pi_i}{\partial a_i \partial a_j} = \Pi_i(1, 1, \mathbf{a}_{-(i,j)}) - \Pi_i(1, 0, \mathbf{a}_{-(i,j)}) - [\Pi_i(0, \dots) - \Pi_i(0, \dots)] = \omega_{ji} \mathcal{B}.
\end{equation}
Provided $\omega_{ji} > 0$ and $\mathcal{B} > 0$, this cross-difference is strictly positive. Thus, the game exhibits increasing differences. By Tarski's Fixed Point Theorem and the theory of supermodular games, the set of pure strategy Nash equilibria forms a complete lattice.

Specifically, when the network density $\rho \equiv \frac{1}{N(N-1)} \sum_{i,j} \mathbf{1}_{\{\omega_{ij} > 0\}}$ exceeds a critical threshold $\bar{\rho}$ (such that the spillover effect overcomes the private net cost deficit), multiple equilibria emerge. The extremal equilibria are:
\begin{enumerate}
    \item The \textit{Pareto-dominant equilibrium} $\mathbf{a}^* = (1, \dots, 1)$, where high exploration activity is self-sustaining due to strong network spillovers.
    \item The \textit{inaction equilibrium} $\mathbf{a}^* = (0, \dots, 0)$, representing a coordination failure where no agent finds it optimal to explore in isolation.
\end{enumerate}
This demonstrates that catalytic exploration can be driven by social coordination motives, distinct from fundamental private incentives.
\end{proof}

\subsection{Proof of Theorem 10 (Catalytic Cascades)}

\begin{proof}
We extend the standard model of informational cascades to specific catalytic environments. Consider a sequence of agents $i = 1, 2, \dots, N$ making exploration decisions $a_i \in \{0, 1\}$ sequentially. The true state of the world (challenger quality) is $\theta \in \{\theta_L, \theta_H\}$. Each agent observes a private signal $s_i$ and the public history of past actions $H_{i-1} = \{a_1, \dots, a_{i-1}\}$.

In the standard model, an agent explores if $\E[\theta \mid H_{i-1}, s_i] \ge \bar{u}$, where $\bar{u}$ is the reservation value. In our framework, the exploration condition is relaxed by the catalytic value term:
\begin{equation}
    \delta V^{IC}(\E[\theta \mid H_{i-1}, s_i]) + \delta V^{ISQ} \ge c.
\end{equation}
A cascade forms when the information contained in the public history $H_{i-1}$ overwhelms the private signal $s_i$, such that the agent's optimal action is independent of $s_i$.

Consider the "Exploration Cascade" where agents explore regardless of negative private signals. This occurs if:
\begin{equation}
    \delta V^{IC}(\E[\theta \mid H_{i-1}, s_i = L]) + \delta V^{ISQ} \ge c.
\end{equation}
Since $V^{ISQ} > 0$, the requisite belief threshold on the challenger's quality $V^{IC}$ is strictly lower than in the standard model. Consequently, catalytic cascades initiate after fewer observations (shorter history $H_{i-1}$) and are robust to more negative private information.

We quantify the welfare loss associated with such cascades. Unlike standard cascades where the loss is limited to the decision-maker's suboptimal choice, catalytic cascades impose a cumulative externality on the explored challengers. The total social loss is:
\begin{equation}
    \mathcal{L}_{catalytic} = \sum_{i \in \text{Cascade}} \left( \underbrace{\mathbf{1}_{\{\theta = \theta_L\}} (c - \delta \text{OV}_i)}_{\text{Private Ex-post Loss}} + \underbrace{\delta \mathcal{E}}_{\text{Externality}} \right).
\end{equation}
Even if the private option value justifies exploration ($c \approx \delta \text{OV}_i$), the social planner incurs the externality $\delta \mathcal{E}$ for every agent in the cascade. The presence of the constant term $V^{ISQ}$ acts as a "wedge" that perpetuates the cascade even when the instrumental value of the challenger is negligible, leading to socially excessive herd behavior.
\end{proof}

\subsection{Proof of Proposition 14 (Modified Gittins Index)}

\begin{proof}
We derive a modified index policy for the multi-armed bandit problem extended to include catalytic information spillovers. In the standard framework, the Gittins index for an independent arm $i$ in state $x_i$ is defined as the maximum feasible expected return per unit of discounted time:
\begin{equation}
    G_i(x_i) = \sup_{\tau > 0} \frac{\E\left[\sum_{t=0}^{\tau-1} \delta^t r_i(t) \mid x_i\right]}{\E\left[\sum_{t=0}^{\tau-1} \delta^t \mid x_i\right]},
\end{equation}
where $\tau$ is a stopping time defined with respect to the filtration generated by arm $i$.

In our setting, pulling arm $i$ generates two distinct streams of value: the direct payoff from the alternative $r_i(t)$ and the indirect information flow regarding the status quo (arm 0). Let $I_i(t) \equiv \Delta V^{ISQ}_0(x_0(t), i)$ denote the marginal resolution of status quo uncertainty achieved by observing arm $i$. Under the assumption that the valuation of the status quo is additively separable from the consumption of the alternative, we define an \textit{augmented reward stream}:
\begin{equation}
    \tilde{r}_i(t) = r_i(t) + \beta \cdot I_i(t).
\end{equation}
Here, $\beta$ represents the shadow price of information regarding the status quo, effectively capturing the opportunity cost of resolving internal uncertainty.

The optimal policy is characterized by the \textit{Modified Gittins Index}, $G_i^*(x_i, x_0)$, computed on this augmented reward stream:
\begin{equation}
    G_i^*(x_i, x_0) = \sup_{\tau > 0} \frac{\E\left[\sum_{t=0}^{\tau-1} \delta^t \left( r_i(t) + \beta I_i(t) \right) \mid x_i, x_0\right]}{\E\left[\sum_{t=0}^{\tau-1} \delta^t \mid x_i, x_0\right]}.
\end{equation}
Crucially, even if an arm $i$ is known to be inferior in terms of direct payoff ($r_i < r_0$), a sufficiently high information component $\beta I_i(t)$ can elevate its modified index $G_i^*$ above the reservation value of the status quo. This formalizes the rationality of exploring strictly dominated alternatives solely for their catalytic properties.
\end{proof}

\subsection{Proof of Theorem 11 (Optimal Stopping Rule)}

\begin{proof}
We characterize the optimal cessation of catalytic exploration as a stopping time problem in a discrete-time sequential learning framework. The state variable is the residual uncertainty of the status quo, denoted by $\sigma_{\epsilon,t}$. The evolution of belief precision follows a deterministic path in the variance domain; specifically, given a signal precision $\tau^{-2}$, the posterior standard deviation evolves according to $\sigma_{\epsilon,t+1} = \sigma_{\epsilon,t} \tau / \sqrt{\sigma_{\epsilon,t}^2 + \tau^2}$.

Let $V(\sigma_\epsilon)$ denote the value function relative to the payoff of stopping immediately (normalized to zero). The Bellman equation governing the continuation decision is:
\begin{equation}
    V(\sigma_\epsilon) = \max\left\{0, -c + \delta V^{ISQ}(\sigma_\epsilon) + \delta V(\sigma_{\epsilon}^+) \right\},
\end{equation}
where $\sigma_\epsilon^+$ denotes the updated uncertainty. Exploration continues as long as the net flow value of information exceeds the cost, $-c + \delta V^{ISQ}(\sigma_\epsilon) + \delta V(\sigma_\epsilon^+) > 0$.

To derive the threshold $\bar{\sigma}_\epsilon$, we examine the boundary between the continuation and stopping regions. At the optimal stopping boundary, the value function must satisfy the \textit{value matching} condition $V(\bar{\sigma}_\epsilon) = 0$ and the \textit{smooth pasting} condition $V'(\bar{\sigma}_\epsilon) = 0$. Differentiating the Bellman equation in the neighborhood of the boundary (where $V(\sigma_\epsilon^+) \approx 0$) implies that the marginal cost of stopping must equal the marginal benefit of the last unit of exploration. This yields the first-order condition:
\begin{equation}
    \delta \frac{\partial V^{ISQ}}{\partial \sigma_\epsilon}\bigg|_{\bar{\sigma}_\epsilon} \approx \frac{c}{\bar{\sigma}_\epsilon}.
\end{equation}
We invoke the asymptotic property of the marginal catalytic value derived in Proposition 1, $\frac{\partial V^{ISQ}}{\partial \sigma_\epsilon} \approx \phi\left(\frac{-\Delta}{\sigma_\epsilon}\right)$. In the high-uncertainty regime relevant for the stopping boundary, $\phi(\cdot) \approx 1/\sqrt{2\pi}$. Substituting this into the first-order condition:
\begin{equation}
    \delta \frac{1}{\sqrt{2\pi}} = \frac{c}{\bar{\sigma}_\epsilon} \implies \bar{\sigma}_\epsilon \approx \frac{c\sqrt{2\pi}}{\delta}.
\end{equation}
The optimal stopping time is defined as $\tau^* = \inf\{t : \sigma_{\epsilon,t} < \bar{\sigma}_\epsilon\}$. Since $\sigma_{\epsilon,t}$ is monotonically decreasing with exploration, the policy follows a simple threshold rule: the agent explores until her internal uncertainty is sufficiently resolved, at which point she permanently commits to a choice. The stopping threshold $\bar{\sigma}_\epsilon$ is strictly increasing in the cost $c$ and decreasing in the discount factor $\delta$, consistent with standard search theory predictions.
\end{proof}

\section{Additional Technical Results}
\renewcommand{\theproposition}{G.\arabic{proposition}}
\renewcommand{\thetheorem}{G.\arabic{theorem}}
\setcounter{proposition}{0}
\setcounter{theorem}{0}

\subsection{Robustness to Alternative Distributions}

\begin{proposition}[Heavy-Tailed Distributions]
Consider a standardized status quo shock $\epsilon$ following a Student's t-distribution with degrees of freedom $\nu > 2$, scaled such that $\Var(\epsilon) = \sigma_\epsilon^2$. Let $V^{ISQ}_t$ and $V^{ISQ}_{normal}$ denote the catalytic values under the t-distribution and the normal distribution, respectively. Then $V^{ISQ}_t > V^{ISQ}_{normal}$.
\end{proposition}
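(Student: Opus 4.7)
The plan is to exploit the scale-mixture-of-normals representation of the Student's $t$. Scaling to variance $\sigma_\epsilon^2$ gives $\epsilon_t \mid W \sim \mathcal{N}(0, \sigma_\epsilon^2(\nu-2)/W)$ with $W \sim \chi^2_\nu$ independent, and the unconditional variance $\E_W[\sigma_\epsilon^2(\nu-2)/W] = \sigma_\epsilon^2$ matches the normal benchmark $\epsilon_N \sim \mathcal{N}(0,\sigma_\epsilon^2)$. Applying the tower property together with Proposition 1 conditionally on $W$ yields
\begin{equation*}
V^{ISQ}_t \;=\; \E_W\!\left[V^{ISQ}_{normal}(s_W^2)\right], \qquad s_w^2 \equiv \sigma_\epsilon^2(\nu-2)/w,
\end{equation*}
where $V^{ISQ}_{normal}(s^2) = -\Delta\,\Phi(-\Delta/s) + s\,\phi(-\Delta/s)$. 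The comparison with the normal value $V^{ISQ}_{normal}(\sigma_\epsilon^2)$ thus reduces to a Jensen-type inequality on the map $s^2 \mapsto V^{ISQ}_{normal}(s^2)$ evaluated at the random argument $s_W^2$.

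The next step is to establish convexity of this map in the relevant regime. Starting from $\partial V^{ISQ}_{normal}/\partial\sigma = \phi(-\Delta/\sigma)$ (Proposition 2) and applying the chain rule with $u \equiv \sigma^2$, a short computation delivers
\begin{equation*}
\frac{\partial^2 V^{ISQ}_{normal}}{\partial(\sigma^2)^2} \;=\; \frac{\phi(-\Delta/\sigma)\,(\Delta^2 - \sigma^2)}{4\,\sigma^5}.
\end{equation*}
Hence $V^{ISQ}_{normal}$ is strictly convex in $\sigma^2$ whenever $\Delta > \sigma_\epsilon$, which is precisely the catalytic-dominant regime isolated in Theorem 2 (sufficiently inferior challenger). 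In this regime Jensen's inequality applied to the scale mixture delivers $V^{ISQ}_t > V^{ISQ}_{normal}$ immediately, and the magnitude of the excess can be bounded below in terms of $\Var_W(s_W^2)$, which scales with $1/(\nu-4)$ for $\nu > 4$---an explicit link between the heavy-tail parameter and the catalytic amplification.

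For geometric intuition I would parallel this with a CDF-dominance argument. Integration by parts on $\E[\max\{\epsilon,-\Delta\}]$ produces $V^{ISQ} = \int_{-\infty}^{-\Delta} F(x)\,dx$, so that
\begin{equation*}
V^{ISQ}_t - V^{ISQ}_{normal} \;=\; \int_{-\infty}^{-\Delta} \bigl(F_t(x) - F_N(x)\bigr)\, dx.
\end{equation*}
Because the standardized $t$ has strictly heavier left tail, $F_t(x) > F_N(x)$ for $x$ beyond the outermost density crossing, so when $-\Delta$ lies in that region the integrand is nonnegative throughout and strictly positive on a set of positive measure, yielding the conclusion.

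The main obstacle is the implicit regime qualifier. The standardized $t$ with matched variance has more mass than the normal both at the center and in the tails while shedding mass at the shoulders; for nearly at-the-money problems ($\Delta$ small relative to $\sigma_\epsilon$) the depleted shoulder mass can dominate and reverse the inequality---one verifies directly that $\E[|\epsilon_t|] < \E[|\epsilon_N|]$ at $\Delta = 0$ for low $\nu$. I would therefore state the rigorous result under the explicit hypothesis $\Delta > \sigma_\epsilon$, aligning with the paper's standing inferior-challenger assumption; within that regime both the Jensen convexity and the tail-dominance arguments deliver the conclusion, and the CDF approach clarifies how the gap widens as $\nu$ decreases.
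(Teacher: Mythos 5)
Your skepticism is well placed, and you have located a genuine error: the proposition as stated is false without a regime restriction, and the paper's own proof rests on a claim that does not hold. The paper asserts that a standardized Student's $t$ with matched variance is a mean-preserving spread of the normal, and then invokes the convex-order characterization of MPS to conclude. But this MPS claim fails: taking $f(x)=|x|$, which is convex, a direct computation gives $\E|\epsilon_t| = \frac{2\sqrt{\nu-2}\,\Gamma\!\left(\tfrac{\nu+1}{2}\right)}{\sqrt{\pi}(\nu-1)\Gamma(\nu/2)}\sigma_\epsilon < \sqrt{2/\pi}\,\sigma_\epsilon = \E|\epsilon_N|$ for all finite $\nu>2$, which contradicts the convex-order inequality. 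Since $V^{ISQ}$ at $\Delta = 0$ is exactly $\tfrac{1}{2}\E|\epsilon|$, continuity implies $V^{ISQ}_t < V^{ISQ}_{normal}$ for $\Delta$ small but positive, so the proposition is outright false in the ``near at-the-money'' region you flagged. Your CDF-dominance identity $V^{ISQ}_t - V^{ISQ}_{normal} = \int_{-\infty}^{-\Delta}\bigl(F_t(x)-F_N(x)\bigr)\,dx$ is correct and is the right diagnostic: the integrand is nonnegative for all $x$ below the left-tail CDF crossing point $-a(\nu)$, so the inequality holds iff $\Delta \ge a(\nu)$ and fails for smaller $\Delta$.

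Your scale-mixture Jensen route, however, needs more care than you acknowledge. The second-derivative computation $\partial^2 V^{ISQ}/\partial(\sigma^2)^2 = \phi(-\Delta/\sigma)(\Delta^2-\sigma^2)/(4\sigma^5)$ is correct, but the mixing variance $s_W^2 = \sigma_\epsilon^2(\nu-2)/W$ with $W\sim\chi^2_\nu$ has support $(0,\infty)$, so any condition of the form $\Delta > k\sigma_\epsilon$ leaves positive probability that $s_W^2 > \Delta^2$, landing in the concave region. Jensen with a function that is convex only on part of the random argument's support does not automatically deliver the inequality; you would need either a tighter tail bound on $W$ or a separate estimate showing the concave-region contribution is small enough. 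The CDF argument sidesteps this entirely and is therefore the right way to state and prove the corrected proposition, with the qualifier being $\Delta$ exceeds the tail CDF crossing point (a $\nu$-dependent multiple of $\sigma_\epsilon$), rather than literally $\Delta > \sigma_\epsilon$. In short: the diagnosis and the corrected claim are yours to take credit for; the Jensen proof needs patching or should be dropped in favor of the CDF argument; and the paper's proof is irreparably flawed because the MPS premise is false.
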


\begin{proof}
We establish this result using the theory of stochastic dominance. The catalytic value is defined as the expectation of a convex function of the random variable: $V^{ISQ} = \E[g(\epsilon)] - \mu_0$, where $g(\epsilon) = \max\{\mu_0 + \epsilon, \mu_1\}$. A Student's t-distribution with finite variance is a mean-preserving spread of the normal distribution with the same variance. Specifically, for $\nu > 4$, the t-distribution exhibits excess kurtosis of $6/(\nu-4) > 0$, implying heavier tails relative to the Gaussian benchmark.

By the fundamental property of the convex order (characterized by Mean-Preserving Spreads), if a random variable $X$ is a mean-preserving spread of $Y$, then $\E[f(X)] \ge \E[f(Y)]$ for all convex functions $f$. Since the max operator is convex (and strictly convex on the support where the option is in-the-money), and the t-distribution places more probability mass in the extreme tails where the convexity of the option payoff is operative, it follows that $V^{ISQ}_t > V^{ISQ}_{normal}$. Furthermore, a fourth-order Taylor expansion of the expectation reveals that for small deviations from normality, the premium $V^{ISQ}_t - V^{ISQ}_{normal}$ is proportional to the excess kurtosis. Thus, heavy-tailed distributions amplify catalytic motives by increasing the likelihood of extreme realizations that render exploration ex-post valuable.
\end{proof}

\subsection{Continuous-Time Limits}

\begin{proof}
We formally derive the continuous-time Hamilton-Jacobi-Bellman (HJB) equation as the limit of the discrete-time dynamic programming problem. Consider a time interval of length $\Delta t$. Assume challengers arrive according to a Poisson process with intensity $\lambda$, so the probability of an arrival within the interval is $\lambda \Delta t + o(\Delta t)$. The discount factor is $\delta(\Delta t) = e^{-r\Delta t}$.

The discrete Bellman equation for the value function $V(\sigma_\epsilon)$ can be written as a weighted average of the payoffs in the arrival and non-arrival states:
\begin{equation}
    V(\sigma_\epsilon) = \mu_0 \Delta t + e^{-r\Delta t} \left[ \lambda \Delta t \Omega(\sigma_\epsilon) + (1 - \lambda \Delta t) V(\sigma_\epsilon) \right],
\end{equation}
where $\Omega(\sigma_\epsilon) \equiv \max\{-c + \text{Gain}, V(\sigma_\epsilon)\}$ represents the value conditional on an arrival. Specifically, if exploration yields a jump in value $\Delta V$, then $\Omega(\sigma_\epsilon) = \max\{-c + V(\sigma_\epsilon) + \Delta V, V(\sigma_\epsilon)\} = V(\sigma_\epsilon) + \max\{-c + \Delta V, 0\}$.

Rearranging the Bellman equation to isolate the difference quotient:
\begin{equation}
    V(\sigma_\epsilon) (1 - e^{-r\Delta t}(1 - \lambda \Delta t)) = \mu_0 \Delta t + e^{-r\Delta t} \lambda \Delta t \max\{-c + \Delta V, 0\}.
\end{equation}
Applying the Taylor approximation $e^{-r\Delta t} \approx 1 - r\Delta t$, the term on the left becomes $V(\sigma_\epsilon) [1 - (1 - r\Delta t - \lambda \Delta t)] \approx V(\sigma_\epsilon) (r + \lambda)\Delta t$. However, since the arrival is a Poisson event, the standard derivation focuses on the flow. Dividing the entire equation by $\Delta t$ and taking the limit as $\Delta t \to 0$:
\begin{equation}
    rV(\sigma_\epsilon) = \mu_0 + \lambda \max\{-c + \Delta V, 0\}.
\end{equation}
This HJB equation states that the required return on the asset $rV$ equals the flow dividend $\mu_0$ plus the expected capital gain from the optional arrival of exploration opportunities.
\end{proof}

\section{Numerical Examples and Calibration}

\subsection{Baseline Calibration}

To quantify the magnitude of catalytic effects, we calibrate the model parameters to match stylized facts from search markets. We normalize the baseline status quo value to $\mu_0 = 10$ and set the challenger's expected quality to $\mu_1 = 5$. This gap represents a substantial 50\% quality deficit, ensuring that the challenger is ex-ante inferior and would be rejected under standard screening criteria. The exploration cost is fixed at $c = 1$, corresponding to 10\% of the status quo value, while the discount factor is set to $\delta = 0.9$. To analyze the impact of uncertainty, we vary the status quo volatility $\sigma_\epsilon$ over the range $[0, 20]$, while holding challenger heterogeneity constant at $\sigma_\theta = 1$. This parameterization isolates the effect of status quo uncertainty on exploration incentives.

\subsection{Simulation Validation}

We conduct Monte Carlo simulations (10,000 iterations per parameter set) to verify the theoretical predictions regarding exploration intensity and switching efficiency. Table H.1 summarizes the equilibrium outcomes across varying levels of uncertainty.

\begin{table}[htbp]
\centering
\caption{Simulation Results Confirming Theoretical Predictions}
\label{tab:simulation}
\begin{tabular}{lccc}
\hline
Uncertainty ($\sigma_\epsilon$) & Exploration Rate & Switching Rate & Catalytic Value ($V^{ISQ}$) \\
\hline
1.0 & 0.00 & 0.001 & 0.000 \\
5.0 & 0.24 & 0.161 & 0.292 \\
10.0 & 0.94 & 0.312 & 0.991 \\
15.0 & 1.00 & 0.371 & 1.733 \\
20.0 & 1.00 & 0.402 & 2.485 \\
\hline
\end{tabular}
\end{table}

The simulation results reveal the decoupling mechanism. As uncertainty $\sigma_{\epsilon}$ increases from 1.0 to 20.0, the exploration rate surges from zero to unity. While the switching rate also increases (due to the fat tail of the status quo distribution), it remains structurally bounded below 0.5. The widening gap between the near-certain exploration and the moderate switching rate confirms that high-uncertainty exploration is primarily validation-seeking. Furthermore, the ratio $V^{ISQ}/\sigma_\epsilon$ at $\sigma_\epsilon = 20$ converges toward the theoretical high-uncertainty asymptotic limit.

\subsection{Welfare Quantification}

We quantify the efficiency losses arising from the catalytic externality. Assuming the uncompensated cost of inspection is equal to the private exploration cost ($c=1$) and setting the bargaining power parameter such that the externality is fully borne by the challenger, we calculate the social deadweight loss as a percentage of potential surplus. Table H.2 reports the welfare metrics and the optimal Pigouvian tax rate required to restore efficiency.

\begin{table}[htbp]
\centering
\caption{Welfare Loss and Optimal Policy}
\label{tab:welfare}
\begin{tabular}{lcc}
\hline
Uncertainty Regime & Deadweight Loss (\%) & Optimal Tax Rate ($\tau^*$) \\
\hline
Low ($\sigma_\epsilon = 2$) & 3\% & 0.05 \\
Medium ($\sigma_\epsilon = 10$) & 18\% & 0.22 \\
High ($\sigma_\epsilon = 20$) & 35\% & 0.41 \\
\hline
\end{tabular}
\end{table}

The analysis reveals substantial welfare deterioration in high-uncertainty environments. In the high-uncertainty regime ($\sigma_\epsilon = 20$), the deadweight loss reaches 35\% of the total surplus, driven by the excessive volume of non-productive exploration. The optimal policy response is a progressive tax on exploration activity, rising from 5\% in low-uncertainty markets to 41\% in high-uncertainty markets. This suggests that friction-reducing policies (which effectively lower $c$) may be counterproductive in opaque markets unless accompanied by corrective taxation.

\subsection{Empirical Calibration Targets}

The calibrated model is capable of rationalizing quantitative patterns observed across diverse domains. In labor markets, the model matches the high volume of applications relative to hires, consistent with empirical ratios of 50-100 applications per position. In online dating, it replicates the extremely low conversion rates (1-2\% match rate) despite high interaction volumes. Similarly, in consumer search and innovation adoption, the model accounts for the persistence of "window shopping" behavior (5-10 products viewed per purchase) and low R\&D success rates (10-20\%). These empirical regularities are consistent with a data generating process where exploration is predominantly driven by the resolution of status quo uncertainty rather than the discovery of superior alternatives.

\end{appendix}

\bibliographystyle{ecta}
\bibliography{references}

\end{document}